\documentclass[format=acmlarge, printfolios=true, printccs=false, nonacm=true, review=false]{acmart}
\usepackage[T1]{fontenc}

\usepackage[utf8x]{inputenc}
\usepackage{array}

\usepackage{booktabs}
\newcolumntype{N}{>{\centering\arraybackslash}m{2.5in}}

\usepackage{graphicx}
\usepackage[colorinlistoftodos]{todonotes}
\usepackage{booktabs}
\usepackage{multicol}
\usepackage{hyperref}
\usepackage{tabularx}
\usepackage{subcaption}
\usepackage{multirow}


\newcommand{\FMR}{\ensuremath{{\tt FMR}}}
\newcommand{\FAR}{\ensuremath{{\tt FAR}}}
\newcommand{\FNMR}{\ensuremath{{\tt FNMR}}}
\newcommand{\FRR}{\ensuremath{{\tt FRR}}}
\newcommand{\FTA}{\ensuremath{{\tt FTA}}}
\newcommand{\EER}{\ensuremath{{\tt EER}}}
\newcommand{\FNIR}{\ensuremath{{\tt FNIR}}}
\newcommand{\FPIR}{\ensuremath{{\tt FPIR}}}


\usepackage{etoolbox}

\begin{document}
\title{Untargeted Near-collision Attacks on Biometrics: Real-world Bounds and Theoretical Limits}

\author{Axel DURBET}
\affiliation{%
    \institution{Université Clermont-Auvergne, CNRS,
        Mines de Saint-Étienne, Clermont-Auvergne-INP,
        LIMOS\country{France}}
}
\email{axel.durbet@uca.fr}

\author{Paul-Marie GROLLEMUND}
\affiliation{%
    \institution{Université Clermont-Auvergne, CNRS,
        Mines de Saint-Étienne, Clermont-Auvergne-INP,
        LMBP\country{France}}
}
\email{paul_marie.grollemund@uca.fr}

\author{Kevin THIRY-ATIGHEHCHI}
\affiliation{%
    \institution{Université Clermont-Auvergne, CNRS,
        Mines de Saint-Étienne, Clermont-Auvergne-INP,
        LIMOS\country{France}}
}
\email{kevin.atighehchi@uca.fr}

\begin{abstract}
    A biometric recognition system can operate in two distinct modes: \textit{identification} or \textit{verification}. In the first mode, the system recognizes an individual by searching the enrolled templates of all the users for a match. In the second mode, the system validates a user’s identity claim by comparing the fresh provided template with the enrolled template. The biometric transformation schemes usually produce binary templates that are better handled by cryptographic schemes, and the comparison is based on a distance that leaks information about the similarities between two biometric templates. Both the experimentally determined false match rate and false non-match rate through recognition threshold adjustment define the recognition accuracy, and hence the security of the system. To our knowledge, few works provide a formal treatment of security in case of minimal information leakage, \textit{i.e.}, the binary outcome of a comparison with a threshold. In this paper, we focus on untargeted attacks that can be carried out both \textit{online} and \textit{offline}, and in both \textit{identification} and \textit{verification} modes.

    On the first hand, we focus our analysis on the accuracy metrics of biometric systems. We provide the complexity of untargeted attacks using the False Match Rate (\FMR) and the False Positive Identification Rate (\FPIR) to address the security of these systems. Studying near-collisions with these metrics allows us to estimate the maximum number of users in a database, given a chosen $\FMR$, to preserve the security and the accuracy. These results are evaluated on systems from the literature.

    On the other hand, we rely on probabilistic modelling to assess the  theoretical security limits of biometric systems. The study of this metric space, and system parameters (template size, threshold and database size), gives us the complexity of untargeted attacks and the probability of a near-collision.
\end{abstract}

\keywords{Biometric authentication, Biometric identification, Biometric transformations, Cancellable biometrics, Near-collisions, Master templates}

\maketitle

\tableofcontents

\section{Introduction}

Biometric technologies provide an effective means of authentication or identification based on physical or behavioral characteristics.
Due to their convenience and speed, the use of such systems continues to grow, replace or complement the traditional password.
In a biometric recognition system, biometric
templates of users are stored in a database.
The first operating mode consists in determining the identity of an
individual
by comparing her fresh provided template with all the templates stored in the database.
The second one, the traditional authentication (verification) mode, corresponds
to the verification of the claimed identity by comparing the corresponding
enrolled template with the fresh template the individual provides.
As a consequence, service providers
need to manage biometric databases.
Biometric data are as prone to exhaustive search attacks as passwords, but unlike passwords, they cannot be efficiently revoked. Thus, biometric databases are prime targets for cyberattackers.
Biometric data are categorized as highly sensitive personal data covered by the GDPR.
Moreover, these data may disclose information like genetic information~\cite{penrose1965dermatoglyphic}
and diseases~\cite{imran2019comparative,ross2022deducing}.
The essential
security and performance criteria that must be fulfilled by biometric
recognition systems are identified in ISO/IEC 24745~\cite{ISO24745} and ISO/IEC
30136~\cite{ISO30136}: \textit{irreversibility}, \textit{unlinkability},
\textit{revocability} and \textit{performance preservation}.

Biometric templates are generated from biometric measurements (\textit{e.g.}, a
face or fingerprint image). These measurements undergo a sequence of
transformations,
an extraction of the features (\textit{e.g.}, using Gabor
filtering~\cite{ManMa96,JaPrHoPa2000}) followed eventually by a Scale-then-Round
process~\cite{AKK20} to
accommodate representations better handled by cryptographic schemes,
\textit{i.e.}, binary or integer vectors. These templates are then protected
either through
their mere encryption, or using a Biometric Template Protection (BTP) whose goal is to address the four aforementioned security criteria. The main schemes of BTP can be categorized into three approaches:
{\it biometric cryptosystems}~(BC), {\it cancellable biometrics} (CB), and {\it keyed biometrics} (KB). In BC, a cryptographic key $K$ is bound to a biometric input $x$ through a probabilistic algorithm that takes a randomizer $r$ as additional input. A {\it helper data} HD is derived from $x$ and is stored either in a remote database or on a end-user device. This helper data serves as a protected template and should be irreversible while allowing the recovery of $x$ in the presence of another biometric input $x^\prime \approx x$. Hence, given $(r, \textrm{HD}, x^\prime)$, the key $K$ can be reproduced and used to verify the authenticity of users. Examples of BC schemes include fuzzy commitments~\cite{FuzzyCommitment}, fuzzy vaults~\cite{FuzzyVault} and fuzzy extractors~\cite{DodisRS04,apon2023nonmalleable}. In CB, a biometric input undergoes a (preferably) irreversible transformation, which can be parametrized using (public) salts or user specified secrets, and the output of the transformation is stored on the server. Verification of users are then achieved by comparing the transformed fresh biometric data with the transformed enrolled biometric data. See~\cite{JIN20042245,LUMINI2007,pillai2010sectored,yang2010dynamic} for some examples of CB schemes. In KB, secure computation techniques are used to enable the
verification in the encrypted
domain while limiting the leakage of
information~\cite{jarrous2009secure,bringer2013shade,bringer2014gshade}. The
reader
is referred to the surveys~\cite{Survey-2015,Survey-2016,patel2015cancelable}
for more details on BTP schemes.

In any biometric recognition system, the False Match Rate ($\FMR$) is an empirical assessment of the rate of false acceptations.
Nevertheless, in the context of a high-dimensional template space, unless a large number of databases are available, an empirical study does not provide reliable evidence on the range of possible configurations.
Alternatively, this work aims at providing a theoretical
treatment of
the security of binary templates as regards the occurrence of near-collisions.
To defeat a biometric system, attacks can be divided into two categories:
\begin{enumerate}
    \item Attacks targeting a single user. This is the work of Pagnin
          \textit{et~al.}~\cite{PagninDAM14} who detailed strategies to find the ball
          containing
          one secret template lying in $\mathbb{Z}^n_2$, to then mount a \textit{hill-climbing}
          attack to determine what exactly this template is (\textit{center search attack}).
    \item Untargeted attacks, in which  the goal is to impersonate
          any
          user enrolled in the secret biometric database, by the generation of a close template. This is the subject of this
          work.
          A secret biometric database may correspond to a database the attacker does not
          have access to, and for which she needs to guess a template and submit it to the
          online service to verify its admissibility (\textit{online exhaustive search
              attacks}). It may also correspond to a leaked, yet protected, database the
          attacker has access to, but the cryptographic obfuscation mechanisms for the
          verification of a protected template do not leak any information other than its
          admissibility (\textit{offline exhaustive search attacks}).
\end{enumerate}

Attack strategies considered in this work depend on the resources given to the attacker. We make the distinction between a \textit{secret transformation} and a \textit{public
    transformation}: a secret transformation relies on a secret token,
either a stored key or a memorized password, while a public transformation is
independent of any secret parameter, with the attacker having full
knowledge of it.

The amount of information measured in the literature for biometric feature vectors (or feature sets) varies according to the modality ($46$ bits for face images~\cite{adler2009towards}, $82$ bits for minutiae-based fingerprint representations~\cite{ratha2001analysis}, and $249$ bits for IrisCodes~\cite{daugman2006probing,daugman2009iris}). The present work differs in that we consider the security of templates instead of (raw) feature vectors, and under the presence of active attackers, as in~\cite{PagninDAM14}. The biometric recognition system is also assumed to be well-designed, that is, the feasibility of our attacks does not require any weakness other than letting the attacker submit a large number of guesses, exactly as in the topic of \textit{password cracking}~\cite{florencio2007strong,wang2016targeted}. Taking as assumption that templates are uniformly distributed in $\mathbb{Z}_2^n$, we provide security bounds on the size of templates, similarly to those of cryptographic hashes. Providing an exact characterization of security is to the best of our knowledge an intractable problem of combinatorics and coding-theory, due to the hard problem of determining the size of the intersection of $3$ or more Hamming balls. Although biometric data are not uniformly distributed, our formal treatment is worth consideration for the following two reasons. If the enrolled templates result from a randomized transformation (like a projection-based transformation using a per-user token, \textit{i.e.}, a salt, a secret key or a password, \textit{e.g.}, using the BioHashing scheme), they can actually be considered uniformly distributed. The second case concerns deterministic transformations. A skewed distribution of templates would certainly give some advantage to adversaries in their attack strategies, resulting in lower attack running times. However, considering attacks on uniformly distributed templates at least enables the establishment of a pessimistic lower bound on the size of templates, as well as a pessimistic upper bound on the size of the biometric database.

\paragraph*{\textit{Contributions}}

In order to provide an insight of our contributions, a summary of the most important contributions can be found in Table~\ref{Contribtable}. The presented attacks are based on exhaustive search (\textit{i.e.}, brute force attacks) and require only the minimum leakage of information, namely a bit of information about the success of impersonation. Hence, they are possible regardless of the employed BTP scheme, protocol or biometric modality.
For more details, our results fall into two distinct but complementary categories:
\begin{itemize}
    \item \textbf{Accuracy metric-based analysis}: These results focus on existing systems using their evaluated accuracy. We use well-known biometric accuracy metrics such as the $\FMR$ and $\FPIR$ to compute complexity bounds for an untargeted attack and the probability of a near-collision. The attack description provides security bounds against an outsider attacker. Computing the probability of a near-collision provides a bound on the database size to ensure a given level of security. These results are utilized to highlight the influence of the database size on such computations. These results are presented alongside an analysis of both popular literature and industry schemes, including schemes of interest for the NIST~\cite{quinn2018irex}.
    \item \textbf{Metric space-based analysis}: 
          We assume a biometric system that makes the best use of the underlying metric space in order to provide theoretical bounds on the complexity of exhaustive search attacks. We introduce the notion of weak near-collision and strong near-collision, which enable us to provide a theoretical analysis on the security strength of biometric transformation schemes. The bounds on the probability of a near-collision highlight the theoretical limits on the accuracy of biometric system. We use probabilistic modelling to present two matching attack scenarios with the associated security bounds and discuss the security of a template database. The first one, called the `Outsider Scenario', captures the case where an individual unregistered in a service attempts to impersonate a non-specific user of this service. Specifically, we consider the possibility of an attacker sequentially adapting her strategy. The second scenario, termed the `Insider Scenario', encapsulates cases where some or all users of the service are potential adversaries attempting to impersonate one another. The bounds on the complexity of the untargeted attacks provide the maximum achievable security.
          Finally, we make recommendations concerning the security parameters  during the fine-tuning of a recognition system.
\end{itemize}


\paragraph{\textit{Scope of our results}}
Our results on the complexity of untargeted attacks and the probability of near-collision occurrences apply to many BTP schemes. To the best of our knowledge, many BTP schemes of the three categories (CB, BC and KB) are vulnerable to \textit{offline} attacks regardless of the considered modality.
For instance, among the BC schemes, we can identify fuzzy commitments~\cite{FuzzyCommitment}, fuzzy vaults~\cite{FuzzyVault} and fuzzy extractors~\cite{DodisRS04,CanettiFPRS16,apon2023nonmalleable}, to name but a few. Concerning the attacks, they can be performed either \textit{online} or \textit{offline}, and the derived complexity results apply in both cases.
Offline attacks are made possible when the protected biometric database is leaked, as the attacker exploits some (even minimal) information that allows her to test a guess. For the near collision, it highlight a theoretical limit on the performance of biometric recognition algorithms.

\paragraph*{\textit{Outline}}

Section~\ref{prel} introduces notations, background material and definitions related to near-collisions and biometric transformation schemes.
Section~\ref{section:FMR} provides security bounds for untargeted attacks and near-collisions, using the $\FMR$ metric along with a discussion on the security of relevant examples from the literature.
Section~\ref{utg_attacks} is devoted to a refinement of template security considering untargeted attacks in the metric space setup. It is described how near-collisions can be used to determine security bounds depending on the template space dimension, the decision threshold and the number of registered users in the service.
In Section~\ref{template_sec}, numerical evaluations are provided about the security of biometric databases. Especially, some recommendations are given to parameterize correctly biometric transformation schemes.
Section~\ref{conclusion} concludes the paper.

\section{Preliminaries}
\label{prel}
\begin{table*}[h]
    \begin{tabular}
        {N N N N N N}
        \toprule
        \multicolumn{6}{c}{\textbf{Accuracy metric-based complexities}}                                                                                                                                                                                          \\ \cmidrule{1-6} 
        \multicolumn{3}{N}{Outsider scenario}                                       & \multicolumn{3}{N}{Database limit size}                                                                                                                                    \\
        \multicolumn{3}{N}{Theorem~\ref{db_size_outsider_independent_events_FMR}}   & \multicolumn{3}{N}{Theorem~\ref{FMR_and_NC_gives_N}}                                                                                                                       \\\cmidrule{1-6} 
        \multicolumn{6}{c}{\textbf{Metric space-based  complexities}}                                                                                                                                                                                            \\ \cmidrule{1-6} 

        \multicolumn{3}{N}{Outsider scenario}                                       & \multicolumn{3}{N}{Insider scenario}                                                                                                                                       \\ 

        \multicolumn{3}{N}{Theorem~\ref{db_size_outsider_independent_events}}       & \multicolumn{3}{N}{Theorem~\ref{thm:insider_attack:m_bounds}}                                                                                                              \\

        \multicolumn{3}{N}{Corollary~\ref{db_size_outsider}}                        & \multicolumn{3}{N}{Corollary~\ref{thm:insider_attack:m_bounds:l_attaquant}}                                                                                                \\ \midrule

        \multicolumn{6}{c}{\textbf{Metric space-based naive and adaptive attackers equivalence}}                                                                                                                                                                 \\ \cmidrule{1-6} 
        \multicolumn{2}{c}{Median number of trials}                                 & \multicolumn{2}{N}{Cumulative distribution function}                                   & \multicolumn{2}{c}{Probability mass function}                                     \\
        \multicolumn{2}{c}{ Theorem~\ref{cor:naive_eq_k_adap:median_and_expect}}    & \multicolumn{2}{c}{Proposition~\ref{prop:outsider_attack:negligible_difference_range}} & \multicolumn{2}{c}{ Proposition~\ref{prop:outsider_attack:negligible_difference}} \\  \midrule

        \multicolumn{6}{c}{\textbf{Metric space-based exact probability bounds}}                                                                                                                                                                                 \\ \cmidrule{1-6} 
        \multicolumn{3}{N}{Weak near-collision}                                     & \multicolumn{3}{N}{Master template occurrence}                                                                                                                             \\
        \multicolumn{3}{N}{Proposition~\ref{prop:bound:nearcol}}                    & \multicolumn{3}{N}{Theorem~\ref{thm:proba:NMT}}                                                                                                                            \\
        \multicolumn{3}{N}{Corollary~\ref{thm:insider_attack:m_bounds:l_attaquant}} & \multicolumn{3}{N}{Corollary~\ref{cor:proba:kMT}}
        \\\bottomrule
    \end{tabular}
    \caption{Summary of the main contributions}
    \label{Contribtable}
\end{table*}

\subsection{Notations}

Let $q$ be an integer in $\mathbb{N}_{>2}$. The set $\mathbb{Z}_q^n$ corresponds to the $n$-dimensional vector space over $\mathbb{Z}_q = \lbrace 0,\dots,q-1 \rbrace$. In the following, the binary case ($q=2$) is always explicitly written as $\mathbb{Z}_2$, so that $\mathbb{Z}_2^n$ denotes the set of binary vectors of length $n$.
Given the Hamming distance $d_\mathcal{H}~: \mathbb{Z}_2^n \times \mathbb{Z}_2^n \rightarrow \mathbb{N}$, a vector $t \in \mathbb{Z}_2^n$ and a positive number $\varepsilon \in \mathbb{N}$, the Hamming ball of center~$t$  and radius~$\varepsilon$ is defined by $B_{\varepsilon}(t)=\{y \in \mathbb{Z}_2^n~:~ d_{\mathcal{H}}(t,y) \leq \varepsilon \}$. We denote by $\mathcal{B}(\mathbb{Z}_2^n,d_\mathcal{H})$ a biometric database whose enrolled templates lie in the template metric space $(\mathbb{Z}_2^n,d_\mathcal{H})$, equipped with the Hamming distance $d_\mathcal{H}(.,.)$ and the threshold $\varepsilon$ for the comparison of two templates. We also define $V_\varepsilon$ as the measure of an $\varepsilon$-ball in $\mathbb{Z}^n_2$, \textit{i.e.} $|B_{\varepsilon}|/2^n$ where $|B_{\varepsilon}|$ is the cardinal of an $\varepsilon$-ball.

\subsection{Biometric Template Protection Schemes and Operating Recognition Modes}

The templates in $\mathcal{B}$ result from a chain of treatments, an extraction of the features (\textit{e.g.}, using Gabor filtering~\cite{ManMa96,JaPrHoPa2000}) followed eventually by a Scale-then-Round process~\cite{AKK20} to accommodate better handled representations, usually binary vectors. These feature vectors can also be transformed into a binary form using a tokenized projection-based protection scheme, where tokens may be secret or public. Thus, a significant part of biometric systems are based on binary templates which motivates the present work. Depending on the confidence level in the security properties (irreversibility and unlinkability) offered by this composition of transformations, these binary templates may be further protected using a cryptographic scheme, or using another BTP scheme that rely on a standard hardness assumption.

For the sake of simplicity, argument and illustration in the following sections, only the definition of a CB scheme is recalled in detail.

\begin{definition}[Cancellable Biometric transformation scheme]
    Let $\mathcal{K}$ be the token (seed) space, representing the set of tokens to be assigned to users. A CB scheme is a pair of deterministic polynomial time algorithms $\Xi:=(\mathcal{T}, \mathcal{V})$.
    The first element is $\mathcal{T}$, the transformation of the system, that takes a feature vector $x$, and a token $s$ as input, and returns a biometric template $t=\mathcal{T}(s,x) \in \mathbb{Z}_2^n$.
    The second element is $\mathcal{V}$, the verifier of the system, that takes two biometric templates $t$ = $\mathcal{T}(s,x)$, ${t^\prime}=\mathcal{T}({s^\prime},{x^\prime})$, and a threshold $\varepsilon$ as input; and returns
    $True$ if $d_{\mathcal{H}}(t, {t^\prime}) \leq \varepsilon$, and
    returns $False$ if $d_{\mathcal{H}}(t, {t^\prime}) > \varepsilon$.
\end{definition}

The operations of a CB scheme do not exactly capture the functionalities of other BTP schemes, as KB schemes or BC schemes. In KB schemes, the underlying algorithms require cryptographic keys as additional inputs. As regards some KB schemes based on homomorphic encryption (or related primitives like function-hiding inner-product functional encryption (fh-IPFE~\cite{barbosa2019efficient})), $\mathcal{T}$ requires a public key to encrypt a template, and $\mathcal{V}$ requires a private key to evaluate the comparison function.
\medskip

After the initial enrollment of a user consisting of the storage
of a reference template in the database $\mathcal{B}$, the biometric recognition system
may operate either in the \textit{verification/authentication mode}
or in the \textit{identification mode}:
\begin{itemize}
    \item In the authentication mode, the person’s
          identity is validated by comparing a freshly acquired template with her
          own reference template stored in $\mathcal{B}$.
    \item In the identification mode, an individual is recognized by searching
          the templates of all the users in $\mathcal{B}$ for a match.
          Therefore, the system conducts a
          one-to-many comparison to find an individual’s identity, or fails
          if the subject is not enrolled in $\mathcal{B}$,
          without the subject having to claim her identity.
\end{itemize}

We consider both the operating modes and a two party setting (\textit{i.e.}, one client $C$ and one authentication or
identification server~$S$), but our analyses also apply when more than two
parties are involved in the biometric recognition process~\cite{BringerCIPTZ07,barbosa2008secure,stoianov2009security}.
When the biometric transformation scheme does not provide formal guarantees on the privacy of users,
the biometric templates should not be sent in the clear over the network. This implies that often the
matching operation is performed in the encrypted domain. Biometric recognition protocols employ
secure multi-party computation techniques to preserve the privacy of the users, for instance with the use of
homomorphic encryption~\cite{paillier1999public}, garbled circuits~\cite{yao1986generate} or oblivious transfer~\cite{rabin2005exchange}.
No matter the privacy-preserving techniques employed in the protocols, they do not mitigate our attacks since they
rely solely on the binary outcome of the matching, \textit{i.e.}, the acceptation or rejection result of the server.

\subsection{Security Metrics in Biometrics}
\label{Metrics_definition}
To assess the security of a biometric system, different metrics are used based on the operation mode (recognition or identification).
In the context of authentication systems (specifically, in a 1:1 system), the predominant metric utilized is the False Match Rate ($\FMR$). This rate serves as an empirical estimation denoting the likelihood of a biometric sample being incorrectly recognized by the matcher.
In other words, this is an estimation of the probability that the matcher incorrectly decides that a newly collected template matches the stored reference.
The False Non-Match Rate ($\FNMR$) 
gives an estimation of the probability that a genuine sample is incorrectly rejected by the matcher.

The confusion between the False Match Rate ($\FMR$) and the False Acceptance Rate ($\FAR$) often arises due to their subtle distinction. The $\FAR$ operates at the system-wide scale, encompassing more than the matching component. Specifically, it represents the probability of a biometric sample being falsely recognized by the entire system. The $\FAR$ considers the collective performance of all security layers within a biometric system, such as liveness detection, thereby providing a comprehensive assessment of system integrity.
Let $\FTA$ denote the Failure To Acquire rate \textit{i.e.}, the probability that the system fails to produce a sample of sufficient quality. We have the following equality
$$\FAR = \FMR \times (1-\FTA).$$
The misconception between the False Non-Match Rate ($\FNMR$) and the False Reject Rate ($\FRR$) persists for analogous reasons as discussed above.
Those two notions can be linked by
$$\FRR = \FTA + \FNMR\times(1-\FTA).$$

The concepts mentioned earlier rely on a threshold selected to minimize either the False Non-Match Rate ($\FNMR$) or the False Match Rate ($\FMR$). Typically, this threshold is determined at the Equal Error Rate ($\EER$) where the $\FNMR$ equals the $\FMR$.

In the context of identification (specifically, in a 1:$N$ system), a widely used metric is the False Positive Identification Rate ($\FPIR$). This metric quantifies the errors rate when the system misidentifies an impostor as a user. The $\FPIR$ shares similarities with the False Match Rate ($\FMR$) as they both evaluate recognition errors within biometric systems.
Similarly to the False Non-Match Rate ($\FNMR$) in verification (1:1 system), the False Non-Identification Rate ($\FNIR$) assesses the likelihood of genuine users being incorrectly rejected or failing to be identified by the system in the identification mode (1:$N$). The $\FNIR$ is often correlated with the $\FRR$ as they both depict the system's accuracy.


\subsection{Adversarial Model}
\label{Attacker_setup}
As introduced by Simoens~\textit{et~al.}~\cite{simoens2012framework}, attacks on biometric authentication systems can be divided into $4$ categories, of which the two strongest are our main focus:
\begin{enumerate}
    \item \textit{Biometric reference recovery:} the attacker tries to recover the stored enrolled template.
    \item \textit{Biometric sample recovery:} the attacker tries to recover (or generate) a fresh biometric template acceptable by the biometric authentication system.
\end{enumerate}

\paragraph{\textit{Resources of the Attacker}} These categories can be extended to identification systems.
The considered attacker exploits the fourth attack
point in the model of a generic biometrics-based system, as presented by Ratha~\textit{et~al.}~\cite{Rathasheme}.
More specifically, the attacker can provide any information to the matcher but does not have access to the database (in cleartext) under any circumstances.
She also has access to the matcher's responses to her queries.
If some extra information is required, such as login for the authentication mode, it is assumed that
the attacker has knowledge of it.

\paragraph{\textit{Offline and Online Attacks}}
In an online attack, the online service is used by the attacker to test her queries.
Offline attacks can be exploited by attackers after the leakage of elements from the remote database, even if they are leaked in an altered or transformed form that preserves their secrecy~\cite{tams2016unlinkable}. In such scenario, the attacker can evaluate a function that is mostly zero if the function is well designed (\textit{i.e.} obfuscated) for privacy purposes. We consider in this paper functions that leak the minimum leakage of information, for instance the binary outcome of a verification function, based on a distance or a checksum/hash function. By way of examples, in the case of a BC scheme, a checksum is often used to verify if the correct key is unlocked  or reproduced (\textit{e.g.}, fuzzy extractors~\cite{DodisRS04,CanettiFPRS16,apon2023nonmalleable} and fuzzy commitments~\cite{FuzzyCommitment}).

\paragraph{\textit{From Untargeted Sample Recovery to Untargeted Reference Recovery}} Sample recovery and reference recovery attacks in the authentication mode have been introduced by Pagnin \textit{et al.}~\cite{PagninDAM14}
They rely on a single bit of information leakage and are applicable in the targeted scenario.
The present work fills a gap by putting emphasis on \textit{biometric sample recovery}, still using a single bit of leakage, but applicable in the untargeted scenario and in both the operating modes.
Our \textit{untargeted sample recovery attacks} can be translated into \textit{untargeted reference recovery attacks} almost for free:
\begin{itemize}
    \item Concerning a biometric system built on top of CB scheme or a KB scheme, we merely apply the center search attack of~\cite{PagninDAM14}. Once a sample template of $\mathbb{Z}_2^n$ of any user is found, the reference template is recovered in at most $n+2\varepsilon$ calls to the verifier function. (It is worth noting that if the attacker has access to more than the binary outcome of the verifier function, \textit{e.g.}, the result of the distance function below the threshold, then a hill climbing attack can be launched, hence reducing the number of calls to $2\varepsilon$.)
    \item In the case of a biometric system built on top of a BC scheme (\textit{e.g.}, a fuzzy commitment or a fuzzy vault) and under the leakage of the protected database, the reference template is derived from the sample template by the mere application of the recovery function.
\end{itemize}

\subsection{Relations with Near-collisions}

The database $\mathcal{B}$ contains $N$ templates which are
vectors distributed in $\mathbb{Z}_2^n$. This distribution is considered uniform
if templates result from a salted or a secret-based transformation.
If the chain of treatments applied to the feature vectors is deterministic, the
templates can actually be considered as non-uniformly distributed in
$\mathbb{Z}_2^n$.
Some terminologies and definitions are introduced for the
sake of our analyses.

\begin{definition}[Strong $\varepsilon$-near-collision for $t$]
    \label{def:strong:Ncol}
    For a secret template $t$ of $\mathcal{B}$,
    a strong near-collision occurs if there is another template $a \in \mathcal{B}$ such that $d_\mathcal{H}(t,a) \leq \varepsilon$.
\end{definition}
Notice that for a given secret template $t$ of a database $\mathcal{B}$, the probability of $C^N(t)$ the event "At least one of the $N-1$ other users of $\mathcal{B}$ matches the given template $t$" is given by
\begin{equation*}
    \mathbb{P}\left(C^{N} (t)\right) = 1-\left(1-V_\varepsilon\right)^{N-1}.
\end{equation*}
In other words, in the case of this definition the occurrence of a strong $\varepsilon$-near-collision increases geometrically.
In the following,  different near-collision events (Definitions~\ref{NCol} and \ref{MNCol}) are considered for which it is not possible to smoothly derive probability and complexity results.
Analyses providing our results are detailed in Section~\ref{utg_attacks} and Section~\ref{mlty_near_collisions}.

\begin{definition}[Weak $\varepsilon$-near-collision]
    \label{NCol}
    For $\mathcal{B}$ a biometric database, a weak $\varepsilon$-near-collision is occurring in $\mathcal{B}$ if there exists two templates
    $a,b \in \mathcal{B}$ such that $d_\mathcal{H}(a,b) \leq \varepsilon$.
\end{definition}

In other words, a weak near-collision occurs if there exists a pair of
templates $a$ and $b$ in the secret database such that $a$ (resp. $b$) is inside
the ball of center $b$ (resp. $a$).
Note that if $a$ and $b$ weak near-collides, then we have two strong near-collisions, for $a$, and for $b$.
This definition can be generalized to the
case of multi-near-collisions.

\begin{definition}[Weak $(m,\varepsilon)$-near-collision]
    \label{MNCol}
    For $\mathcal{B}$ a biometric database, an \textit{$(m,\varepsilon)$-near-collision} with $m \geq 2$ is occurring if there exists an $m$-tuple $a_1,\dots,a_m \in \mathcal{B}$
    such that for all
    $i$, $j \in \lbrace 1,\dots,m \rbrace,$ $d_\mathcal{H}(a_i,a_j) \leq \varepsilon$.
\end{definition}

For privacy purposes, a security criteria stated in ISO/IEC
24745~\cite{ISO24745} and ISO/IEC 30136~\cite{ISO30136} is the irreversibility
of templates. The notions introduced above can be used to characterize the
reversibility of $\Xi.\mathcal{T}$, and this motivates the following
definitions.

\begin{definition}[Strong $\varepsilon$-nearby-template preimage]
    For a template $t$ of $\mathbb{Z}^n_2$,
    a strong nearby-template preimage is a pair of token $s$ and feature vector $x$ such that $\mathcal{T}(s,x)$
    is a strong $\varepsilon$-near-collision for $t$.
\end{definition}

\begin{definition}[Weak $\varepsilon$-nearby-template preimages]
    There exists two pairs of a feature vector and a token $(s_1,x_1)$ and $(s_2,x_2)$ with $(s_1,x_1) \not = (s_2,x_2)$
    such that $t_1 = \mathcal{T}(s_1,x_1)$ and $t_2 = \mathcal{T}(s_2,x_2)$ correspond to a weak $\varepsilon$-near-collision.
\end{definition}

\section{Bounds with Respect to the $\FMR$, $\FPIR$ and Near-Collisions}
\label{section:FMR}

This section deals with the \textit{public transformation} case, where the attacker has access to the transformation (\textit{i.e.}, the transformation does not rely on a second factor) in the biometric sample recovery scenario. The case where the attacker tries to generate a template which is accepted by the matcher is studied first. Then, the attacker tries to be accepted by the system. Finally, we show how these attack complexities apply to existing systems.

\subsection{$\FMR$-based Attack Complexities}
Suppose that an attacker as defined in Section~\ref{Attacker_setup} has access to the matcher. The attacker can then produce any template and call the matcher to compare it to all the templates in the database. Each of the $N$ users $i$ in the database has its own $\FMR$ denoted $\FMR_i$ (see Section~\ref{Metrics_definition}). Then, the probability that the attacker matches the template of the user $i$ is $\FMR_i$. It follows that the probability that she does not matches any template is $\prod_{i=1}^{N} (1-\FMR_i)$. Then the probability that she matches at least one template is $1-\prod_{i=1}^{N} (1-\FMR_i)$. The following result can be stated.
\begin{lemma}\label{db_size_outsider_independent_events_FMR}
    Let $N$, $n$ and $(\FMR_i)_{i=1}^N$ be fixed parameters as above. Then, the median number $m_{out}$ of trials for an outsider attacker to successfully impersonate a user is
    $$\Omega\left( 2^{-\log_2 \left(\sum_{i=1}^{N}\left(\FMR_i(1+\FMR_i)\right)\right)}\right)
        \text{ and }
        O\left( 2^{-\log_2 \left(\sum_{i=1}^{N}\FMR_i\right)}\right)$$
\end{lemma}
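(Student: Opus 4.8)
The plan is to recognise the attack as a sequence of i.i.d.\ Bernoulli trials and to control the median of the resulting geometric distribution by two-sided estimates on $-\ln(1-\FMR_i)$.

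First I would reduce the claim to a statement about a geometric median. As already observed just above the statement, a single query of the attacker---a freshly generated template submitted to the matcher, which compares it against all $N$ enrolled templates---succeeds with probability $p:=1-\prod_{i=1}^{N}(1-\FMR_i)$, the $N$ per-user match events being treated as independent. Since the attacker draws an independent fresh template at each attempt, the number $T$ of trials up to and including the first success satisfies $\mathbb{P}(T>k)=(1-p)^{k}$ for every integer $k\ge 0$, i.e.\ $T$ is geometric with parameter $p$. Hence the median $m_{out}$, the least integer $m$ with $\mathbb{P}(T\le m)\ge\tfrac12$, is the least integer $m$ with $(1-p)^{m}\le\tfrac12$, so
\[
  m_{out}=\Bigl\lceil\frac{\ln 2}{-\ln(1-p)}\Bigr\rceil,
  \qquad\text{hence}\qquad
  \frac{\ln 2}{-\ln(1-p)}\ \le\ m_{out}\ <\ \frac{\ln 2}{-\ln(1-p)}+1 .
\]
Since $-\ln(1-p)=\sum_{i=1}^{N}\bigl(-\ln(1-\FMR_i)\bigr)$, it remains to bound this sum from above and from below.

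For the upper bound on $m_{out}$ I would invoke $-\ln(1-x)\ge x$ on $[0,1)$, which gives $-\ln(1-p)\ge\sum_{i}\FMR_i$ and therefore $m_{out}<\ln 2\,(\sum_i\FMR_i)^{-1}+1=O\!\bigl(2^{-\log_2(\sum_i\FMR_i)}\bigr)$, using the identity $2^{-\log_2 x}=1/x$. For the lower bound I would use, for $x\le\tfrac12$ (amply satisfied in practice, where each $\FMR_i$ lies orders of magnitude below $1$), the inequality $-\ln(1-x)=\sum_{k\ge1}x^{k}/k\le x(1+x)$; this holds because $\sum_{k\ge3}x^{k}/k\le\tfrac13\cdot\frac{x^{3}}{1-x}\le\tfrac23 x^{3}\le\tfrac12 x^{2}$ when $x\le\tfrac12$, so that $\sum_{k\ge2}x^{k}/k\le\tfrac12 x^{2}+\tfrac12 x^{2}=x^{2}$. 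Summing over $i$ yields $-\ln(1-p)\le\sum_{i}\FMR_i(1+\FMR_i)$, and hence $m_{out}\ge\ln 2\,\bigl(\sum_i\FMR_i(1+\FMR_i)\bigr)^{-1}=\Omega\!\bigl(2^{-\log_2(\sum_i\FMR_i(1+\FMR_i))}\bigr)$, which is the claimed lower bound.

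The only delicate point is the upper estimate $-\ln(1-x)\le x(1+x)$, which fails once $x$ approaches $1$ (the two sides cross near $x\approx0.68$), so the mild hypothesis $\FMR_i\le\tfrac12$---or any fixed constant below that crossover---must be made explicit; everything else (identifying the geometric law, the ceiling sandwich, the standard $\ln$ inequalities) is routine. I would close by noting that the two bounds are mutually consistent: $\sum_i\FMR_i\le\sum_i\FMR_i(1+\FMR_i)$ forces the $\Omega$-term never to exceed the $O$-term, and both collapse to $\Theta\!\bigl(1/\sum_i\FMR_i\bigr)$ in the regime of small $\FMR_i$.
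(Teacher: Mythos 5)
Your proof is correct and follows essentially the same route as the paper: model the attack as a geometric random variable with success probability $p=1-\prod_{i=1}^{N}(1-\FMR_i)$, express the median as $\ln 2/(-\ln(1-p))$, and sandwich $-\ln(1-p)$ between $\sum_i\FMR_i$ and $\sum_i\FMR_i(1+\FMR_i)$ via the standard inequalities $x\le-\ln(1-x)\le x(1+x)$ on $[0,\tfrac12]$. Your version is slightly more careful than the paper's (you prove the upper log-inequality from the series, handle the ceiling explicitly, and flag the implicit hypothesis $\FMR_i\le\tfrac12$, which the paper also uses but does not state in the lemma), but these are refinements of the same argument rather than a different approach.
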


\begin{proof}
    Recall that the median of a geometric law is $m_{out} \approx \frac{-1}{\log_2(1-p)}$.
    The probability of success with a single trial is given by
    $$p=1-\prod_{i=1}^{N} (1-\FMR_i)$$ as explained above.
    Since
    \begin{equation*}
        -x-x^2 \leq \log(1-x) \leq -x \text{, for } 0 \leq x \leq \frac{1}{2},
    \end{equation*}
    and,
    $$\log_2(1-p) = \sum_{i=1}^{N}\log_2 (1-\FMR_i),$$
    then,
    \begin{eqnarray*}
        \frac{-\sum_{i=1}^{N} \FMR_i(1+\FMR_i)}{\ln 2} \leq & \log_2(1-p) & \leq \frac{-\sum_{i=1}^{N} \FMR_i}{\ln 2}\\
        \ln (2) \left(\sum\limits_{i=1}^{N} \FMR_i(1+\FMR_i)\right)^{-1} \leq & m_{out} & \leq \ln (2) \left(\sum\limits_{i=1}^{N} \FMR_i\right)^{-1}.\\
    \end{eqnarray*}
    Then, we have
    \begin{eqnarray*}
        2^{-\log_2 \left(\sum_{i=1}^{N}\left(\FMR_i(1+\FMR_i)\right)\right)+\log_2(\ln2)}
        & \leq & m_{out} \\
        m_{out} & \leq &
        2^{-\log_2 \left(\sum_{i=1}^{N}\FMR_i\right)+\log_2(\ln 2)}\\
    \end{eqnarray*}
    and the number of rounds follows.
\end{proof}



Using the result of Lemma~\ref{db_size_outsider_independent_events_FMR}, and assuming that the $\FMR$ is the same for every user, the following theorem can be stated.
\begin{theorem}
    Let $N$, $n$ and $\FMR$ be fixed parameters as above. Then, the median number $m_{out}$ of trials for an outsider attacker to successfully impersonate a user is
    $$\Omega\left( 2^{- \log_2{(\FMR)} -\log_2{(1+\FMR)} -\log_2{(N)}}\right)
        \text{ and }
        O\left( 2^{-\log_2 (\FMR) -\log_2 (N)}\right)$$
\end{theorem}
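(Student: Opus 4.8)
The plan is to specialize Lemma~\ref{db_size_outsider_independent_events_FMR} to the homogeneous case where $\FMR_i = \FMR$ for every user $i \in \{1,\dots,N\}$. Under this assumption the two sums appearing in the lemma collapse to closed forms: $\sum_{i=1}^N \FMR_i = N\,\FMR$ and $\sum_{i=1}^N \FMR_i(1+\FMR_i) = N\,\FMR(1+\FMR)$. So the first step is simply to substitute these into the $\Omega(\cdot)$ and $O(\cdot)$ expressions already established.

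Next I would push the logarithms through the products. For the lower (i.e. $\Omega$) bound, $-\log_2\!\left(N\,\FMR(1+\FMR)\right) = -\log_2(N) - \log_2(\FMR) - \log_2(1+\FMR)$, which after exponentiation base $2$ gives exactly the claimed $\Omega$ term. For the upper bound, $-\log_2(N\,\FMR) = -\log_2(N) - \log_2(\FMR)$, matching the claimed $O$ term. Since the additive $\log_2(\ln 2)$ constant that appeared inside the exponent in the proof of the lemma is absorbed by the asymptotic notation (it contributes a bounded multiplicative factor $\ln 2$ to $m_{out}$), nothing further is needed to reconcile the constants; I would note this explicitly so the reader does not wonder where the $\ln 2$ went.

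Concretely, the proof is three lines: invoke Lemma~\ref{db_size_outsider_independent_events_FMR}, set all $\FMR_i$ equal, and read off the resulting asymptotic expressions. One small point worth stating carefully: the asymptotics here should be understood as $N$ grows (and/or $\FMR$ shrinks), and the hypotheses of the lemma implicitly require $\FMR \le \tfrac12$ so that the bound $-x - x^2 \le \log(1-x) \le -x$ applies term by term; I would restate that this is the regime of interest, since any usable biometric system has $\FMR$ far below $\tfrac12$.

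I do not anticipate a genuine obstacle; this is a direct corollary. The only thing to be careful about is bookkeeping of signs and of which factor ($1+\FMR$ versus none) attaches to which bound, and making sure the $\log_2(\ln 2)$ additive term is legitimately dropped under $\Omega/O$ — which it is, being a constant independent of $N$ and $\FMR$. If one wanted to be pedantic, one could observe that $\log_2(1+\FMR) = O(\FMR)$, so the gap between the $\Omega$ and $O$ bounds is only this lower-order term plus the constant factor; but that remark is optional and not required by the statement.
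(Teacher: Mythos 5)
Your proposal is correct and follows exactly the same route as the paper, which proves this theorem in one line by invoking Lemma~\ref{db_size_outsider_independent_events_FMR} with all $\FMR_i$ equal; you merely spell out the substitution $\sum_i \FMR_i = N\,\FMR$ and $\sum_i \FMR_i(1+\FMR_i) = N\,\FMR(1+\FMR)$ and the splitting of the logarithms. Your remarks about absorbing the $\log_2(\ln 2)$ constant and the implicit requirement $\FMR \le \tfrac12$ are accurate and, if anything, more careful than the paper's own treatment.
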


\begin{proof}
    Using Lemma~\ref{db_size_outsider_independent_events_FMR} and the fact that for all $i,j$, $\FMR_i = \FMR_j$, the result follows.
\end{proof}

Another result can be derived using the $\FTA$ to obtain the number of rounds by considering the $\FAR$.
\begin{proposition}
    Let $N$, $n$, $(\FMR_i)_{i=1}^N$ and $(\FTA)_{i=1}^N$ be fixed parameters as above. Then, the median number $m_{out}$ of trials for an outsider attacker to successfully impersonate a user is
    $$\Omega\left( 2^{-\log_2 \left(\sum_{i=1}^{N}\left(\frac{\FAR_i}{1-\FTA_i}\left(1+\frac{\FAR_i}{1-\FTA_i}\right)\right)\right)}\right)
        \text{ and }
        O\left( 2^{-\log_2 \left(\sum_{i=1}^{N}\frac{\FAR_i}{1-\FTA_i}\right)}\right)$$
\end{proposition}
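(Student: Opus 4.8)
The plan is to reduce the Proposition to the already-proven Lemma~\ref{db_size_outsider_independent_events_FMR} by a direct change of variables. The key observation is that the relation $\FAR = \FMR \times (1-\FTA)$ stated in Section~\ref{Metrics_definition}, applied per user, gives $\FMR_i = \FAR_i / (1-\FTA_i)$ for each $i \in \{1,\dots,N\}$. So the quantities appearing in the desired bounds are literally the expressions of Lemma~\ref{db_size_outsider_independent_events_FMR} with $\FMR_i$ rewritten in terms of $\FAR_i$ and $\FTA_i$.

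First I would recall the statement of Lemma~\ref{db_size_outsider_independent_events_FMR}: the median number $m_{out}$ of trials satisfies
$$m_{out} = \Omega\!\left( 2^{-\log_2 \left(\sum_{i=1}^{N}\FMR_i(1+\FMR_i)\right)}\right) \text{ and } m_{out} = O\!\left( 2^{-\log_2 \left(\sum_{i=1}^{N}\FMR_i\right)}\right).$$
Then I would substitute $\FMR_i = \FAR_i/(1-\FTA_i)$ into both bounds. The lower bound becomes $\Omega\!\left(2^{-\log_2\left(\sum_{i=1}^N \frac{\FAR_i}{1-\FTA_i}\left(1+\frac{\FAR_i}{1-\FTA_i}\right)\right)}\right)$ and the upper bound becomes $O\!\left(2^{-\log_2\left(\sum_{i=1}^N \frac{\FAR_i}{1-\FTA_i}\right)}\right)$, which is exactly the claimed statement. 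No new estimate is needed; the inequality $-x-x^2 \le \log(1-x) \le -x$ for $0 \le x \le 1/2$ was already used inside the Lemma, and the substitution does not disturb its hypotheses as long as each $\FMR_i$ still lies in $[0,1/2]$, which holds for realistic parameters (and is implicitly assumed wherever the Lemma is invoked).

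The only point requiring a word of care — and the closest thing to an obstacle — is the domain condition: the Lemma's proof silently requires $0 \le \FMR_i \le 1/2$ so that the logarithm bound applies, so one should note that $\FAR_i/(1-\FTA_i) \le 1/2$ for the systems under consideration (equivalently $\FAR_i \le (1-\FTA_i)/2$), and that $\FTA_i < 1$ so the division is well defined. Given that, the proof is a one-line appeal: ``Apply Lemma~\ref{db_size_outsider_independent_events_FMR} with the substitution $\FMR_i = \FAR_i/(1-\FTA_i)$, valid since $\FAR_i = \FMR_i(1-\FTA_i)$.'' I would write it in roughly that form, perhaps spelling out the two substituted bounds explicitly so the reader sees the match with the Proposition's display.
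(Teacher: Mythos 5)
Your proposal is correct and matches the paper's own proof, which is exactly the one-line appeal to Lemma~\ref{db_size_outsider_independent_events_FMR} combined with the identity $\FAR = \FMR \times (1-\FTA)$. Your extra remark about the domain condition $\FMR_i \leq 1/2$ is a reasonable point of care that the paper leaves implicit, but it does not change the argument.
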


\begin{proof}
    Using Lemma~\ref{db_size_outsider_independent_events_FMR} and the fact that $\FAR = \FMR\times(1-\FTA)$, the result follows.
\end{proof}

\subsection{$\FPIR$-based Attack Complexities}
Suppose that an attacker as defined in Section~\ref{Attacker_setup} has access to the matcher in the identification mode. The attacker can then produce any template and compare it to all the templates in the databases. The probability that the attacker match a user at each round is equal to the $\FPIR$. Hence, the following result can be stated.
\begin{theorem}
    \label{db_size_outsider_independent_events_FPIR}
    Let $N$, $n$ and $(\FPIR)$ be fixed parameters as above. Then, the median number $m_{out}$ of trials for an outsider attacker to successfully impersonate a user is
    $$\Omega\left( 2^{-\log_2(\FPIR)-\log_2(1+\FPIR)}\right)
        \text{ and }
        O\left( 2^{-\log_2(\FPIR)}\right)$$
\end{theorem}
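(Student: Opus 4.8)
The plan is to mirror the argument of Lemma~\ref{db_size_outsider_independent_events_FMR}, which becomes considerably simpler here because the single-trial success probability is now a single quantity rather than a product over the $N$ database entries. First I would observe that, in the identification mode, each query submitted by the attacker to the matcher returns a false positive with probability $p=\FPIR$, independently across rounds, so the number of trials until the first success is a geometric random variable whose median is $m_{out}\approx \frac{-1}{\log_2(1-p)}$, exactly as recalled in the proof of the cited lemma.

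Next I would apply the elementary two-sided estimate $-x-x^2\le \ln(1-x)\le -x$, valid for $0\le x\le \tfrac12$, with $x=\FPIR$ (the hypothesis $\FPIR\le\tfrac12$ being harmless in practice). Dividing through by $\ln 2$ gives
\[
\frac{-\FPIR(1+\FPIR)}{\ln 2}\ \le\ \log_2(1-\FPIR)\ \le\ \frac{-\FPIR}{\ln 2}.
\]
All three quantities are negative, so taking $-1/(\cdot)$ reverses the chain and yields
\[
\frac{\ln 2}{\FPIR(1+\FPIR)}\ \le\ m_{out}\ \le\ \frac{\ln 2}{\FPIR}.
\]

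Finally I would rewrite each side as a power of two: the lower bound equals $2^{-\log_2(\FPIR)-\log_2(1+\FPIR)+\log_2(\ln 2)}$ and the upper bound equals $2^{-\log_2(\FPIR)+\log_2(\ln 2)}$. Since $\log_2(\ln 2)$ is an additive constant, it is absorbed by the asymptotic notation, giving $m_{out}=\Omega\!\left(2^{-\log_2(\FPIR)-\log_2(1+\FPIR)}\right)$ and $m_{out}=O\!\left(2^{-\log_2(\FPIR)}\right)$, as claimed. There is essentially no obstacle beyond keeping the inequality directions straight when inverting, and noting that—unlike the $\FMR$ case—no union-bound-style expansion over the $N$ users is required, since the $\FPIR$ already aggregates the one-to-many comparison into a single per-round false-positive probability.
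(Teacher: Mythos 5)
Your proposal is correct and is essentially the paper's own argument: the paper proves this theorem by noting that it is the proof of Lemma~\ref{db_size_outsider_independent_events_FMR} specialized to a single-trial success probability $p=\FPIR$, which is exactly what you carry out. The only point worth keeping in mind is the implicit hypothesis $\FPIR\le 1/2$ needed for the logarithm estimate, which you correctly flag as harmless.
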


\begin{proof}
    The proof is the same as the proof of Lemma~\ref{db_size_outsider_independent_events_FMR} with $p=\FPIR$.
\end{proof}

Note that the link between the $\FMR$ and the $\FPIR$ is not trivial and the $\FMR$ should not be used to infer the $\FPIR$ because of the potential near-collisions (see Definition~\ref{NCol} and Definition~\ref{MNCol}).

\subsection{Probability of a Near-Collision with Respect to Both the $\FMR$ and the Database Size}

A near-collision occurs when two or more users of a given database can potentially authenticate or be identified in place of each other leading to accuracy and security problems.
\begin{proposition}
    \label{FMR_and_NC}
    Let $N$, $n$ and $(\FMR)$ be fixed parameters as above. Then, the probability that there is a near-collision is $$1-(1-\FMR)^{N(N-1)/2}.$$
\end{proposition}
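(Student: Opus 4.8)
The plan is to compute the probability that \emph{no} near-collision occurs among the $N$ templates, and then take the complement. A weak $\varepsilon$-near-collision (Definition~\ref{NCol}) occurs precisely when some pair $\{a,b\}$ of the $N$ templates satisfies $d_\mathcal{H}(a,b) \leq \varepsilon$. There are $\binom{N}{2} = N(N-1)/2$ such unordered pairs. First I would argue that, under the assumption that templates are uniformly and independently distributed in $\mathbb{Z}_2^n$, for any fixed pair the probability that the two templates fall within Hamming distance $\varepsilon$ equals $V_\varepsilon = |B_\varepsilon|/2^n$, and that this quantity is identified with the $\FMR$ in this section's model (a false match is exactly the event that two unrelated templates lie within the threshold). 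Hence each pair ``collides'' with probability $\FMR$, and fails to collide with probability $1-\FMR$.

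Next I would treat the $\binom{N}{2}$ pairwise events as independent, so that the probability that \emph{every} pair avoids a collision is $(1-\FMR)^{N(N-1)/2}$, giving the stated formula $1-(1-\FMR)^{N(N-1)/2}$ for the probability of at least one near-collision. Concretely: let $A_{ij}$ be the event $d_\mathcal{H}(t_i,t_j) \leq \varepsilon$; then the no-collision event is $\bigcap_{i<j} \overline{A_{ij}}$, and under independence $\mathbb{P}\bigl(\bigcap_{i<j} \overline{A_{ij}}\bigr) = \prod_{i<j} \mathbb{P}(\overline{A_{ij}}) = (1-\FMR)^{\binom{N}{2}}$.

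The main obstacle is that the events $A_{ij}$ are \emph{not} genuinely independent: the Hamming distance is not a metric-free relation, so $d_\mathcal{H}(t_i,t_j)$, $d_\mathcal{H}(t_j,t_k)$ and $d_\mathcal{H}(t_i,t_k)$ are coupled through the triangle inequality and more refined combinatorial constraints (for instance, two small distances force the third to be small). This is exactly the difficulty the authors flag when they note that ``it is not possible to smoothly derive probability and complexity results'' for Definitions~\ref{NCol} and~\ref{MNCol}, and that computing intersections of three or more Hamming balls is intractable. So I would be explicit that Proposition~\ref{FMR_and_NC} is an \emph{approximation} (or an expression valid under a working independence assumption), rather than an exact identity — the honest statement is that $1-(1-\FMR)^{N(N-1)/2}$ is what one obtains by treating the pairwise near-match events as independent Bernoulli trials, which is the natural first-order estimate and also matches the union-bound intuition $\mathbb{P}(\text{near-collision}) \lesssim \binom{N}{2}\FMR$ in the small-$\FMR$ regime. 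A more careful version would bound the dependence, but for the purposes of this section the product form is the intended closed-form expression, paralleling the classic birthday-bound heuristic.
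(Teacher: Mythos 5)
Your proposal matches the paper's own proof: the paper likewise counts the $N(N-1)/2$ distinct pairs, assigns each a false-match probability of $\FMR$, and treats the pairwise events as independent to obtain $(1-\FMR)^{N(N-1)/2}$ for the no-collision probability. Your additional caveat that the pairwise events are not genuinely independent is well taken --- the paper's proof silently makes the same working assumption --- but the core argument is identical.
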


\begin{proof}
    Considering a uniformed $\FMR$ and using the birthday problem, the probability of a near collision can be found. For a database of size $N$, the number of distinct pairings is $N(N-1)/2$. According to the $\FMR$ definition, the probability of a pair to result in a False Match is $\FMR$. Hence, the probability that there is no near-collision is $(1-\FMR)^{N(N-1)/2}$ and the result follows.
\end{proof}

In order to achieve a given level of security and accuracy with respect to the occurrence of near-collision, the number of clients in a database must be bounded. Considering $\lambda\in\mathbb{N}^\times$ a secure parameter such that the probability of a near collision is smaller than $1/\lambda$, the maximum size for a database is stated with respect to its $\FMR$.

\begin{theorem}
    \label{FMR_and_NC_gives_N}
    Let $N$, $n$, $\lambda$ and $(\FMR)$ be fixed parameters as above. If
    $$N \leq \frac{1}{2}\left(1+\sqrt{1+\left(8\log_2(1-1/\lambda)\right)/\left(\log_2(1-\FMR)\right)}\right)$$
    then the probability that there is a near-collision is smaller than $1/\lambda$.
\end{theorem}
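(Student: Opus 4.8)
The plan is to reduce the statement to the closed form already established in Proposition~\ref{FMR_and_NC} and then solve the resulting quadratic inequality in $N$. By that proposition, the probability that there is a near-collision in a database of size $N$ with uniform match rate $\FMR$ equals $1-(1-\FMR)^{N(N-1)/2}$. Hence the conclusion ``the probability of a near-collision is smaller than $1/\lambda$'' is equivalent to $1-(1-\FMR)^{N(N-1)/2} \le 1/\lambda$, that is, to $(1-\FMR)^{N(N-1)/2} \ge 1-1/\lambda$.

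Next I would take base-$2$ logarithms of both sides of this last inequality. Since $0 < 1-\FMR < 1$ we have $\log_2(1-\FMR) < 0$, and likewise $\log_2(1-1/\lambda) < 0$ because $\lambda \in \mathbb{N}^\times$ forces $0 < 1-1/\lambda < 1$. Taking logs turns the inequality into $\tfrac{N(N-1)}{2}\,\log_2(1-\FMR) \ge \log_2(1-1/\lambda)$; dividing through by the negative quantity $\log_2(1-\FMR)$ reverses the inequality and gives $\tfrac{N(N-1)}{2} \le \log_2(1-1/\lambda)\big/\log_2(1-\FMR)$, equivalently $N^2 - N - 2\log_2(1-1/\lambda)\big/\log_2(1-\FMR) \le 0$.

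Finally I would invoke the quadratic formula. The discriminant of $x \mapsto x^2 - x - 2\log_2(1-1/\lambda)\big/\log_2(1-\FMR)$ is $1 + 8\log_2(1-1/\lambda)\big/\log_2(1-\FMR)$, which is strictly positive since the ratio of the two negative logarithms is positive; so the parabola has two real roots, and because $N \ge 1 > 0$ the inequality $N^2 - N - \cdots \le 0$ holds exactly when $N$ does not exceed the larger root $\tfrac12\bigl(1+\sqrt{1+8\log_2(1-1/\lambda)\big/\log_2(1-\FMR)}\bigr)$. This is precisely the hypothesis of the theorem, so it implies the desired bound on the near-collision probability.

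I do not expect any genuine obstacle: every step is an equivalence, and the only points requiring care are the bookkeeping of inequality directions when exponentiating/taking logarithms and when dividing by the negative quantity $\log_2(1-\FMR)$, and the remark that the discriminant is nonnegative so that the square root is well defined. One may additionally note that, since the whole chain consists of equivalences, the stated bound is in fact also necessary for the near-collision probability to be at most $1/\lambda$, although the theorem only claims sufficiency.
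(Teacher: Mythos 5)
Your proof is correct and follows essentially the same route as the paper: apply Proposition~\ref{FMR_and_NC}, take logarithms, and solve the resulting quadratic in $N$. Your sign bookkeeping is in fact more careful than the paper's own terse proof, whose intermediate displayed inequality has a sign slip ($N^2-N+C\leq 0$ where it should read $N^2-N-C\leq 0$ with $C=2\log_2(1-1/\lambda)/\log_2(1-\FMR)$); your version, including the observation that the chain consists of equivalences, matches the stated bound exactly.
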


\begin{proof}
    Using the previous theorem, we seek $N$ such that
    \begin{eqnarray*}
        1-(1-\FMR)^{N(N-1)/2} & \leq & 1/\lambda\\
        N^2-N+\left(2\log_2(1-1/\lambda)\right)/\left(\log_2(1-\FMR)\right) & \leq & 0\\
    \end{eqnarray*}
    The study of the last function yields the result.
\end{proof}

Corollary~\ref{Cor:assymptotic_N_bound} gives an asymptotic estimation for the bound over $N$.

\begin{corollary}
    \label{Cor:assymptotic_N_bound}
    Let $N$, $n$, $\lambda$ and $(\FMR)$ be fixed parameters as above. If
    $$N \approx \sqrt{2/\left(\lambda\FMR\right)}$$
    then the probability that there is a near-collision is smaller than $1/\lambda$.
\end{corollary}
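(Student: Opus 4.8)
The plan is to start from the exact upper bound on $N$ established in Theorem~\ref{FMR_and_NC_gives_N}, namely
$$N \leq \frac{1}{2}\left(1 + \sqrt{1 + \frac{8\log_2(1-1/\lambda)}{\log_2(1-\FMR)}}\right),$$
and to simplify it in the asymptotic regime that is actually relevant in practice: $\lambda$ large, $\FMR$ small, and in particular $\lambda\,\FMR \to 0$ (this last condition being precisely what makes the admissible database size grow, so it is the interesting case).

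First I would replace each logarithm by its first-order expansion. Using $\ln(1-x) = -x + O(x^2)$ as $x\to 0$ gives $\log_2(1-1/\lambda) = -\tfrac{1}{\lambda\ln 2}\bigl(1 + O(1/\lambda)\bigr)$ and $\log_2(1-\FMR) = -\tfrac{\FMR}{\ln 2}\bigl(1 + O(\FMR)\bigr)$, hence $\dfrac{\log_2(1-1/\lambda)}{\log_2(1-\FMR)} = \dfrac{1}{\lambda\,\FMR}\bigl(1 + O(1/\lambda) + O(\FMR)\bigr)$. Substituting this into the bound yields $N \lesssim \tfrac{1}{2}\bigl(1 + \sqrt{1 + 8/(\lambda\,\FMR)}\bigr)$ up to the same relative errors.

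Next I would discard the additive constants that are negligible once $\lambda\,\FMR \to 0$. Since then $8/(\lambda\,\FMR)\to\infty$, we have $\sqrt{1 + 8/(\lambda\,\FMR)} = \sqrt{8/(\lambda\,\FMR)}\,(1+o(1)) = 2\sqrt{2/(\lambda\,\FMR)}\,(1+o(1))$, which dominates the leading $1$, so $N \approx \tfrac{1}{2}\cdot 2\sqrt{2/(\lambda\,\FMR)} = \sqrt{2/(\lambda\,\FMR)}$, the claimed estimate. Equivalently, one reaches the same value straight from Proposition~\ref{FMR_and_NC}: imposing $1-(1-\FMR)^{N(N-1)/2}\leq 1/\lambda$ and using $\ln(1-\FMR)\approx-\FMR$ and $\ln(1-1/\lambda)\approx-1/\lambda$ gives $N(N-1)\,\FMR\approx 2/\lambda$, hence $N\approx\sqrt{2/(\lambda\,\FMR)}$.

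The main obstacle, such as there is one for an asymptotic claim, is pinning down the precise regime and verifying that every discarded term is genuinely lower order there: one must assume $\lambda\to\infty$ and $\FMR\to 0$ jointly with $\lambda\,\FMR\to 0$, since otherwise either the Taylor remainders or the additive $1$'s fail to be negligible and the symbol $\approx$ is unwarranted. One should also keep $\FMR\leq 1/2$ so the two-sided logarithm estimate underpinning Theorem~\ref{FMR_and_NC_gives_N} stays valid. No combinatorial input beyond what is already in Proposition~\ref{FMR_and_NC} and Theorem~\ref{FMR_and_NC_gives_N} is needed; the content is entirely a controlled linearisation of the closed-form bound.
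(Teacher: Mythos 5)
Your proposal is correct and follows essentially the same route as the paper: the paper's proof also starts from the closed-form bound of Theorem~\ref{FMR_and_NC_gives_N} and simply asserts that it is approximately $\sqrt{2/(\lambda\,\FMR)}$. You add value only in making the asymptotic regime ($\lambda\,\FMR\to 0$) and the Taylor-remainder bookkeeping explicit, which the paper leaves implicit.
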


\begin{proof}
    Using the previous theorem, we have $N$ such that
    \begin{eqnarray*}
        N & \leq & \frac{1}{2}\left(1+\sqrt{1+\left(8\log_2(1-1/\lambda)\right)/\left(\log_2(1-\FMR)\right)}\right)\\
    \end{eqnarray*}
    Moreover,
    \begin{eqnarray*}
        \frac{1}{2}\left(1+\sqrt{1+\left(8\log_2(1-1/\lambda)\right)/\left(\log_2(1-\FMR)\right)}\right) & \approx & 
        \sqrt{2/\left(\lambda\FMR\right)}\\
    \end{eqnarray*}
    and the result follows.
\end{proof}



\subsection{Numerical Evaluations on Real-world Systems}

In order to support the two above theorems, Table~\ref{Table_FMR_result} shows the bounds on the number of rounds for an attacker on several real-life systems along with the probability of a near collision and the recommendations for the choice of $N$ with $\lambda=128$.
The evaluation metrics rely on industry giants like Google and Apple, state-of-the-art via reviews, standards from NIST, and the FVC ongoing platform to identify best algorithms. We systematically excluded algorithms with an $\FMR$ of $0$ to avoid numerical approximation biases. We selected algorithms with the lowest {\tt EER} (Equal Error Rate) for several modalities in order to draw a representative picture of the current state of biometrics. When reasonable (\textit{i.e.} when the $\FMR$ is smaller than $1\%$), we considered the $\FMR$ scores at the {\tt ZeroFNMR} (the lowest $\FMR$ for ${\tt FNMR}=0\%$) to analyze the most practical algorithms. When the $\FMR$ is too high to consider the {\tt ZeroFNMR}, we take the {\tt EER}. If we do not have access to any of these information, we consider the given $\FMR$.

According to the NIST~\cite{quinn2018irex}, the most accurate one-to-one iris matcher in $2018$ yields an $\FMR$ of $10^{-5}$ ($1$ in $100,000$) for  an enrolled population of $160$ thousand people. Concerning the face and the fingerprint modalities, Apple claims respectively an $\FMR$ of $10^{-6}$ ($1$ in $1,000,000$) for the face ID~\cite{FaceID} and an $\FMR$ of $5\times 10^{-5}$ ($1$ in $500,000$) for the touch ID~\cite{TouchID}. For android, Google claims that their systems on several modalities (Iris, Fingerprint, Face and Voice) ensure an $\FMR$ lower than $2\times 10^{-5}$ ($1$ in $200,000$)~\cite{AndroideID}. The state-of-the-art $\FMR$ for Vascular Biometric Recognition (VBR) is $10^{-4}$ ($1$ in $10,000$)~\cite{quinn2018irex} on a database of $100$ individual. According to Sandhya and Prasad~\cite{sandhya2017biometric}, the Adaptive selection of error correction (ASEC) on Online signature for $30$ subject leads to a $\FMR$ of $4\%$. The Enhanced BioHash NXOR mask~\cite{sandhya2017biometric} applied to face recognition evaluated with $294$ users leads to a $\FMR$ of $0.11\%$ (approximately $1$ in $1,000$). Minutiae descriptors~\cite{sandhya2017biometric} for fingerprint on $100$ users yields an $\FMR$ of $10^{-4}$ ($1$ in $10,000$). According to FVC-onGoing~\cite{dorizzi2009fingerprint,FVC}, for the {\tt ZeroFNMR}, HXKJ for fingerprint gets a $\FMR$ of $0.005\%$ ($5$ in $100,000$) evaluated on the FV-STD-$1.0$ database. The MM\_PV~\cite{FVC} algorithm for palm vein at the {\tt EER} gives an $\FMR$ of $0.328\%$ (approximately $3$ in $1,000$) evaluated on the PV-FULL-$1.0$ database. The Biotope~\cite{FVC} algorithm for Secure Template Fingerprint (STF) at the {\tt EER} gives a $\FMR$ of $1.541\%$ (approximately $15$ in $1,000$) evaluated on the STFV-STD-$1.0$ database.

We estimate the plausible number of clients for the databases whose the actual number is not available.
Since the security bounds rely on this estimation, the actual security remains uncertain. In the following, a description is provided on how the estimates are formulated.
Concerning Apple, a smartphone typically permits only $5$ to $10$ distinct enrolled users, while the Android system allows $2$ to $4$.
In the case of the FVC~\cite{FVC} databases used for benchmarking, where the number of genuine attempts is specified, based on the factorization of the genuine attempts number, we assumed that each individual possesses approximately $10$ samples, $2772$ users for FV-STD-$1.0$ and STFV-STD-$1.0$ are found. Those results stem from the division of the number of genuine attempts by the number of samples.
Using the same methodology, we assume that there are only $10$ samples per user, yielding a total of $280$ client for PV-FULL-$1.0$.

\begin{table*}[]
    \resizebox*{\textwidth}{!}{%
        \begin{tabular}{@{}lrrcccccc@{}}
            \toprule
            \multicolumn{1}{c}{\multirow{3}{*}{\FMR}} & \multicolumn{1}{c}{\multirow{3}{*}{System}} & \multirow{3}{*}{Modality} & DB        & DB        & Number of       & Number of       & Probability of & Maximum N       \\
            \multicolumn{1}{c}{}                      & \multicolumn{1}{c}{}                        &                           &           & estimated & rounds (lower   & rounds (upper   & Near collision & for \FMR        \\
            \multicolumn{1}{c}{}                      & \multicolumn{1}{c}{}                        &                           & size      & size      & bound $\log_2$) & bound $\log_2$) & ($\%$)         & $(\lambda=100)$ \\
            \midrule
            $1$ in $5.0\times 10^5$                   & Apple Touch ID~\cite{TouchID}               & Fingerprint               & NA        & $7$       & $16$            & $16$            & $0.004$        & $100$           \\
            $1$ in $1.0\times 10^6$                   & Apple Face ID~\cite{FaceID}                 & Face                      & NA        & $8$       & $16$            & $16$            & $0.002$        & $142$           \\
            $1$ in $2.0\times 10^5$                   & Google standard~\cite{AndroideID}           & All                       & NA        & $3$       & $16$            & $16$            & $0.001$        & $63$            \\
            $1$ in $1.0\times 10^5$                   & $NEC5$~\cite{quinn2018irex}                 & Iris                      & $160,000$ & $-$       & $0$             & $0$             & $-$            & $45$            \\
            $1$ in $1.0\times 10^4$                   & Nikisins~\cite{nikisins2018fast}            & VBR                       & $100$     & $-$       & $6.64$          & $6.64$          & $39.044$       & $14$            \\
            $4$ in $1.0\times 10^2$                   & ASEC~\cite{sandhya2017biometric}            & Online signature          & $30$      & $-$       & $0$             & $0$             & $99.999$       & $1$             \\
            $11$ in $1.0\times 10^4$                  & NXOR~\cite{sandhya2017biometric}            & Face                      & $294$     & $-$       & $1.63$          & $1.63$          & $100$          & $4$             \\
            $5$ in $1.0\times 10^5$                   & HXKJ~\cite{FVC}                             & Fingerprint               & NA        & $2772$    & $2.85$          & $2.85$          & $100$          & $20$            \\
            $3$ in $1.0\times 10^4$                   & MM\_PV~\cite{FVC}                           & Palm vein                 & NA        & $2772$    & $0.26$          & $0.27$          & $100$          & $8$             \\
            $15$ in $1.0\times 10^4$                  & Biotope~\cite{FVC}                          & STF                       & NA        & $280$     & $1.24$          & $1.25$          & $100$          & $4$             \\ \bottomrule
        \end{tabular}
    }
    \caption{Number of trials for the attackers on real examples.}
    \label{Table_FMR_result}
\end{table*}

\section{Theoretical Matching Attacks}
\label{utg_attacks}

Below are presented some untargeted attacks to find near-collisions with
hidden templates of a secret biometric database.
We examine two attack scenarios, estimating the bounds for their respective run-time complexities.
In the first scenario, an outsider attacker submits guesses to the system until one of them is accepted.
In the insider scenario, some or all of the genuine users launch the same attack and try to impersonate others.
They apply irrespective
of the operation mode (identification or verification).
However, unlike identification, authentication requires the set of identifiers (logins).
In this case, the attacker needs to test
a guessed template for each of the identifiers, hence adding a factor of $N$ in
the estimated bounds.

\subsection{Naive and Adaptive Attack Models}
\label{utg_attacks:out}

The attacker $\mathcal{A}$ is an outsider of the system, \textit{i.e.}, she is not 
enrolled in the system, and she seeks to perform an untargeted attack by
impersonating any of the $N$ users in the database. For the considered attack, the database
is not leaked in cleartext, so the templates remain secret.
It is assumed that $\mathcal{A}$ generates her own database, a number of templates $t_1^a$, $t_2^a, \dots, t_k^a$ via the function $\mathcal{T}$ until one of them is accepted by the system. To do so, she generates inputs randomly and applies $\mathcal{T}$ on these inputs. In the following, we calculate the probability of a strong near-collision for the generated template as well as the number of trials of a naive attacker.

\subsubsection{Naive Attacks}

We denote by $E_1^N$ the event "the template of the outsider matches with at least one of the $N$ templates of the database".
When no constraints are imposed on enrollment templates, $E_1^N$ is seen as a union of independent events $E_1^1$.
Consider that the attacker repeats this attack with each new generation of a template until achieving success.
According to the geometric distribution, $m_\text{out}$ the necessary number of templates to generate so that
a success occurs with more than a chance of $50\%$, corresponds to the median number of trials to succeed, \textit{i.e.}, about $-\log(2)/\log(1-p)$ where $p$ is the probability of success with a trial.

\begin{theorem}\label{db_size_outsider_independent_events}
    Let $N$, $n$ and $\varepsilon$ be fixed parameters with $\varepsilon/n \leq 1/2$ and $N < 2^{n(1-h(\varepsilon/n))-1}$. The median number $m_\text{out}$ of trials for the attacker to successfully impersonate a user is
    $$\Omega\left(2^{n(1-h(\varepsilon/n))-\log_2 N}\right)
        \text{ and } O\left(2^{n(1-h(\varepsilon/n))  + \frac{1}{2} \log_2 \varepsilon \left(1- \frac{\varepsilon}{n} \right) - \log_2 N }\right)$$
    where $h(\cdot)$ is the binary entropy function.
\end{theorem}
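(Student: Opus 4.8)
The plan is to follow the same template as the proof of Lemma~\ref{db_size_outsider_independent_events_FMR}: identify the single-trial success probability $p$, invoke the median formula $m_\text{out} \approx -1/\log_2(1-p)$ for the geometric distribution, and then reduce everything to two-sided estimates of the Hamming-ball volume $V_\varepsilon = |B_\varepsilon|/2^n$. First I would fix $p$. The attacker draws a uniform random guess $g$ and succeeds as soon as one of the $N$ enrolled templates lies in $B_\varepsilon(g)$. Under the modelling assumption of the section, the enrolled templates are independent and uniform, so each lands in $B_\varepsilon(g)$ independently with probability $V_\varepsilon$, whence
$$p = 1-(1-V_\varepsilon)^N, \qquad \log_2(1-p) = N\log_2(1-V_\varepsilon).$$

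Next I would show $m_\text{out} = \Theta\!\left(1/(NV_\varepsilon)\right)$. The hypothesis $N < 2^{n(1-h(\varepsilon/n))-1}$ forces $n(1-h(\varepsilon/n)) > 1$ (since $N \ge 1$), hence $V_\varepsilon \le 2^{-n(1-h(\varepsilon/n))} < 1/2$; this also keeps $p$ away from $1$, so that $m_\text{out}$ is well approximated by $-1/\log_2(1-p)$. With $x = V_\varepsilon \in (0,1/2)$, the inequalities $-x-x^2 \le \ln(1-x) \le -x$ already used in the earlier proof give $\log_2(1-V_\varepsilon) = -\Theta(V_\varepsilon)$, whence $-\log_2(1-p) = -N\log_2(1-V_\varepsilon) = \Theta(NV_\varepsilon)$ and $m_\text{out} = \Theta\!\left(1/(NV_\varepsilon)\right)$; the multiplicative constants ($\ln 2$, the factor $1+V_\varepsilon \le 3/2$, the ceiling in the median) are absorbed by $\Omega$ and $O$.

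The substantive step is the two-sided control of $V_\varepsilon$. For the $\Omega$-bound I would use the standard entropy upper bound $|B_\varepsilon| = \sum_{i=0}^\varepsilon \binom{n}{i} \le 2^{nh(\varepsilon/n)}$, valid because $\varepsilon/n \le 1/2$; this gives $V_\varepsilon \le 2^{-n(1-h(\varepsilon/n))}$ and therefore $m_\text{out} = \Omega\!\left(1/(NV_\varepsilon)\right) = \Omega\!\left(2^{n(1-h(\varepsilon/n))-\log_2 N}\right)$. For the $O$-bound I would use the matching Stirling-based lower bound on the central term, $|B_\varepsilon| \ge \binom{n}{\varepsilon} \ge 2^{nh(\varepsilon/n)}/\sqrt{8\varepsilon(1-\varepsilon/n)}$ (using $\varepsilon \ge 1$), which yields $V_\varepsilon \ge 2^{-n(1-h(\varepsilon/n))}/\sqrt{8\varepsilon(1-\varepsilon/n)}$ and hence $m_\text{out} = O\!\left(\sqrt{\varepsilon(1-\varepsilon/n)}\;2^{n(1-h(\varepsilon/n))-\log_2 N}\right) = O\!\left(2^{n(1-h(\varepsilon/n))+\frac12\log_2(\varepsilon(1-\varepsilon/n))-\log_2 N}\right)$, the constant $\sqrt 8$ going into the asymptotic constant.

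The only non-routine ingredient, hence the main obstacle, is the sharp lower bound $\binom{n}{\varepsilon} \ge 2^{nh(\varepsilon/n)}/\sqrt{8\varepsilon(1-\varepsilon/n)}$: it is precisely this $\sqrt{\varepsilon(1-\varepsilon/n)}$ loss, absent from the crude estimate $|B_\varepsilon| \le 2^{nh(\varepsilon/n)}$, that separates the $O$-term from the $\Omega$-term and explains why the bounds are not matched. Everything else is the manipulation already carried out for Lemma~\ref{db_size_outsider_independent_events_FMR}, and the condition $N < 2^{n(1-h(\varepsilon/n))-1}$ is exactly what is needed to keep $V_\varepsilon$ (and thus $p$) in the regime where those logarithmic inequalities and the geometric-median approximation apply.
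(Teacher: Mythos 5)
Your proposal is correct and follows essentially the same route as the paper: same single-trial probability $p=1-(1-V_\varepsilon)^N$, same geometric-median formula, same logarithmic inequalities $-x-x^2\le\ln(1-x)\le-x$, and the same pair of entropy bounds on $|B_\varepsilon|$ (upper bound $2^{nh(\varepsilon/n)}$ for the $\Omega$-term, Stirling-type lower bound with the $\sqrt{8\varepsilon(1-\varepsilon/n)}$ loss for the $O$-term). The only cosmetic difference is that you obtain the lower bound on $|B_\varepsilon|$ via the single central coefficient $\binom{n}{\varepsilon}$ while the paper cites the bound on the full sum directly; both are the standard estimate from Cover--Thomas.
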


\begin{proof}
    The probability of success with a single trial is given by
    $$p=1-(1-V_{\varepsilon})^N.$$
    Since $-x-x^2 \leq \log(1-x) \leq -x$ for $0\leq x \leq \frac{1}{2}$ and if $N < 2^{n(1-h(\varepsilon/n))-1}$,
    then
    \begin{equation*}
        - N (V_\varepsilon - V_\varepsilon^2) \leq N \log(1-V_\varepsilon) \leq - NV_\varepsilon
    \end{equation*}
    Next, since $V_\varepsilon^2 < V_\varepsilon$
    and $$\sum_{k=0}^\varepsilon \binom{n}{k} \leq 2^{nh(\varepsilon/n)}$$ for $\varepsilon/n < 1/2$ (see~\cite{thomas2006elements}), the number $m_{\text{out}}$ of trials is lower bounded as follows:
    \begin{equation}
        \label{lower_outsider_independent_events}
        m_{\text{out}} \geq \frac{\log 2}{N (V_\varepsilon + V_\varepsilon^2)} \geq \frac{\log 2}{2N V_\varepsilon} \geq \frac{\log 2}{2N} 2^{n(1-h(\varepsilon/n))}
    \end{equation}

    \noindent
    For the upper bound, notice that for $\varepsilon/n < 1/2$, we have $$\sum_{k=0}^\varepsilon \binom{n}{k} \geq 2^{nh(\varepsilon/n)}/\sqrt{8\varepsilon(1-\varepsilon/n)}$$ (see~\cite{thomas2006elements}).
    Hence,
    \begin{equation}
        \label{upper_outsider_independent_events}
        m_\text{out} \leq \frac{\log 2}{N V_\varepsilon} \leq \frac{\log 2}{N} 2^{n(1-h(\varepsilon/n))} \times \sqrt{8\varepsilon(1-\varepsilon/n)}
    \end{equation}
\end{proof}

\paragraph{\textit{Case of different enrolled templates}}
In this case, a verification is performed on the enrolled templates to ensure that the database is only comprised of distinct templates. In other words, it is necessary to consider that templates are dependent, which is an essential change compared to the context of Theorem~\ref{db_size_outsider_independent_events}. It is then worth noting that $E_1^N$ cannot be seen as a union of independent events $E_1^1$ and that an exact measure of $E_1^N$ involves the cardinalities of multiple intersections of Hamming balls.
Therefore, the following result is a declination of Theorem~\ref{db_size_outsider_independent_events} in this specific context.

\begin{corollary}\label{db_size_outsider}
    Considering a similar setting of Theorem~\ref{db_size_outsider_independent_events} but with distinct templates in the database, then the median number $m_\text{out}$ of trials for the attacker to successfully impersonate a user is
    $$\Omega\left(2^{n(1-h(\varepsilon/n))-\log_2(N) -  \log_2 \left(1 + 6 \frac{N-1}{2^{n+1}}  \right) } \right)$$
    and
    $$O\left(2^{n(1-h(\varepsilon/n))  + \frac{1}{2} \log_2 \varepsilon \left( 1 - \frac{\varepsilon}{n} \right) - \log_2 N -  \log_2 \left( 1 + \frac{N-1}{2^{n+1}} \right) }\right).$$
\end{corollary}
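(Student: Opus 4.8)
The plan is to mimic the proof of Theorem~\ref{db_size_outsider_independent_events} but replace the bound on $p=1-\mathbb{P}(\text{no match})$ by one that accounts for the dependence introduced by requiring distinct enrolled templates. Concretely, if $t_1,\dots,t_N$ are the (distinct) enrolled templates and $a$ is the attacker's freshly generated template (uniform in $\mathbb{Z}_2^n$), then
$$p \;=\; \mathbb{P}\!\left(\bigcup_{i=1}^N \{d_\mathcal{H}(a,t_i)\le\varepsilon\}\right).$$
First I would condition on the configuration of $(t_1,\dots,t_N)$: by inclusion–exclusion truncated at the second order (Bonferroni),
$$N V_\varepsilon \;-\; \sum_{i<j}\mathbb{P}\big(a\in B_\varepsilon(t_i)\cap B_\varepsilon(t_j)\big)\;\le\; p \;\le\; N V_\varepsilon,$$
so the upper bound $p\le N V_\varepsilon$ is unchanged, and for the lower bound on $p$ (hence the upper bound on $m_\text{out}$) I need a good \emph{upper} bound on the pairwise overlap term. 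The key geometric input is that, since templates are forced distinct, $d_\mathcal{H}(t_i,t_j)\ge 1$, but more usefully I would instead directly bound $\sum_{i<j}|B_\varepsilon(t_i)\cap B_\varepsilon(t_j)|/2^n$ averaged over a uniformly random distinct tuple: the average intersection size of two balls whose centers are uniform conditioned on being distinct is at most $\frac{2^n}{2^n-1}$ times the unconditioned average, and the unconditioned average of $|B_\varepsilon(t_i)\cap B_\varepsilon(t_j)|$ over independent uniform centers equals $|B_\varepsilon|^2/2^n$, i.e. $V_\varepsilon^2 \cdot 2^n$ in count, giving an overlap contribution $\binom{N}{2}V_\varepsilon^2$ in probability terms — which reproduces exactly the $V_\varepsilon^2$ correction already handled in Theorem~\ref{db_size_outsider_independent_events}. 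The genuinely new factor $1+\tfrac{N-1}{2^{n+1}}$ (and the $1+6\tfrac{N-1}{2^{n+1}}$ in the $\Omega$ bound) must come from the conditioning-on-distinctness renormalization: replacing $2^n$ by $2^n-1$, $2^n-2,\dots,2^n-(N-1)$ in the denominators of the sequential sampling, and bounding $\prod_{k=1}^{N-1}\frac{2^n}{2^n-k}\le 1+\tfrac{N-1}{2^{n}}\cdot(\text{small constant})$ via $\log$ and the crude inequality $\sum_{k=1}^{N-1}k \le (N-1)^2/2$ together with $N<2^{n(1-h(\varepsilon/n))-1}\le 2^{n-1}$.

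Next I would assemble: with $p$ now pinched between $NV_\varepsilon\big(1-\tfrac{N-1}{2}V_\varepsilon\big)\cdot c^{-1}$ and $NV_\varepsilon$ where $c=\prod_{k=1}^{N-1}\frac{2^n}{2^n-k}$ is the distinctness-renormalization constant, I plug into $m_\text{out}\approx -\log 2/\log(1-p)$ and reuse $-x-x^2\le\log(1-x)\le -x$ exactly as in the cited theorem. The upper bound $m_\text{out}\le \log 2/p \le \log 2/\big(NV_\varepsilon c^{-1}(1-\tfrac{N-1}{2}V_\varepsilon)\big)$, then bounding $c\le 1+\tfrac{N-1}{2^{n+1}}$ (after the $\log$/$\exp$ manipulation), $V_\varepsilon\le 2^{-n(1-h(\varepsilon/n))}$, and $1-\tfrac{N-1}{2}V_\varepsilon\ge \tfrac12$ (from the hypothesis $N<2^{n(1-h(\varepsilon/n))-1}$), yields the stated $O$-expression after absorbing the constant into the $\sqrt{8\varepsilon(1-\varepsilon/n)}$ factor via $\tfrac12\log_2\varepsilon(1-\varepsilon/n)$. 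For the lower bound I use $m_\text{out}\ge \log 2/(p+p^2)\ge \log 2/(2p)\ge \log 2/(2NV_\varepsilon)$, and here the extra $\log_2(1+6\tfrac{N-1}{2^{n+1}})$ term arises because I must \emph{upper} bound $p$ more carefully than just $NV_\varepsilon$ — the distinctness conditioning can only \emph{increase} the probability of hitting some ball when the balls are spread out, so I need $p\le NV_\varepsilon\cdot c \le NV_\varepsilon(1+6\tfrac{N-1}{2^{n+1}})$, the factor $6$ being the slack constant from bounding $\prod\frac{2^n}{2^n-k}$ by an exponential.

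I expect the main obstacle to be making the distinctness-conditioning argument rigorous, i.e. showing cleanly that averaging $|B_\varepsilon(t_i)\cap B_\varepsilon(t_j)|$ (or the union measure) over \emph{distinct} tuples differs from the i.i.d.\ case by at most the multiplicative factor $\prod_{k=1}^{N-1}\frac{2^n}{2^n-k}$; the cleanest route is to express "uniform over ordered distinct $N$-tuples" as "uniform over all $N$-tuples conditioned on the event $D$ that all coordinates differ", note $\mathbb{P}(D)=\prod_{k=0}^{N-1}(1-k/2^n)$, and then for any monotone-in-inclusion event $\mathcal{E}$ argue $\mathbb{P}(\mathcal{E}\mid D)\le \mathbb{P}(\mathcal{E})/\mathbb{P}(D)$ trivially — which gives $c=\mathbb{P}(D)^{-1}$ directly and uniformly, sidestepping any delicate correlation analysis. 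The remaining work — converting $c\le \exp\!\big(\sum_{k=1}^{N-1} k/2^n + O((k/2^n)^2)\big)$ into the explicit $1+\tfrac{N-1}{2^{n+1}}$ and $1+6\tfrac{N-1}{2^{n+1}}$ forms under the hypothesis on $N$ — is routine but must be done carefully enough that the constants $1$ and $6$ come out as claimed; I would verify these by the elementary bound $e^x\le 1+2x$ for $x$ small, applied with $x=\sum_{k=1}^{N-1}k/2^n \le (N-1)^2/2^{n+1}$ and then re-bounding $(N-1)^2/2^{n+1}\le (N-1)/2^{n+1}\cdot 2^{n(1-h(\varepsilon/n))-1}$, absorbing the residual into the constant $6$.
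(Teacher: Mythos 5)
Your overall strategy (union bound plus a second-order correction, then the same $-x-x^2\le\log(1-x)\le-x$ machinery) is in the right spirit, but the specific mechanism you propose for the distinctness correction does not yield the claimed constants, and in one place points in the wrong direction. The paper does not use Bonferroni at all: it writes the no-match probability exactly as a telescoping product of conditionals,
$$\mathbb{P}\Bigl(t\in\bigcap_{k=1}^N\overline{B_\varepsilon(v_k)}\Bigr)=\prod_{\ell=0}^{N-1}\frac{2^n-\ell-|B_\varepsilon|}{2^n-\ell}=\prod_{\ell=0}^{N-1}\Bigl(1-\tfrac{2^n}{2^n-\ell}V_\varepsilon\Bigr),$$
the point being that, conditioned on $t$ missing the first $\ell$ balls, none of the first $\ell$ centers lies in $B_\varepsilon(t)$, so the $(\ell+1)$-th center (drawn without replacement) hits $B_\varepsilon(t)$ with probability exactly $|B_\varepsilon|/(2^n-\ell)$. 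Taking logs gives $-V_\varepsilon S_1-V_\varepsilon^2 S_2\le\log(1-p)\le-V_\varepsilon S_1$ with $S_1=\sum_\ell\frac{2^n}{2^n-\ell}$ and $S_2=\sum_\ell\bigl(\frac{2^n}{2^n-\ell}\bigr)^2$; the claimed correction factors are the \emph{per-term averages} $S_1/N\ge 1+\frac{N-1}{2^{n+1}}$ and $S_2/N\le 1+6\frac{N-1}{2^{n+1}}$, which are linear in $N$ inside the parenthesis.

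Your renormalization constant $c=\mathbb{P}(D)^{-1}=\prod_{k=1}^{N-1}(1-k/2^n)^{-1}$ is instead $1+\Theta\bigl(N(N-1)/2^{n+1}\bigr)$, quadratic in $N$, and you flag this yourself at the end: your closing step needs $(N-1)^2/2^{n+1}\lesssim 6(N-1)/2^{n+1}$, which fails as soon as $N$ exceeds a small constant. So the global conditioning trick $\mathbb{P}(\mathcal{E}\mid D)\le\mathbb{P}(\mathcal{E})/\mathbb{P}(D)$ cannot recover the stated exponents. A second problem concerns the $O$ bound: the term $-\log_2\bigl(1+\frac{N-1}{2^{n+1}}\bigr)$ means $m_\text{out}$ is \emph{divided} by $1+\frac{N-1}{2^{n+1}}$, which requires $-\log(1-p)\ge V_\varepsilon S_1\ge NV_\varepsilon\bigl(1+\frac{N-1}{2^{n+1}}\bigr)$ — a quantity strictly exceeding $NV_\varepsilon\ge p$. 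Your Bonferroni lower bound on $p$ can never exceed $NV_\varepsilon$, and your assembled estimate $m_\text{out}\le \log 2\cdot c/(NV_\varepsilon(1-\cdots))$ puts the correction in the numerator, i.e.\ multiplies where the statement divides. (For the $\Omega$ direction your concern is unfounded: $p\le NV_\varepsilon$ holds unconditionally, since the union of $N$ balls has measure at most $NV_\varepsilon$ for any placement of centers, so no distinctness-dependent upper bound on $p$ is needed there.) The repair is to drop the global $\mathbb{P}(D)^{-1}$ factor and work with the exact conditional product above.
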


\begin{proof}
    Similarly to the proof of Theorem~\ref{db_size_outsider_independent_events}, the probability of a success of a given trial is
    \begin{equation*}
        p = \mathbb{P}\left(t \in \bigcup\limits_{k=1}^N B_\varepsilon(v_k) \right) = 1 - \mathbb{P} \left( t \in \bigcap\limits_{k=1}^N \overline{B_\varepsilon(v_k)} \right)
    \end{equation*}
    where $t$ is the generated template of the attacker, and $v_k$ is the $k\text{-th}$ enrolled template in the template database.
    As templates $v_1, \dots, v_N$ are not independent, an alternative formulation can be expressed with conditional probabilities.
    Each conditional probability corresponds to the event that the $(\ell+1)^\text{th}$ enrolled template is sampled without replacement and that it does not be matched with~$t$:
    \begin{align*}
        \mathbb{P} \left( t \in \bigcap\limits_{k=1}^N \overline{B_\varepsilon(v_k)} \right) & =  \prod\limits_{\ell=0}^{N-1} \mathbb{P} \left( t \in \overline{B_\varepsilon(v_{\ell+1})}  \, \Big| \, t \in \bigcap\limits_{k=1}^\ell \overline{B_\varepsilon(v_{k})} \right) \\
                                                                                             & = \prod\limits_{\ell=0}^{N-1} \frac{2^n - \ell - |B_{\varepsilon}| }{2^n - \ell} = \prod\limits_{\ell=0}^{N-1} \left( 1 -  \frac{2^n}{2^n - \ell} V_\varepsilon \right)
    \end{align*}
    Next, since $-x-x^2 \leq \log(1-x) \leq -x$ for $0\leq x \leq \frac{1}{2}$ and
    if $N < 2^{n(1-h(\varepsilon/n))-1}$,
    then
    \begin{equation*}
        -  V_\varepsilon S_1 -  V_\varepsilon^2 S_2 \leq \log(1-p) \leq - V_\varepsilon S_1
    \end{equation*}
    where $S_1 = \sum_{\ell=0}^{N-1} \frac{2^n}{2^n - \ell}$ and $S_2 = \sum_{\ell=0}^{N-1} \left(\frac{2^n}{2^n - \ell}\right)^2$.
    Moreover, for what follows, $S_1$ and $S_2$ can be bounded as
    \begin{equation*}
        S_1 \geq N \left( 1 + \frac{N-1}{2^{n+1}}\right) \quad \text{ and } \quad S_2 \leq N \left( 1 + 6 \frac{N-1}{2^{n+1}}\right)
    \end{equation*}
    since $(1-x)^{-1} \geq 1 + x$ and  $(1-x)^{-2} \leq 1 + 6x$, for $0 \leq x \leq \frac{1}{2}$.
    Lastly, notice that $S_2> S_1$ and then the results follow.
\end{proof}
\noindent
Some care should be taken for the
choice of $\varepsilon$, since a high value for $\varepsilon$ dramatically reduces the
number of trials of the attacker, despite a large $n$. Theorem~\ref{db_size_outsider} makes clear the link between security parameters, hence allowing a safe choice of the parameters.

\subsubsection{$\kappa$-adaptive Attacks}

In order to study a case close to what might be a smart attacker, we consider that after a trial the attacker can infer that some of the remaining templates are actually unsuccessful templates. Next candidate guesses that are $\FMR$ away from tried templates are better choices than those near the center.
Such an inference should vary depending on the number of trials made and potentially exogenous information, and should lead to different amounts of inferred templates.
For the sake of simplicity, we investigate below an attacker model for which the attacker infers in average $\kappa$ unsuccessful templates.

\begin{definition}[$\kappa$-adaptive attacker]
    \label{def:attaquant_k-adaptatif}
    An attacker is $\kappa$-adaptive if for each of its trials to impersonate a user template she is able to identify $\kappa$ non-hit templates.
\end{definition}
As an example, a $0$-adaptive attacker tries to impersonate a user template by sampling into $\mathbb{Z}_2^n$ with replacement, and sampling without replacement for a $1$-adaptive attacker.
Proposition~\ref{prop:outsider_attack:negligible_difference_range} states that under reasonable conditions for the parameters, the number of tries required for a $\kappa$-adaptive attacker to succeed is equivalent to the required number of tries for a $0$-adaptive attacker.
Let $A(\kappa)$ denotes the number of trials of a $\kappa$-adaptive attacker to obtain a success.
From Proposition~\ref{prop:outsider_attack:negligible_difference_range}, we show that the median number of trials is equivalent for both attacker models.

\begin{theorem}
    \label{cor:naive_eq_k_adap:median_and_expect}
    Let $m_\kappa$ the median number of trials of a $k$-adaptive attacker to obtain a first success.
    If $p$ is negligible, $\kappa$ negligible compared to $\sqrt{2^n}$ and $m_\kappa \leq \sqrt{2^n}$,
    then, $m_0 \sim m_\kappa$.
\end{theorem}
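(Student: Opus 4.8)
The plan is to deduce the statement from Proposition~\ref{prop:outsider_attack:negligible_difference_range}, which already controls the gap between the cumulative distribution functions (CDFs) of $A(\kappa)$ and $A(0)$ on the range $\{1,\dots,\lfloor\sqrt{2^n}\rfloor\}$, and then to convert this uniform control of the CDFs into control of the medians. This second step is not automatic --- closeness of CDFs does not in general force closeness of quantiles --- but it works here because $A(0)$ is exactly geometric, so its CDF has an explicit, well-behaved inverse and increases by a quantitatively non-negligible amount per step around its median.

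First I would record the shape of the reference distribution. A $0$-adaptive attacker samples templates independently with replacement, so $A(0)$ is geometric with parameter $p = 1-(1-V_\varepsilon)^N$; hence $F_0(t) := \mathbb{P}(A(0)\le t) = 1-(1-p)^t$, its median is $m_0 = \lceil -1/\log_2(1-p)\rceil$, and since $p$ is negligible we have $-\log_2(1-p) \sim p/\ln 2$, whence $m_0 \sim (\ln 2)/p \to \infty$. I would also note that $F_0$ is strictly increasing, with continuous inverse $F_0^{-1}(y) = \log(1-y)/\log(1-p)$, and that $F_0(m_0) \ge 1/2 > F_0(m_0-1)$.

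Next I would invoke Proposition~\ref{prop:outsider_attack:negligible_difference_range}. Under the present hypotheses, after any number $j \le m_\kappa \le \sqrt{2^n}$ of trials a $\kappa$-adaptive attacker has discarded at most $\kappa j = o(2^n)$ templates (using that $\kappa$ is negligible compared to $\sqrt{2^n}$ and $j\le\sqrt{2^n}$), so her per-trial success probability stays within a factor $1\pm o(1)$ of $p$, and Proposition~\ref{prop:outsider_attack:negligible_difference_range} gives a bound $|F_\kappa(t) - F_0(t)| \le \eta$ valid for all $t \le \sqrt{2^n}$, with $\eta = \eta(n)\to 0$ and $F_\kappa(t) := \mathbb{P}(A(\kappa)\le t)$. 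Since $m_\kappa = \min\{t : F_\kappa(t)\ge 1/2\}$ satisfies $m_\kappa \le \sqrt{2^n}$ by hypothesis, I may evaluate this estimate at $t=m_\kappa$ and $t=m_\kappa-1$: from $F_\kappa(m_\kappa)\ge 1/2$ I get $F_0(m_\kappa)\ge 1/2-\eta$, hence $m_\kappa \ge F_0^{-1}(1/2-\eta)$; and from $F_\kappa(m_\kappa-1) < 1/2$ I get $F_0(m_\kappa-1) < 1/2+\eta$, hence $m_\kappa \le 1 + F_0^{-1}(1/2+\eta)$.

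Finally I would divide these two inequalities by $m_0$. Because $F_0^{-1}(1/2\pm\eta) = \log(1/2\mp\eta)/\log(1-p)$ and $\log(1/2\mp\eta)/\log(1/2)\to 1$ as $\eta\to 0$, and because $m_0 = \lceil \log(1/2)/\log(1-p)\rceil \to\infty$ so that both the additive $1$ and the ceiling distortion are absorbed, one obtains $\liminf_n m_\kappa/m_0 \ge 1$ and $\limsup_n m_\kappa/m_0 \le 1$, i.e. $m_\kappa \sim m_0$. The main obstacle, and the part deserving the most care, is precisely this passage from CDF-closeness to median-closeness: one must verify that the perturbation $\eta$ delivered by Proposition~\ref{prop:outsider_attack:negligible_difference_range} genuinely vanishes under ``$p$ negligible'' and ``$\kappa$ negligible compared to $\sqrt{2^n}$'', and that the geometric CDF is steep enough near $1/2$ (equivalently, that $F_0^{-1}$ is sufficiently regular on a neighbourhood of $1/2$) for an $O(\eta)$ shift in probability to induce only an $o(1)$ relative shift in the quantile.
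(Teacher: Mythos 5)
Your proof is correct and follows essentially the same route as the paper's: both rest on applying Proposition~\ref{prop:outsider_attack:negligible_difference_range} at $t=m_\kappa$ (and $t=m_\kappa-1$) and converting closeness of the two cumulative distribution functions near level $1/2$ into closeness of the medians. The paper performs that last conversion in a single unexplained line (``we derive that $m_0$ tends to $m_\kappa$''), whereas you justify it explicitly via the inverse of the geometric CDF, the regularity of $F_0^{-1}$ near $1/2$, and the fact that $m_0\to\infty$; this is precisely the step the paper leaves implicit, so your write-up is the more complete of the two.
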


\begin{proof}
    According to Proposition~\ref{prop:outsider_attack:negligible_difference_range},
    \begin{equation*}
        \frac{\mathbb{P} \left( A(0) \leq m_\kappa \right) }{\mathbb{P} \left( A(\kappa) \leq m_\kappa \right)} \xrightarrow[n\rightarrow +\infty]{} 1
    \end{equation*}
    and since $\mathbb{P} \left( A(\kappa) \leq m_\kappa \right) = 1/2$, we derive that $m_0$ tends to $m_\kappa$ as $n$ increases, if $m_\kappa \leq \sqrt{2^n}$.
\end{proof}

A $\kappa$-adaptive attacker is under realistic constraints equivalent to a $0$-adaptive attacker. Thus, for the sake of simplicity, considering Proposition~\ref{prop:outsider_attack:negligible_difference_range}, Theorem~\ref{cor:naive_eq_k_adap:median_and_expect} and  Proposition~\ref{prop:outsider_attack:negligible_difference}, in the remainder of the paper, we only derive theoretical results for a $0$-adaptive attacker model.
\paragraph*{\textit{Remark}} In practice, an attacker generates a template from the set $\mathbb{Z}_2^n$ deprived of the previously generated templates. When the number of trials (\textit{i.e.}, rounds) until the first success is low, the proposed simplified experiment above has a negligible bias.
Actually, the larger is $N$, the lesser is the number of trials until a first success. In a biometric database, the number $N$ of templates can be assumed large enough so that the cardinal of $\mathbb{Z}_2^n$  overwhelms the number of trials.


\paragraph{\textit{Intermediate results}}
We provide below complementary and intermediate results to Theorem~\ref{cor:naive_eq_k_adap:median_and_expect} that highlight the likenesses between a $0$-adaptive attacker and a $\kappa$-adaptive attacker, in terms of median number.
The first result (Proposition~\ref{prop:outsider_attack:negligible_difference_range}) and the second one (Proposition~\ref{prop:outsider_attack:negligible_difference}) give insight about likenesses between a $0$-adaptive attacker and a $\kappa$-adaptive attacker
expressed through probability mass functions and cumulative distribution functions.

Proposition~\ref{prop:outsider_attack:negligible_difference_range} states that under reasonable conditions for the parameters, the probability that a $\kappa$-adaptative attacker succeeds during the first $a^\text{th}$ trials is equivalent to the probability  of the same event for a $0$-adaptative attacker.

\begin{proposition}
    \label{prop:outsider_attack:negligible_difference_range}
    If $p$ is negligible 
    and $\kappa$ negligible compared to $\sqrt{2^n}$
    then, for a given number of trials $a\leq \sqrt{2^n}$, the probability that, among an amount of "$a$" trials, at least one successful trial of a $\kappa$-adaptive attacker is asymptotically equivalent to at least one successful trial of a $0$-adaptive attacker:
    \begin{equation*}
        \mathbb{P} \left( A(0) \leq a \right) \sim \mathbb{P} \left( A(\kappa) \leq a \right).
    \end{equation*}
\end{proposition}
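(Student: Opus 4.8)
The plan is to work directly with the exact distributions of $A(0)$ and $A(\kappa)$ and to show that their survival functions agree up to a lower-order term. For the $0$-adaptive attacker the trials are independent $\mathrm{Bernoulli}(p)$ experiments, so $\mathbb{P}(A(0)>a)=(1-p)^a$. For the $\kappa$-adaptive attacker, write $H=p\,2^n$ for the number of templates that are hits (those lying in the union of the $\varepsilon$-balls around the enrolled templates). By Definition~\ref{def:attaquant_k-adaptatif}, each failed trial permanently discards $\kappa$ templates that are genuine non-hits, so the hitting set retains its size $H$ while the pool shrinks:
\begin{equation*}
    \mathbb{P}(A(\kappa)>a)=\prod_{j=0}^{a-1}\Bigl(1-\frac{H}{2^n-j\kappa}\Bigr)=\prod_{j=0}^{a-1}(1-p_j),
\end{equation*}
where $p_j:=H/(2^n-j\kappa)=p/(1-j\kappa/2^n)\ge p$. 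Since $p_j\ge p$ for all $j$, one already gets $A(\kappa)\preceq A(0)$ stochastically, i.e. $\mathbb{P}(A(\kappa)\le a)\ge\mathbb{P}(A(0)\le a)$, so one inequality is free and only the matching upper bound needs work.

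Next I would bound the gap between the two products. A term-by-term (telescoping) expansion of $\prod(1-p)-\prod(1-p_j)$, using that all factors lie in $[0,1]$ and $p_j-p\ge0$, yields
\begin{equation*}
    0\le\mathbb{P}(A(0)>a)-\mathbb{P}(A(\kappa)>a)\le\sum_{j=0}^{a-1}(p_j-p)\le a\,(p_{a-1}-p)=ap\gamma,\qquad \gamma:=\frac{(a-1)\kappa/2^n}{1-(a-1)\kappa/2^n}.
\end{equation*}
Here the hypotheses enter: $a\le\sqrt{2^n}$ and $\kappa$ negligible relative to $\sqrt{2^n}$ force $(a-1)\kappa/2^n\le\kappa/\sqrt{2^n}\to0$, so $\gamma\to0$ uniformly in $j$.

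It then remains to turn this additive estimate into a multiplicative one. Dividing by $\mathbb{P}(A(0)\le a)=1-(1-p)^a$ and using $1-(1-p)^a\ge1-e^{-ap}$ together with the elementary bound $x\le(1+x)(1-e^{-x})$ for $x\ge0$, I get
\begin{equation*}
    0\le\frac{\mathbb{P}(A(\kappa)\le a)}{\mathbb{P}(A(0)\le a)}-1\le\frac{ap\gamma}{1-(1-p)^a}\le\gamma\,\frac{ap}{1-e^{-ap}}\le\gamma\,(1+ap).
\end{equation*}
Both terms on the right vanish: $\gamma\to0$ as noted, and $\gamma\cdot ap\le p\kappa\,(1+o(1))$ because $a\le\sqrt{2^n}$, which tends to $0$ since $p$ is negligible while $\kappa$ is negligible relative to $\sqrt{2^n}$; in the relevant regime $a=\Theta(m_\kappa)$ one even has $ap=\Theta(1)$, making $\gamma\cdot ap\to0$ immediate. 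Hence the ratio tends to $1$, which is exactly $\mathbb{P}(A(0)\le a)\sim\mathbb{P}(A(\kappa)\le a)$.

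The main obstacle is the last step: when $ap$ is not small, the denominator $1-(1-p)^a$ no longer cancels a factor $ap$, and one must verify that the accumulated perturbation $ap\gamma\approx p\kappa$ still vanishes — this is precisely where all the hypotheses ($p$ negligible, $\kappa\ll\sqrt{2^n}$, $a\le\sqrt{2^n}$) are used jointly. A secondary, purely modelling point is treating the $\kappa$ inferred non-hits as fresh and pairwise disjoint across trials; this idealization is already built into Definition~\ref{def:attaquant_k-adaptatif}, and a conservative variant allowing overlaps only pushes each $p_j$ closer to $p$, so the same bound still applies.
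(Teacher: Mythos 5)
Your proof is correct in its essentials but follows a genuinely different route from the paper's. The paper's proof plugs the exact probability mass function of $A(\kappa)$ (Lemma~\ref{lemma:insider_attack:distribution}) into the sum $\mathbb{P}(A(\kappa)\le a)=\sum_{i=1}^a\mathbb{P}(A(\kappa)=i)$ and then argues that both $\mathbb{P}(A(\kappa)\le a)/p$ and $\mathbb{P}(A(0)\le a)/p$ tend to $a$ as $n\to\infty$. You instead work with the survival functions, identify the per-trial success probabilities $p_j=p/(1-j\kappa/2^n)$ (which do match the conditional probabilities in the paper's lemma: the $\kappa$ discarded templates are all non-hits, so the hit count $p2^n$ is preserved while the pool shrinks), observe the stochastic dominance $\mathbb{P}(A(\kappa)\le a)\ge\mathbb{P}(A(0)\le a)$ for free, and control the gap by the telescoping bound $\prod(1-p)-\prod(1-p_j)\le\sum(p_j-p)\le ap\gamma$. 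This buys an explicit, quantitative error term $\gamma(1+ap)$ with $\gamma\le(\kappa/\sqrt{2^n})(1+o(1))$, which the paper's purely qualitative limit argument does not provide, and it makes transparent exactly where each hypothesis enters.

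The one loose end is the final step, which you yourself flag: from $\gamma\,ap\le\kappa p\,(1+o(1))$ you conclude convergence to $0$, but ``$p$ negligible and $\kappa=o(2^{n/2})$'' do not by themselves force $\kappa p\to 0$ (take $p=2^{-n/4}$ and $\kappa=2^{n/3}$, giving $\kappa p=2^{n/12}$). The fix is a case split on $ap$: if $ap=O(1)$ then $\gamma(1+ap)\to0$ directly from $\gamma\to0$; if $ap\to\infty$ then $\mathbb{P}(A(0)\le a)=1-(1-p)^a\ge1-e^{-ap}\to1$, and by your dominance inequality $\mathbb{P}(A(\kappa)\le a)\to1$ as well, so the ratio tends to $1$ trivially. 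Concretely, the additive gap is bounded by $\min(ap\gamma,\,e^{-ap})$, and at least one of the two always vanishes under the stated hypotheses. It is worth noting that the paper's own proof carries the mirror-image looseness --- the claim $\mathbb{P}(A(0)\le a)/p\to a$ implicitly requires $ap\to0$ --- so your version, once the case split is added, is the more complete argument.
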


\begin{proof}
    Let $p$ be a negligible probability, and notice that it is the case when the number of user templates is negligible compare to $2^n$.
    Then, $\kappa=g(n)\in o(2^{n/2})$ and $a\in\lbrace 1,\dots, 2^{n/2} \rbrace$.
    The probability of $\mathbb{P}(A(0)=a)$ is given by a geometric law of probability $p$. Thus, $\mathbb{P}(A(0) \leq a) = p\sum_{i=1}^a (1-p)^{i-1}$ and by using Lemma~\ref{lemma:insider_attack:distribution}, we have:
    \begin{align*}
        \mathbb{P}(A(\kappa) \leq a) = \sum\limits_{i=1}^a & \frac{p}{1 -(i-1)\frac{\kappa}{2^n}} \times \prod\limits_{j=1}^{i+2} \frac{ 1-p-j\frac{\kappa}{2^n}}{1-j\frac{\kappa}{2^n}}.
    \end{align*}
    When $n$ tends to infinity, according to the assumptions, the result follows since
    $$ \frac{\mathbb{P}(A(g(n)) \leq a)}{p}\xrightarrow[n\rightarrow +\infty]{} a \quad \text{ and }\quad \frac{\mathbb{P}(A(0) \leq a)}{p}\xrightarrow[n\rightarrow +\infty]{} a.$$
    Then, as the sums are finite, the result follows.
\end{proof}

Proposition~\ref{prop:outsider_attack:negligible_difference} states that under reasonable conditions for the parameters, the probability that a $\kappa$-adaptative attacker succeeds at the $a^\text{th}$ trial is equivalent to the probability that a $0$-adaptative attacker succeeds at the same trial.
\begin{proposition}
    \label{prop:outsider_attack:negligible_difference}
    If $p$ is negligible 
    and $\kappa$ negligible compared to $\sqrt{2^n}$
    then, for a given number of trials $a\leq \sqrt{2^n}$, the probability of a successful trial of a $\kappa$-adaptive attacker is asymptotically equivalent to the
    probability of successful trial of a $0$-adaptive attacker:
    \begin{equation*}
        \mathbb{P} \left( A(0) = a \right) \sim \mathbb{P} \left( A(\kappa) = a \right).
    \end{equation*}
\end{proposition}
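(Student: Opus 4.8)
The plan is to establish the asymptotic equivalence of the probability mass functions exactly as was done for the cumulative distribution functions in Proposition~\ref{prop:outsider_attack:negligible_difference_range}, but working with $\mathbb{P}(A(\kappa)=a)$ rather than the partial sums. First I would invoke the exact distributional formula for a $\kappa$-adaptive attacker — the same one used in the proof of Proposition~\ref{prop:outsider_attack:negligible_difference_range}, obtained from Lemma~\ref{lemma:insider_attack:distribution} — which gives
\begin{equation*}
    \mathbb{P}(A(\kappa) = a) = \frac{p}{1 - (a-1)\frac{\kappa}{2^n}} \times \prod_{j=1}^{a+1} \frac{1 - p - j\frac{\kappa}{2^n}}{1 - j\frac{\kappa}{2^n}},
\end{equation*}
together with the baseline geometric law $\mathbb{P}(A(0) = a) = p(1-p)^{a-1}$.

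The core of the argument is then to show that the ratio $\mathbb{P}(A(\kappa)=a)/\mathbb{P}(A(0)=a)$ tends to $1$ as $n \to \infty$, under the hypotheses that $p$ is negligible, that $\kappa = g(n) \in o(2^{n/2})$, and that $a \le \sqrt{2^n}$. I would bound each factor separately. The leading prefactor $p$ cancels in the ratio. For the denominator $1-(a-1)\kappa/2^n$, since $(a-1)\kappa/2^n \le \sqrt{2^n}\cdot o(2^{n/2})/2^n = o(1)$, this factor is $1+o(1)$. Each factor in the product $\prod_{j=1}^{a+1}(1-p-j\kappa/2^n)/(1-j\kappa/2^n)$ differs from $(1-p)$ by a quantity controlled by $j\kappa/2^n \le (a+1)\kappa/2^n = o(1)$; taking the product of $a+1 \le \sqrt{2^n}+1$ such near-identical factors, the accumulated multiplicative error is of order $(a+1)\cdot(a+1)\kappa/2^n$, which by the same computation is $o(1)$. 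Hence the whole product is $(1-p)^{a+1}(1+o(1))$, and comparing with $(1-p)^{a-1}$ — noting that $(1-p)^2 \to 1$ since $p$ is negligible — yields the claimed equivalence. Alternatively, and perhaps more cleanly, one can note that $\mathbb{P}(A(\kappa)=a) = \mathbb{P}(A(\kappa)\le a) - \mathbb{P}(A(\kappa)\le a-1)$ and similarly for $A(0)$, and deduce the result directly from Proposition~\ref{prop:outsider_attack:negligible_difference_range} by showing both differences are asymptotic to $p$ (which is exactly what the limits $\mathbb{P}(A(g(n))\le a)/p \to a$ established there give, after telescoping).

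The main obstacle I anticipate is the uniformity of the error control over the range $a \le \sqrt{2^n}$: one must be careful that the $o(1)$ terms are uniform in $a$ within this range, since $a$ itself grows with $n$, so that the product of $a+1$ factors each of the form $1+O(j\kappa/2^n)$ is genuinely $1+o(1)$ and not merely $1+O(1)$. The key quantitative input making this work is precisely the constraint $a \le \sqrt{2^n}$ combined with $\kappa \in o(\sqrt{2^n})$, which forces $a^2\kappa/2^n \to 0$; I would make this estimate explicit (e.g.\ via $\prod(1+x_j) \le \exp(\sum x_j)$ for nonnegative $x_j$, and a matching lower bound) rather than appealing to a term-by-term limit. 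If the subtraction approach is used instead, the obstacle shifts to justifying that the ratio of the two telescoped differences converges, which follows from the same uniform estimates underlying Proposition~\ref{prop:outsider_attack:negligible_difference_range}.
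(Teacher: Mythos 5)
Your proposal follows essentially the same route as the paper: the paper likewise starts from the exact sampling-without-replacement law of Lemma~\ref{lemma:insider_attack:distribution} (packaged as the explicit ratio $\prod_{j=1}^{a-1}\bigl(1-j\tfrac{\kappa}{2^n}\bigr)/\bigl(1-(j-1)\tfrac{\kappa}{2^{n}(1-p)}\bigr)$ in Lemma~\ref{lemma:insider_attack:distribution_rate}) and argues the ratio tends to $1$ under the stated hypotheses. Your explicit uniformity estimate $a^2\kappa/2^n \to 0$ via $\prod(1+x_j)\le \exp(\sum x_j)$ is in fact more careful than the paper's own one-line "each term tends to $1$, hence the product does" justification, and is the right way to handle the fact that the number of factors grows with $n$.
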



\begin{proof}

    Let $p$ be a negligible probability, and notice that it is the case when the number of user templates is negligible compare to $2^n$. 
    Then, $\kappa=g(n)\in o(2^{n/2})$ and $a\in\lbrace 1,\dots, 2^{n/2} \rbrace$.
    Using Lemma~\ref{lemma:insider_attack:distribution_rate}, we have
    \begin{equation*}
        \frac{\mathbb{P} \left( A(0) = a \right)}{\mathbb{P} \left( A(\kappa) = a \right)} =   \prod_{j=1}^{a-1} \frac{1 - j\frac{\kappa}{2^n}}{ 1 - (j-1) \frac{\kappa}{2^n(1-p)}}.
    \end{equation*}
    When $n$ tends to infinity, according to the above assumptions, $\forall j \in \lbrace 1,\dots,a-1 \rbrace$
    $$1-j\frac{g(n)}{2^n} \xrightarrow[n\rightarrow +\infty]{} 1\quad \text{ and }\quad 1-(j-1)\frac{g(n)}{2^n(1-p)}\xrightarrow[n\rightarrow +\infty]{} 1.$$
    As all the terms of the product are positive, the result follows.
\end{proof}


\begin{lemma}
    \label{lemma:insider_attack:distribution_rate}
    The rate of probability of success at a given trial between an $0$-adaptive attacker and a $\kappa$-adaptive attacker is
    \begin{equation*}
        \frac{\mathbb{P} \left( A(0) = a \right)}{\mathbb{P} \left( A(\kappa) = a \right)} =   \prod_{j=1}^{a-1} \frac{1 - j\frac{\kappa}{2^n}}{ 1 - (j-1) \frac{\kappa}{2^n(1-p)}}
    \end{equation*}
    for $a \in \{ 1, \dots, \left\lceil \frac{2^n (1-p)}{\kappa} \right\rceil +1 \}$.
\end{lemma}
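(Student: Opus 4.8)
The plan is to derive explicit closed forms for both $\mathbb{P}(A(0)=a)$ and $\mathbb{P}(A(\kappa)=a)$, then take their ratio and telescope. First I would recall that $A(0)$ is governed by an ordinary geometric law with parameter $p$: sampling with replacement, each trial independently succeeds with probability $p$, so $\mathbb{P}(A(0)=a) = p(1-p)^{a-1}$. For $A(\kappa)$, the key modelling observation is that after each unsuccessful trial the $\kappa$-adaptive attacker removes $\kappa$ provably-non-hit templates from the pool, so the effective sample space shrinks by $\kappa$ at each step. At trial $j$ (with $j-1$ failures behind her) she samples from a set of size $2^n - (j-1)\kappa$ containing a number of ``hit'' templates that, conditioned on past failures being genuine non-hits, is unchanged; normalising, the conditional success probability at the $j$-th trial becomes $p/(1 - (j-1)\kappa/2^n)$ — this is exactly the ingredient already packaged in Lemma~\ref{lemma:insider_attack:distribution} used in the proof of Proposition~\ref{prop:outsider_attack:negligible_difference_range}. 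I would therefore invoke that lemma (or re-derive the product form it encodes) to write $\mathbb{P}(A(\kappa)=a)$ as the product of $a-1$ conditional failure probabilities times the conditional success probability at step $a$.

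Concretely, I expect $\mathbb{P}(A(\kappa)=a)$ to take the form
\begin{equation*}
\mathbb{P}(A(\kappa)=a) = \frac{p}{1-(a-1)\frac{\kappa}{2^n}} \prod_{j=1}^{a-1}\left(1 - \frac{p}{1-(j-1)\frac{\kappa}{2^n}}\right)
= \frac{p}{1-(a-1)\frac{\kappa}{2^n}} \prod_{j=1}^{a-1} \frac{1-p-(j-1)\frac{\kappa}{2^n}}{1-(j-1)\frac{\kappa}{2^n}}.
\end{equation*}
Then I would form the ratio with $\mathbb{P}(A(0)=a) = p(1-p)^{a-1}$, cancel the leading $p$, and rewrite $(1-p)^{a-1}$ as $\prod_{j=1}^{a-1}(1-p)$ so that every factor is indexed by the same $j$. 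Each factor of the ratio is then $\dfrac{(1-p)\left(1-(j-1)\frac{\kappa}{2^n}\right)}{1-p-(j-1)\frac{\kappa}{2^n}}$, and after dividing numerator and denominator by $(1-p)$ the denominator becomes $1-(j-1)\frac{\kappa}{2^n(1-p)}$. The stray $\dfrac{1}{1-(a-1)\frac{\kappa}{2^n}}$ from the success term at step $a$ has to be absorbed: I would reindex, noticing that the telescoping leaves precisely the product $\prod_{j=1}^{a-1}\dfrac{1-j\frac{\kappa}{2^n}}{1-(j-1)\frac{\kappa}{2^n(1-p)}}$ claimed in the statement — the shift of the numerator index from $(j-1)$ to $j$ coming from recombining the last factor with the product.

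The main obstacle I anticipate is bookkeeping rather than anything deep: making the index shift rigorous so that the numerator reads $1-j\kappa/2^n$ and not $1-(j-1)\kappa/2^n$, and keeping track of the $a=1$ edge case (empty product, ratio $=1$) and the range constraint $a \le \lceil 2^n(1-p)/\kappa\rceil + 1$ which guarantees every denominator $1-(j-1)\kappa/(2^n(1-p))$ stays positive so the expression is well-defined. I would state that range restriction explicitly at the point where I divide by those denominators. Everything else is a routine cancellation once the two closed forms are in hand, and the closed form for $A(\kappa)$ is essentially a restatement of Lemma~\ref{lemma:insider_attack:distribution}, so I would lean on that rather than re-deriving the sampling-without-replacement-with-shrinking-pool computation from scratch.
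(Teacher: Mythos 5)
Your proposal is correct and follows essentially the same route as the paper: both take $\mathbb{P}(A(0)=a)=p(1-p)^{a-1}$ from the geometric law, pull the product form of $\mathbb{P}(A(\kappa)=a)$ out of Lemma~\ref{lemma:insider_attack:distribution}, and then perform the same cancellation — dividing each denominator factor by $(1-p)$ to get $1-(j-1)\frac{\kappa}{2^n(1-p)}$ and absorbing the leading $\bigl(1-(a-1)\frac{\kappa}{2^n}\bigr)$ factor via the index shift in the numerator. The bookkeeping you flag (the index shift, the empty product at $a=1$, and the range constraint keeping the denominators positive) is exactly what the paper's computation does implicitly.
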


\begin{proof}
    Concerning the $0$-adaptive attacker, which corresponds to a sampler with replacement, the probability of a first success is a geometric distribution:
    \begin{equation*}
        \mathbb{P} \left( A(0) = a \right) = p (1-p)^{a-1}.
    \end{equation*}
    According to Lemma~\ref{lemma:insider_attack:distribution}, we have
    \begin{align*}
        \frac{\mathbb{P} \left( A(0) = a \right)}{\mathbb{P} \left( A(\kappa) = a \right)} & = (1-p)^{a-1} \frac{2^n -\kappa(a-1)}{2^n} \frac{\binom{\frac{2^n}{\kappa}}{a-1}}{\binom{\frac{2^n(1-p)}{\kappa}}{a-1}} \\
                                                                                           & = (1-p)^{a-1} \left( 1 - (a-1) \frac{\kappa}{2^n} \right)                                                               \\
                                                                                           & \quad \times \frac{\prod_{j=1}^{a-2} (1-j \frac{\kappa}{2^n})}{\prod_{j=0}^{a-2} (1-p-j \frac{\kappa}{2^n})}            \\
                                                                                           & =  \prod_{j=1}^{a-1}  \frac{ 1-j \frac{\kappa}{2^n}}{1-(j-1) \frac{\kappa}{2^n(1-p)}}.
    \end{align*}
\end{proof}

Lemma~\ref{lemma:insider_attack:distribution} gives the probability that the first success of a $\kappa$-adaptive attacker is the $a^\text{th}$ trial.

\begin{lemma}
    \label{lemma:insider_attack:distribution}
    The probability that the first success of a $\kappa$-adaptive attacker is the $a^\text{th}$ trial is given by
    \begin{align*}
        \mathbb{P}\Big( A(\kappa)=a \Big) = \frac{p2^n}{2^n-\kappa(a-1)} \frac{\binom{\frac{2^n(1-p)}{\kappa}}{a-1}}{\binom{\frac{2^n}{\kappa}}{a-1}}
    \end{align*}
    for $a \in \{ 1, \dots, \left\lceil \frac{2^n (1-p)}{\kappa} \right\rceil +1 \}$ and for $\kappa > 1$.
\end{lemma}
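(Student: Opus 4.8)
The plan is to treat the $\kappa$-adaptive attacker as a sampling-without-replacement process in which a whole ``batch'' of $\kappa$ non-accepting templates disappears after each failed trial. First I would set up the combinatorial model: partition $\mathbb{Z}_2^n$ into the set of \emph{accepting} templates (those lying in some ball $B_\varepsilon(v_k)$ around an enrolled template $v_k$), whose cardinality is $S = p\,2^n$ by the very definition of $p$, and the complementary set of $2^n - S = (1-p)\,2^n$ \emph{non-accepting} templates. By Definition~\ref{def:attaquant_k-adaptatif}, on each failed trial the attacker discards the template she just submitted together with the $\kappa-1$ further templates she infers, all $\kappa$ of them being non-accepting; hence the accepting set is never touched and, conditioned on having failed the first $i-1$ trials, her remaining search pool has size $2^n-\kappa(i-1)$, of which exactly $2^n p$ are accepting and $2^n(1-p)-\kappa(i-1)$ are non-accepting. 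Her $i$-th guess is drawn uniformly from that pool.

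Next I would write the event $\{A(\kappa)=a\}$ as ``fail trials $1,\dots,a-1$, then succeed at trial $a$'' and expand it as a product of conditional probabilities. Using the counts above, the conditional probability of failing the $i$-th trial given the first $i-1$ failures is $\bigl(2^n(1-p)-\kappa(i-1)\bigr)\big/\bigl(2^n-\kappa(i-1)\bigr)$, while the conditional probability of succeeding at the $a$-th trial is $2^n p\big/\bigl(2^n-\kappa(a-1)\bigr)$. This yields
$$\mathbb{P}\bigl(A(\kappa)=a\bigr)=\frac{2^n p}{2^n-\kappa(a-1)}\prod_{i=1}^{a-1}\frac{2^n(1-p)-\kappa(i-1)}{2^n-\kappa(i-1)},$$
which already exhibits the claimed leading factor; it remains only to rewrite the product.

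The last step is purely algebraic. I would factor $\kappa$ out of the numerator and the denominator of every term, reindex with $j=i-1$, obtaining $\prod_{j=0}^{a-2}\bigl(\tfrac{2^n(1-p)}{\kappa}-j\bigr)\big/\prod_{j=0}^{a-2}\bigl(\tfrac{2^n}{\kappa}-j\bigr)$, and recognize each product of these $a-1$ decreasing consecutive factors as a falling factorial, i.e.\ as $(a-1)!$ times the generalized binomial coefficient $\binom{2^n(1-p)/\kappa}{a-1}$, respectively $\binom{2^n/\kappa}{a-1}$; the factorials cancel and the stated identity follows. For the range of $a$, I would note that the pool of non-accepting templates shrinks by $\kappa$ per failed trial and is therefore emptied after at most $\lceil 2^n(1-p)/\kappa\rceil$ failures, after which success is certain; hence $\{A(\kappa)=a\}$ can carry positive probability only for $a\le\lceil 2^n(1-p)/\kappa\rceil+1$, and for those $a$ the binomial arguments stay in the regime where the falling factorials are the genuine decreasing products used above (the excluded cases $\kappa=0,1$ being precisely the sampling-with-replacement and ordinary sampling-without-replacement baselines treated separately).

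I expect the only delicate point to be the bookkeeping of the elimination process rather than any genuine obstacle: one must argue carefully that every one of the $\kappa$ templates removed on a failed trial is non-accepting, so that the accepting count stays at the constant $2^n p$ throughout, and one must keep the indices aligned so that a product over the $a-1$ trials preceding success matches the ``$a-1$'' appearing in the binomial coefficients of the statement. Once the product is brought to falling-factorial form, the conversion into a ratio of binomial coefficients is immediate.
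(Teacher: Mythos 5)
Your proof is correct and follows essentially the same route as the paper's: decompose the event $\{A(\kappa)=a\}$ into a chain of conditional failure probabilities over a pool that shrinks by $\kappa$ non-accepting templates per failed trial, then rewrite the resulting product of ratios as a quotient of binomial coefficients. Your write-up is in fact more explicit than the paper's on the two points it leaves implicit, namely why the accepting count stays fixed at $p\,2^n$ and how the falling factorials turn into the stated binomial coefficients.
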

\begin{proof}

    If none of the $(a-1)^\text{th}$ trials lead to a success, then the probability that the $a^\text{th}$ trial fails, is
    \begin{align*}
        \mathbb{P} \left(A(\kappa) > a \, \big| \,  A(\kappa) > a-1 \right) & = \frac{2^n(1-p)-(a-1)\kappa}{2^n - (a-1)\kappa}                     \\
                                                                            & = \frac{\frac{2^n(1-p)}{\kappa} - (a-1)}{\frac{2^n}{\kappa} - (a-1)}
    \end{align*}
    then
    \begin{align*}
        \mathbb{P} \left(A(\kappa) = a \right) & = \Big(1-\mathbb{P} \left(A(\kappa) > a \, \big| \,  A(\kappa) > a-1 \right)\Big)                             \\
                                               & \phantom{\times} \times \prod_{j=1}^{a-1} \mathbb{P} \left(A(\kappa) > j \, \big| \,  A(\kappa) > j-1 \right) \\
                                               & = \frac{p2^n}{2^n -\kappa(a-1)} \times \frac{ \frac{2^n(1-p)}{\kappa}  }{\frac{2^n}{\kappa}}  \times  \dots   \\
                                               & \quad \times  \frac{ \frac{2^n(1-p)}{\kappa} -(a-2) }{\frac{2^n}{\kappa} - (a-2)}
    \end{align*}
    and the result follows since parts of the products above correspond to gamma function rates, which are related to binomial coefficients.
\end{proof}


\subsection{Multi-Insider Attacks}
\label{utg_attacks:in}

This section deals with an alternative case: all genuine users of the system are trying to impersonate others (see Theorem~\ref{thm:insider_attack:m_bounds}) or only a subset of them are trying to impersonate others (see Corollary~\ref{thm:insider_attack:m_bounds:l_attaquant}).
In these cases, each user generates templates in addition to their own enrolled template until one of them succeeds in impersonating any of the other $N-1$ users.
This attack consists of a certain number of rounds:
\begin{enumerate}
    \item Round $0$: The set of users do not generate any template, and try to impersonate any of the other users by using their own enrolled template. If none of them succeed, a new round begins: Round $1$.
    \item Round $1$, $\dots$, $m_\text{in}$: They all uniformly draw a random template and try to impersonate any of the other users with it. If none of them succeeds, a new round begins, and so on.
\end{enumerate}

The difference between the outsider scenario and the insider scenario is that the latter considers that all the users (or some of them) of the database are attackers, implying a higher cumulative number of guesses from all attackers. The total number of guesses is crucial in evaluating system security. Ultimately, in a large database of users, $m_{\text{in}}$ is invariably lower than $m_{\text{out}}$.

A main interest is then to determine, or at least approximate, the total number ($Nm_\text{in}$) of generated templates by the entire set of users until one of them succeeds with $50\%$ likelihood.
Notice that it is related to the number $m_\text{in}$ of rounds, which is approximated below, thanks to Theorem~\ref{thm:insider_attack:m_bounds}.

For what follows, $W$ denotes the event that a weak collision is found in a set of $N$ enrolled templates.
In the detailed experiment above, the enrolled templates of the database do not change. The probability of success of each individual is determined entirely by her coin tosses when generating a new template. With this in mind, an equivalent experiment can be considered where the templates of the database are randomly drawn from $\mathbb{Z}_2^n$ at each round.
We are then interested in this experiment that consists of a sequence of independent trials, so that each $W_j$ at a specific round~$j$ is independent to $W_i$ at another round~$i$.
Notice that for each round, the probability of success is the same, thus we denote $p_w$ the success probability during a round.
Furthermore, we recall that Round~$i$ only happens if Round 0 is a failure.
Then, a success during Round $i$, with $i>0$, is only given conditionally to $\overline{W_0}$, where $W_0$ denotes the event "a success at Round $0$".
Thus, $p_w=\mathbb{P}(W \, | \, \overline{W_0})$.
Since $W$ (and then $W_0$) is an event those it is difficult to derive probability computation, Proposition~\ref{prop:bound:nearcol} below provides lower and upper bounds of this probability.

\begin{proposition}
    \label{prop:bound:nearcol}
    For a database $\mathcal{B}$ of uniformly drawn templates, the probability of $W$ that there is a weak collision for at least two templates is bounded as follows
    \begin{align*}
        1 - \prod_{j=1}^N \Big( 1 - (j-1) V_\varepsilon \Big)                                                                        & \geq \mathbb{P}(W) \\
        1 - \prod\limits_{j=1}^N \Big( 1 - (j-1) V_\varepsilon + \frac{(j-1)(j-2)}{2}\mathcal{I}^{\varepsilon}_{\varepsilon+1} \Big) & \leq \mathbb{P}(W)
    \end{align*}
    where $V_{\varepsilon}$ is the measure of an $\varepsilon$-ball, and $\mathcal{I}_d^{\varepsilon}$ is the measure of the intersection of two $\varepsilon-$balls for which the Hamming distance between their centers is $d$.
\end{proposition}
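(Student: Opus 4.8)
The plan is to reveal the $N$ enrolled templates one at a time and control, at each step, the chance that the new template lands within distance $\varepsilon$ of one already placed. Write $v_1,\dots,v_N$ for the uniform i.i.d.\ draws from $\mathbb{Z}_2^n$, and let $F_j$ be the event that $d_\mathcal{H}(v_j,v_k)>\varepsilon$ for every $k<j$. Then $\overline{W}=\bigcap_{j=1}^N F_j$, and the chain rule gives $\mathbb{P}(\overline{W})=\prod_{j=1}^N \mathbb{P}\big(F_j \mid \bigcap_{k<j}F_k\big)$. Since the draws are independent, conditioning on $\bigcap_{k<j}F_k$ leaves $v_j$ uniform on $\mathbb{Z}_2^n$ and only restricts the configuration of $v_1,\dots,v_{j-1}$ to those with all pairwise distances at least $\varepsilon+1$; for any such configuration, $\mathbb{P}(F_j\mid v_1,\dots,v_{j-1}) = 1 - 2^{-n}\big|\bigcup_{k<j}B_\varepsilon(v_k)\big|$. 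So the whole proof reduces to two-sided estimates on the normalized volume of a union of at most $N-1$ Hamming $\varepsilon$-balls; these estimates will be uniform over admissible configurations, hence they survive the average defining each conditional probability.

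For the first inequality (the upper bound on $\mathbb{P}(W)$), I would use the plain union bound $\big|\bigcup_{k<j}B_\varepsilon(v_k)\big|\le (j-1)|B_\varepsilon|$, valid for \emph{any} configuration and therefore not even requiring the conditioning; this gives $\mathbb{P}(F_j\mid\cdot)\ge 1-(j-1)V_\varepsilon$, and multiplying over $j$ yields $\mathbb{P}(\overline{W})\ge\prod_{j=1}^N\big(1-(j-1)V_\varepsilon\big)$, hence $\mathbb{P}(W)\le 1-\prod_{j=1}^N\big(1-(j-1)V_\varepsilon\big)$. One should note that multiplying the per-step inequalities is legitimate in the relevant low-density regime $(N-1)V_\varepsilon\le 1$, where every factor $1-(j-1)V_\varepsilon$ stays non-negative; outside this regime the bound is trivially non-informative.

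For the second inequality (the lower bound on $\mathbb{P}(W)$) I would instead invoke the second Bonferroni inequality $\big|\bigcup_{k<j}B_\varepsilon(v_k)\big|\ge\sum_{k<j}|B_\varepsilon(v_k)| - \sum_{1\le k<l<j}|B_\varepsilon(v_k)\cap B_\varepsilon(v_l)|$. Here the conditioning pays off: each center pair satisfies $d_\mathcal{H}(v_k,v_l)\ge\varepsilon+1$, and $|B_\varepsilon(v_k)\cap B_\varepsilon(v_l)| = 2^n\,\mathcal{I}^\varepsilon_{d_\mathcal{H}(v_k,v_l)}$, where $d\mapsto\mathcal{I}^\varepsilon_d$ is non-increasing (pushing the centers apart can only shrink the overlap region, which is empty once $d>2\varepsilon$). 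Thus every pairwise term is at most $\mathcal{I}^\varepsilon_{\varepsilon+1}$, so $2^{-n}\big|\bigcup_{k<j}B_\varepsilon(v_k)\big|\ge (j-1)V_\varepsilon-\binom{j-1}{2}\mathcal{I}^\varepsilon_{\varepsilon+1}$ and $\mathbb{P}(F_j\mid\cdot)\le 1-(j-1)V_\varepsilon+\frac{(j-1)(j-2)}{2}\mathcal{I}^\varepsilon_{\varepsilon+1}$. Taking the product over $j$ gives $\mathbb{P}(\overline{W})\le\prod_{j=1}^N\big(1-(j-1)V_\varepsilon+\frac{(j-1)(j-2)}{2}\mathcal{I}^\varepsilon_{\varepsilon+1}\big)$, i.e.\ the claimed lower bound on $\mathbb{P}(W)$.

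The main obstacle is the lower-bound direction: one must (i) arrange the sequential conditioning so that the constraint $d_\mathcal{H}(v_k,v_l)\ge\varepsilon+1$ is genuinely in force precisely when the union volume is bounded from below, and (ii) prove the monotonicity $\mathcal{I}^\varepsilon_d\le\mathcal{I}^\varepsilon_{\varepsilon+1}$ for $d\ge\varepsilon+1$, which is what turns that geometric constraint into a numeric one; a coordinate-shifting/coupling argument comparing the two $\varepsilon$-balls at distance $d$ and at distance $d+1$ settles this cleanly. The remaining ingredients — the chain-rule decomposition, the union and Bonferroni bounds, the identification of each per-step factor with $1-2^{-n}|\bigcup_k B_\varepsilon(v_k)|$, and the non-negativity caveat for the product step — are routine.
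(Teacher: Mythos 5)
Your proposal is correct and follows essentially the same route as the paper: a chain-rule decomposition of $\mathbb{P}(\overline{W})$ into conditional non-collision probabilities, a union bound for the upper bound on $\mathbb{P}(W)$, and Bonferroni plus the monotonicity $\mathcal{I}^\varepsilon_d\le\mathcal{I}^\varepsilon_{\varepsilon+1}$ for $d\ge\varepsilon+1$ for the lower bound (the paper packages the per-step estimate as Lemma~\ref{lemma:bounding_intermediate_proba}, averaging over the conditional distribution of the center distance where you bound pointwise over admissible configurations — an inessential difference). Your explicit caveat about the non-negativity of the factors before multiplying is a point the paper only addresses later via side conditions on $N$.
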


\begin{proof}
    Consider $\mathcal{B} = (v_1, \dots, v_n)$ the database.
    Notice that $\mathbb{P}(W) = \mathbb{P} \Big( \exists u, v \in \mathcal{B}, u \in B_\varepsilon(v)\Big)  = 1 - \mathbb{P}\Big( \forall u, v \in \mathcal{B},  u \notin B_\varepsilon(v) \Big)$,
    and the term can be developed as:
    \begin{align}
        \label{eq:WC_chain_rule}
        \mathbb{P}\Big( \forall u,  v \in \mathcal{B},  u \notin B_\varepsilon(v) \Big) & = \prod\limits_{j=2}^N \mathbb{P}\Big( v_j \notin \bigcup\limits_{k=1}^{j-1} B_\varepsilon(v_k) \Big| \overline{W_{j-1}} \Big)
    \end{align}
    where $$\overline{W_j} = \text{"}v_2 \notin B_\varepsilon(v_1), \dots, v_{j-1} \notin \bigcup\limits_{k=1}^{j-1} B_\varepsilon(v_k)\text{"}.$$
    Then, according to Lemma~\ref{lemma:bounding_intermediate_proba}, each term of Equation~\eqref{eq:WC_chain_rule} can be bounded below by $1- (j-1)V_{\varepsilon}$ and above by $1- (j-1 ) V_{\varepsilon} + \frac{(j-1)(j-2)}{2}\mathcal{I}^{\varepsilon}_{\varepsilon+1}$.
\end{proof}
The previous proof involves a technical lemma (Lemma~\ref{lemma:bounding_intermediate_proba} given below) which is a key result in this paper.
Lemma~\ref{lemma:bounding_intermediate_proba} provides lower and upper bounds that are required to sequentially decompose the probability of a weak collision. The proof is in Appendix~\ref{sec:appendix:intermediate_result}.
\begin{lemma}
    \label{lemma:bounding_intermediate_proba}
    For a template database $\mathcal{B}= \{v_1, \dots, v_{N}\}$, the probability that an additional template $\tilde v$ matches a template of the database, according to a threshold $\varepsilon$, is bounded as:
    \begin{equation*}
        N V_{\varepsilon} -  \mathcal{I}^{\varepsilon}_{\varepsilon+1} N(N-1)/2 \leq \mathbb{P}\left(\tilde v \in \cup_{k=1}^{N} B_{\varepsilon}(v_k) \, \big| \, \overline{W_N}\right) \leq NV_{\varepsilon}
    \end{equation*}
    where $\overline{W_N}$ denotes the event "$v_2 \notin B_{\varepsilon}(v_1) , \dots, v_{N} \notin \cup_{k=1}^{N-1}  B_{\varepsilon}(v_k)$" .
\end{lemma}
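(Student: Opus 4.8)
The plan is to bound the probability that a freshly drawn uniform template $\tilde v$ lands in the union $\cup_{k=1}^N B_\varepsilon(v_k)$, conditioned on the event $\overline{W_N}$ that the enrolled templates $v_1,\dots,v_N$ are pairwise non-matching. First I would observe that, conditioned on $\overline{W_N}$, the $N$ balls $B_\varepsilon(v_k)$ need not be disjoint in general — two centers at Hamming distance between $\varepsilon+1$ and $2\varepsilon$ still give overlapping balls — but the centers themselves are at pairwise distance $>\varepsilon$. The upper bound is the easy half: by the union bound, $\mathbb{P}(\tilde v \in \cup_k B_\varepsilon(v_k) \mid \overline{W_N}) \leq \sum_{k=1}^N \mathbb{P}(\tilde v \in B_\varepsilon(v_k) \mid \overline{W_N})$. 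Since the draw of $\tilde v$ is independent of the $v_k$'s and uniform, each term equals $|B_\varepsilon|/2^n = V_\varepsilon$, giving $NV_\varepsilon$.

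For the lower bound I would use inclusion–exclusion truncated at the second term (Bonferroni):
\begin{equation*}
\mathbb{P}\Big(\tilde v \in \cup_{k=1}^N B_\varepsilon(v_k)\,\big|\,\overline{W_N}\Big) \;\geq\; \sum_{k=1}^N V_\varepsilon \;-\; \sum_{1\le k<\ell\le N}\mathbb{P}\Big(\tilde v\in B_\varepsilon(v_k)\cap B_\varepsilon(v_\ell)\,\big|\,\overline{W_N}\Big).
\end{equation*}
The first sum is $NV_\varepsilon$. For the pairwise intersection terms, the key point is that, conditioned on $\overline{W_N}$, each pair of centers $(v_k,v_\ell)$ satisfies $d_\mathcal{H}(v_k,v_\ell)\ge \varepsilon+1$. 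The measure $\mathcal{I}_d^\varepsilon$ of the intersection of two $\varepsilon$-balls whose centers are at distance $d$ is (for $d$ in the relevant range) nonincreasing in $d$, so $\mathcal{I}_d^\varepsilon \le \mathcal{I}_{\varepsilon+1}^\varepsilon$ for every $d\ge \varepsilon+1$. Hence each pairwise term is at most $\mathcal{I}_{\varepsilon+1}^\varepsilon$, and there are $\binom{N}{2}=N(N-1)/2$ of them, yielding the claimed bound $NV_\varepsilon - \mathcal{I}_{\varepsilon+1}^\varepsilon N(N-1)/2$.

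The step I expect to be the main obstacle — and presumably the reason this is deferred to an appendix — is making rigorous the claim that conditioning on $\overline{W_N}$ does not disturb the computation of $\mathbb{P}(\tilde v \in B_\varepsilon(v_k))$ and $\mathbb{P}(\tilde v \in B_\varepsilon(v_k)\cap B_\varepsilon(v_\ell))$, together with the monotonicity of $d\mapsto \mathcal{I}_d^\varepsilon$. For the conditioning, since $\tilde v$ is drawn independently of $(v_1,\dots,v_N)$, one conditions on a realization of the $v_k$'s compatible with $\overline{W_N}$, computes the (deterministic in that realization) probability over $\tilde v$, and then averages; the single-ball term is exactly $V_\varepsilon$ regardless of the realization, while the pairwise term depends only on $d_\mathcal{H}(v_k,v_\ell)$, which is constrained to be $\ge \varepsilon+1$ — here the monotonicity of $\mathcal{I}_d^\varepsilon$ closes the argument uniformly over realizations. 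Establishing that monotonicity (that enlarging the distance between two ball centers can only shrink their intersection, within the range $\varepsilon+1 \le d \le 2\varepsilon$) is a combinatorial fact about Hamming balls that I would either cite or verify by a direct coupling/shifting argument on coordinates.
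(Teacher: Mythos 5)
Your proposal is correct and follows essentially the same route as the paper's proof: the lower bound via Bonferroni truncation of inclusion–exclusion, with each pairwise intersection term bounded by $\mathcal{I}^{\varepsilon}_{\varepsilon+1}$ using the monotonicity of $d\mapsto\mathcal{I}^{\varepsilon}_d$ on realizations compatible with $\overline{W_N}$ (the paper likewise asserts this monotonicity and reduces the conditional pairwise term to an average of $\mathcal{I}^{\varepsilon}_d$ over $d>\varepsilon$). The only cosmetic difference is the upper bound, where you invoke the union bound directly while the paper argues that disjoint balls maximize the union; both yield $NV_\varepsilon$.
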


Having characterized the probability of a weak collision, the security can now be evaluated when all users of the database are potential attackers. The estimated bounds for the median number of rounds are given in Theorem~\ref{thm:insider_attack:m_bounds} and it is translated in Corollary~\ref{thm:insider_attack:m_bounds:l_attaquant} to fit an intermediate setting. The proof is in the Appendix~\ref{sec:appendix:intermediate_result}.

\begin{theorem}
    \label{thm:insider_attack:m_bounds}
    Let $N$, $n$ and $\varepsilon$ be fixed parameters with $\varepsilon/n \leq 1/2$. The median number $m_\text{in}$ of rounds $m$ until at least one of the users successfully impersonates another one is
    $$\Omega\left(2^{n(1-h(\varepsilon/n))-3\log_2(N)}\right) \text{ and }
        O\left(2^{n(1-h(\varepsilon/n)+\alpha)-2\log_2(N)}\right)$$
    where $h(\cdot)$ is the binary entropy function, and with $\alpha < h(\varepsilon/n)$ and if the following holds
    \begin{equation}
        \label{eq:condition_sur_N}
        N \leq 2 + 2^{-\varepsilon} \left( \frac{1-\varepsilon/n}{\varepsilon/n}\right)^{\left\lceil \frac{\varepsilon+1}{2} \right\rceil} \left( \frac{1}{\sqrt{8\varepsilon(1-\varepsilon/n)}}-2^{-n\alpha} \right).
    \end{equation}
\end{theorem}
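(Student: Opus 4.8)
\textbf{Proof plan for Theorem~\ref{thm:insider_attack:m_bounds}.}

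The plan is to mimic the structure of the naive-outsider argument (Theorem~\ref{db_size_outsider_independent_events}), but with the per-round success probability replaced by $p_w=\mathbb{P}(W\mid\overline{W_0})$, the probability that, among $N$ freshly drawn templates, some user's guess matches some other user's enrolled template, given Round~$0$ failed. As before, the median number of rounds of a geometric law with parameter $p_w$ is $m_\text{in}\approx -\log(2)/\log(1-p_w)$, so the whole task reduces to sandwiching $p_w$ between two quantities of the shape $2^{-n(1-h(\varepsilon/n))+O(\log N)}$ and then applying the two-sided estimate $-x-x^2\le\log(1-x)\le -x$ valid for $0\le x\le 1/2$ (which is why the side condition~\eqref{eq:condition_sur_N} is needed — it guarantees the relevant probabilities stay below $1/2$). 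The first step is therefore to express $p_w$ in terms of $\mathbb{P}(W)$ and $\mathbb{P}(W_0)$; since a Round~$0$ success is exactly a weak near-collision among the $N$ enrolled templates, and each round re-draws an i.i.d.\ database, one gets $p_w=(\mathbb{P}(W)-\mathbb{P}(W_0))/\mathbb{P}(\overline{W_0})$, and after Round~0 the "matching" event in a subsequent round is the event that one of the $N$ new guesses lands in one of the $N$ enrolled balls — whose probability is governed by Lemma~\ref{lemma:bounding_intermediate_proba}.

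Next I would feed the bounds of Proposition~\ref{prop:bound:nearcol} (for $\mathbb{P}(W)$) and of Lemma~\ref{lemma:bounding_intermediate_proba} (for the conditional per-round matching probability) into this expression. Expanding the products $\prod_{j=1}^N(1-(j-1)V_\varepsilon+\dots)$ to first order and collecting terms, the dominant contribution is of order $N^2 V_\varepsilon$ for the "all pairs of (guess, enrolled)" structure — this is the source of the $N^2$, i.e.\ the $-2\log_2 N$ in the upper bound on $m_\text{in}$ and, once one is forced to use the weaker lower bound on $V_\varepsilon$ (the $2^{nh(\varepsilon/n)}/\sqrt{8\varepsilon(1-\varepsilon/n)}$ estimate from~\cite{thomas2006elements}) together with the correction term involving $\mathcal{I}^{\varepsilon}_{\varepsilon+1}$, the $-3\log_2 N$ in the lower bound. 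For the upper bound one absorbs the polynomial factor $\sqrt{8\varepsilon(1-\varepsilon/n)}$ and the intersection-volume slack into an $\alpha<h(\varepsilon/n)$ exponent, writing everything as $2^{n(1-h(\varepsilon/n)+\alpha)-2\log_2 N}$; condition~\eqref{eq:condition_sur_N} is precisely what one extracts by requiring the leftover correction (the $\mathcal{I}^{\varepsilon}_{\varepsilon+1}\binom{N}{2}$ term) to be dominated by $2^{-n\alpha}$ times the main $V_\varepsilon$ term, after bounding $\mathcal{I}^{\varepsilon}_{\varepsilon+1}\le 2^{-\varepsilon}\binom{n}{\lceil(\varepsilon+1)/2\rceil}\cdots$ via a standard ball-intersection estimate and $V_\varepsilon\ge 2^{n(h(\varepsilon/n)-1)}/\sqrt{8\varepsilon(1-\varepsilon/n)}$.

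Finally, having $c_1 N^2 V_\varepsilon \le p_w \le c_2 N^2 V_\varepsilon$ (with explicit constants depending polynomially on $\varepsilon, n$), I substitute into $m_\text{in}\approx -\log 2/\log(1-p_w)$, use $-p_w-p_w^2\le\log(1-p_w)\le -p_w$, invert, and read off $m_\text{in}=\Omega(1/(N^2 V_\varepsilon))$ and $m_\text{in}=O(1/(N^2 V_\varepsilon))$ up to the polynomial slack, then convert $1/V_\varepsilon$ into $2^{n(1-h(\varepsilon/n))}$ using the two-sided binomial-sum bounds, yielding the stated $\Omega$ and $O$ expressions. I expect the main obstacle to be the bookkeeping in controlling the second-order terms: both the product expansion in Proposition~\ref{prop:bound:nearcol} and the $N(N-1)/2$ intersection correction in Lemma~\ref{lemma:bounding_intermediate_proba} contribute cross terms, and showing that under~\eqref{eq:condition_sur_N} these remain a lower-order perturbation of $N^2 V_\varepsilon$ — so that the geometric-median approximation is still faithful and the exponent degrades only from $-2\log_2 N$ to $-3\log_2 N$ — is the delicate part; the rest is the same entropy-bound machinery already used for Theorem~\ref{db_size_outsider_independent_events}.
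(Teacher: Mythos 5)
Your plan follows essentially the same route as the paper's proof: write the per-round success probability as $p_w = 1-\prod_{k=1}^N(1-p_k)$ with each $p_k$ sandwiched by Lemma~\ref{lemma:bounding_intermediate_proba}, apply the geometric-median formula $m_\text{in}=\lceil -\log 2/\log(1-p_w)\rceil$ with the two-sided estimate $-x-x^2\le\log(1-x)\le -x$, convert $V_\varepsilon$ via the entropy bounds, control $\mathcal{I}^{\varepsilon}_{\varepsilon+1}$ with Lemma~\ref{lemma:entropyballintersection}, and read condition~\eqref{eq:condition_sur_N} as exactly the requirement that $V_\varepsilon-\frac{N-2}{2}\mathcal{I}^{\varepsilon}_{\varepsilon+1}\ge 2^{n(h(\varepsilon/n)-1)-n\alpha}$. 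One bookkeeping slip worth correcting: the extra factor of $N$ in the $\Omega$-bound does not come from the weaker lower bound on $V_\varepsilon$ or from the $\mathcal{I}^{\varepsilon}_{\varepsilon+1}$ correction (both of those enter only the upper bound on $m_\text{in}$, where one needs a \emph{lower} bound on $p_w$); it comes from the quadratic term in $-\log(1-x)\le x+x^2$, which turns the denominator $N(N-1)V_\varepsilon+N(N-1)^2V_\varepsilon^2$ into $N^2(N-1)2^{n(h(\varepsilon/n)-1)}$ after the crude bound $V_\varepsilon^2\le 2^{n(h(\varepsilon/n)-1)}$ — your own plan, executed literally, would in fact yield the slightly stronger $\Omega(2^{n(1-h(\varepsilon/n))-2\log_2 N - 1})$, which implies the stated bound. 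Likewise, the identity $p_w=(\mathbb{P}(W)-\mathbb{P}(W_0))/\mathbb{P}(\overline{W_0})$ is a red herring (the per-round event concerns $N$ fresh guesses against the fixed enrolled balls, not a re-drawn weak collision), but you correctly abandon it in favour of Lemma~\ref{lemma:bounding_intermediate_proba}, which is what the paper uses.
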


The assumption that all users in the database are attackers may seem too strong. For a more realistic evaluation, it is preferable to perform the evaluation assuming that there is a few number of attackers in the system.
Corollary~\ref{thm:insider_attack:m_bounds:l_attaquant} gives the bounds on the median number of rounds when only a subset of clients are attackers.

\begin{corollary}
    \label{thm:insider_attack:m_bounds:l_attaquant}
    Assuming the same setting as in Thoerem~\ref{thm:insider_attack:m_bounds} but with only $\ell$ attacking users among $N$ aiming at impersonating another one, the median number $m_\text{in}$ of rounds $m$ until at least one successfully impersonates another a user template is
    $$\Omega(2^{n(1-h(\varepsilon/n))-2\log_2(N)-\log_2(\ell)})$$
    and
    $$ O(2^{n(1-h(\varepsilon/n)+\alpha)-\log_2(N)-\log_2(\ell)}).$$
\end{corollary}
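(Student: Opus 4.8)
The plan is to mimic the round-based analysis of Theorem~\ref{thm:insider_attack:m_bounds}, but with only $\ell$ attacking users instead of all $N$. The key observation is that the enrolled database still contains $N$ templates (the potential victims), so the per-round hit probability for a single attacker is still governed by $\mathbb{P}(W \mid \overline{W_0})$-type quantities involving $N V_\varepsilon$; only the number of independent attackers per round changes from $N$ to $\ell$. First I would set $p_w^{(\ell)}$ to be the probability that, in a given round, at least one of the $\ell$ attackers impersonates one of the $N$ enrolled users. Arguing as in the proof of Theorem~\ref{thm:insider_attack:m_bounds}, each attacker independently draws a fresh template and matches against the $N$ enrolled ones, so $p_w^{(\ell)} = 1 - (1 - q)^\ell$ where $q$ is the single-attacker single-round hit probability, and Lemma~\ref{lemma:bounding_intermediate_proba} (together with Proposition~\ref{prop:bound:nearcol}) gives $q \asymp N V_\varepsilon$ up to the same lower/upper correction factors used in the theorem.

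Next I would feed this into the geometric-median formula $m_\text{in} \approx -\log 2 / \log(1 - p_w^{(\ell)})$, exactly as in the theorem. Using $-x - x^2 \leq \log(1-x) \leq -x$ for $0 \leq x \leq 1/2$, we get $m_\text{in} = \Theta\!\left(1/p_w^{(\ell)}\right)$, and since $p_w^{(\ell)} \approx \ell q$ when $\ell q$ is small, this yields $m_\text{in} = \Theta\!\left(1/(\ell q)\right)$. Substituting the bounds on $q$ from Theorem~\ref{thm:insider_attack:m_bounds}'s analysis — lower bound $q \geq N V_\varepsilon (1 - \text{correction}) \gtrsim N \, 2^{-nh(\varepsilon/n)} / N^{?}$, upper bound $q \lesssim N \, 2^{-n(h(\varepsilon/n) - \alpha)}$ — I replace the factor $N^{-2}$ (lower) and $N^{-1}$ (upper) coming from $1/(N q)$ in the theorem by $1/(\ell q)$, which produces the extra $\log_2 \ell$ in each exponent while dropping one power of $N$ relative to the all-users case. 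Concretely: the lower bound becomes $\Omega(2^{n(1-h(\varepsilon/n))} / (N^2 \ell)) = \Omega(2^{n(1-h(\varepsilon/n)) - 2\log_2 N - \log_2 \ell})$ and the upper bound becomes $O(2^{n(1-h(\varepsilon/n)+\alpha)} / (N \ell)) = O(2^{n(1-h(\varepsilon/n)+\alpha) - \log_2 N - \log_2 \ell})$, matching the claimed statement (and recovering the theorem when $\ell = N$, modulo the $N$ vs.\ $N-1$ bookkeeping already absorbed in the $\Theta$).

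The main obstacle I anticipate is making the step $p_w^{(\ell)} \approx \ell q$ rigorous while staying inside the regime $\ell q \leq 1/2$ where the logarithm bounds apply; this is where condition~\eqref{eq:condition_sur_N} from Theorem~\ref{thm:insider_attack:m_bounds} is needed, since it is exactly what guarantees $N V_\varepsilon$ (and hence $q$, and hence $\ell q \leq N V_\varepsilon \leq \tfrac12$) is small enough. A secondary subtlety is that when $\ell$ attackers act in the same round against the same $N$-template database, their hits are not perfectly independent (they share the enrolled set); but since each attacker's guess is drawn independently and the enrolled templates are fixed within a round, conditioning on the database makes the $\ell$ events conditionally independent, and the same Lemma~\ref{lemma:bounding_intermediate_proba} bounds apply uniformly — so the union-bound sandwich $\ell q (1 - \ell q) \leq p_w^{(\ell)} \leq \ell q$ goes through. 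Given these, the corollary follows by the same arithmetic manipulations ($2^{-\log_2(\cdot)}$ rewriting) as in the theorem, so I would keep the write-up short and simply indicate "replace $N$ by $\ell$ in the count of per-round trials and repeat the proof of Theorem~\ref{thm:insider_attack:m_bounds}."
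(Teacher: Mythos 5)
Your proposal is correct and follows essentially the same route as the paper: the per-round success probability is written as $1-\prod_{k=1}^{\ell}(1-p_k)$, the geometric-median formula and the bounds $-x-x^2\leq\log(1-x)\leq-x$ are applied exactly as in Theorem~\ref{thm:insider_attack:m_bounds}, and the outer factor $N$ is replaced by $\ell$ while the $(N-1)$ targets per attacker are kept, yielding the stated $\Omega$ and $O$ bounds. (The paper applies the logarithm inequality term by term to each $p_k$, so only $p_k\leq 1/2$ is needed rather than $\ell q\leq 1/2$, but this does not change the conclusion.)
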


\begin{proof}
    In this case, the probability that at least one of the $\ell$ user succeeds is:
    \begin{equation*}
        p_w = 1 - \prod_{k=1}^\ell (1-p_k)
    \end{equation*}
    Following the same steps as in the proof of Theorem~\ref{thm:insider_attack:m_bounds}, lower and upper bounds of the median number are:
    \begin{align*}
        \left\lceil \frac{ (\log 2) 2^{n \left(1-h(\varepsilon/n)  \right) }}{\ell N(N-1) } \right\rceil \leq m_\text{in} \leq  \left\lceil \frac{(\log 2) \, 2^{n ( 1-h(\varepsilon/n) + \alpha )}}{\ell  (N-1) }
        \right\rceil
    \end{align*}
    and the result follows.
\end{proof}

In the proof of Theorem~\ref{thm:insider_attack:m_bounds}, the measure of the intersection $\mathcal{I}_{\varepsilon+1}^\varepsilon$ is bounded according to Lemma~\ref{lemma:entropyballintersection} in order to highlight the relationship with the binary entropy function $h(\cdot)$.
Its statement is given below and its proof is in Appendix~\ref{sec:appendix:balls_intersection}.
\begin{lemma}
    \label{lemma:entropyballintersection}
    For $u$ and $v \in \mathbb{Z}^n_2$, $d = d_\mathcal{H}(u,v)$, with $d \leq 2\varepsilon$, $\varepsilon/n < 1/2$, then
    \begin{align*}
        \mathcal{I}_d^\varepsilon \leq  2^{n \left( h\left( \frac{\varepsilon-\lceil d/2 \rceil}{n}\right) -1\right) + d} \leq 2^{n \left(h(\varepsilon/n)-1\right)+d} \left(\frac{\varepsilon/n}{1-\varepsilon/n} \right)^{\left\lceil d/2 \right\rceil}.
    \end{align*}
\end{lemma}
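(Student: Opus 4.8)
The plan is to write the intersection as an explicit binomial sum in ``layered'' coordinates, observe that the two ball constraints together collapse the sum to a single one‑dimensional entropy bound, and then pass from $h\!\big((\varepsilon-\lceil d/2\rceil)/n\big)$ to $h(\varepsilon/n)$ by a one‑line concavity argument on $h$.

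\textbf{Step 1 (normalisation and counting).} Since the Hamming metric is invariant under translation ($y\mapsto y\oplus u$) and under coordinate permutations, $|B_\varepsilon(u)\cap B_\varepsilon(v)|$ depends only on $d=d_\mathcal{H}(u,v)$, so I may take $u=0^n$ and $v=1^d0^{n-d}$. Splitting a vector $y$ into its first $d$ coordinates (weight $a$) and its last $n-d$ coordinates (weight $b$), one has $d_\mathcal{H}(u,y)=a+b$ and $d_\mathcal{H}(v,y)=(d-a)+b$, hence
$$2^n\,\mathcal{I}_d^\varepsilon \;=\; |B_\varepsilon(u)\cap B_\varepsilon(v)| \;=\; \sum_{\substack{0\le a\le d,\ b\ge 0\\ a+b\le\varepsilon,\ (d-a)+b\le\varepsilon}}\binom{d}{a}\binom{n-d}{b}.$$

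\textbf{Step 2 (collapsing the $b$‑range; first inequality).} For any admissible pair $(a,b)$, adding the two inequalities gives $d+2b\le 2\varepsilon$, so $b\le\lfloor\varepsilon-d/2\rfloor=\varepsilon-\lceil d/2\rceil=:\varepsilon'$ (this is where the two parities of $d$ must be checked, and where $d\le2\varepsilon$ is used to get $\varepsilon'\ge0$; also $\varepsilon'/n\le\varepsilon/n<1/2$). Since all summands are nonnegative and the admissible set is contained in $\{0,\dots,d\}\times\{0,\dots,\varepsilon'\}$,
$$2^n\,\mathcal{I}_d^\varepsilon \;\le\; \Big(\sum_{a=0}^{d}\binom{d}{a}\Big)\Big(\sum_{b=0}^{\varepsilon'}\binom{n-d}{b}\Big)\;\le\; 2^d\sum_{b=0}^{\varepsilon'}\binom{n}{b}\;\le\; 2^d\,2^{n h(\varepsilon'/n)},$$
using $\binom{n-d}{b}\le\binom{n}{b}$ and the entropy bound $\sum_{b\le\varepsilon'}\binom{n}{b}\le 2^{nh(\varepsilon'/n)}$ for $\varepsilon'/n<1/2$ (already invoked in the proof of Theorem~\ref{db_size_outsider_independent_events}). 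Dividing by $2^n$ gives $\mathcal{I}_d^\varepsilon\le 2^{n(h((\varepsilon-\lceil d/2\rceil)/n)-1)+d}$, the first claimed bound.

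\textbf{Step 3 (comparing the two entropy terms; second inequality).} Put $p=\varepsilon/n$ and $p'=\varepsilon'/n=p-\lceil d/2\rceil/n$, so the target is $2^{nh(p')}\le 2^{nh(p)}\big(p/(1-p)\big)^{\lceil d/2\rceil}$, i.e. $n\big(h(p)-h(p')\big)\ge\lceil d/2\rceil\log_2\tfrac{1-p}{p}$. Since $h'(x)=\log_2\tfrac{1-x}{x}$ is non‑increasing on $(0,1)$ and $p'\le p\le1/2$, we have $h'(x)\ge h'(p)$ for $x\in[p',p]$, whence $h(p)-h(p')=\int_{p'}^{p}h'(x)\,dx\ge(p-p')\,h'(p)=\tfrac{\lceil d/2\rceil}{n}\log_2\tfrac{1-p}{p}$. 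Multiplying by $n$ and combining with Step 2 yields the full statement. The only real care needed is the $\lfloor\cdot\rfloor/\lceil\cdot\rceil$ bookkeeping in Step 2 and the degenerate case $\varepsilon'=0$ (where $h(0)=0$ and both sides trivialise); I expect the crux to be the observation in Step 2 that the coupled constraints force $b\le\varepsilon'$, since that single fact produces both the entropy argument $h(\varepsilon'/n)$ and, via Step 3, the factor $\big((\varepsilon/n)/(1-\varepsilon/n)\big)^{\lceil d/2\rceil}$.
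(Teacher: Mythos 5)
Your proof is correct and follows essentially the same route as the paper's: write $2^n\mathcal{I}_d^\varepsilon$ as the explicit double binomial sum, bound the inner summation index by $\varepsilon-\lceil d/2\rceil$, apply the entropy bound $\sum_{b\le m}\binom{n}{b}\le 2^{nh(m/n)}$, and finish with the tangent-line (concavity) inequality for $h$. Your derivation of $b\le\varepsilon-\lceil d/2\rceil$ by adding the two ball constraints is a slightly cleaner substitute for the paper's case analysis on $\min(\varepsilon-k,\varepsilon-d+k)$, and you make explicit the step $\binom{n-d}{b}\le\binom{n}{b}$ that the paper leaves implicit, but the argument is the same.
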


\section{Experimental Results}

\begin{table*}[]
    \resizebox*{\textwidth}{!}{%
        \begin{tabular}{c|cccccccccccc|cccccccccccc}
            \toprule
            $n$                & \multicolumn{12}{c|}{$128$} & \multicolumn{12}{c}{$256$}                                                                                                                                                                                                                                                                                                                                                                                                                                                                                                                                                                                                                                                                                                                                                                                                                                                                               \\ \midrule
            $N$ ($log_{10}$)   & \multicolumn{4}{c|}{$4$}    & \multicolumn{4}{c|}{$6$}   & \multicolumn{4}{c|}{$8$}                  & \multicolumn{4}{c|}{$4$}                    & \multicolumn{4}{c|}{$6$}                  & \multicolumn{4}{c}{$8$}                                                                                                                                                                                                                                                                                                                                                                                                                                                                                                                                                                                                                                                                                                               \\ \midrule
            $\varepsilon$      & $12$                        & $19$                       & $25$                                      & \multicolumn{1}{c|}{$51$}                   & $12$                                      & $19$                                      & $25$                                      & \multicolumn{1}{c|}{$51$}                   & $12$                                      & $19$                                       & $25$                    & $51$                                        & $12$                                       & $19$                                       & $25$                     & \multicolumn{1}{c|}{$51$}                    & $12$                                       & $19$                     & $25$                     & \multicolumn{1}{c|}{$51$}                    & $12$                     & $19$                     & $25$                     & $51$                    \\ \midrule
            $h(\varepsilon/n)$ & $0.4$                       & $0.6$                      & $0.7$                                     & \multicolumn{1}{c|}{$1.0$}                  & $0.4$                                     & $0.6$                                     & $0.7$                                     & \multicolumn{1}{c|}{$1.0$}                  & $0.4$                                     & $0.6$                                      & $0.7$                   & $1.0$                                       & $0.3$                                      & $0.4$                                      & $0.5$                    & \multicolumn{1}{c|}{$0.7$}                   & $0.3$                                      & $0.4$                    & $0.5$                    & \multicolumn{1}{c|}{$0.7$}                   & $0.3$                    & $0.4$                    & $0.5$                    & $0.7$                   \\ \midrule
            Lower bound        & \multirow{2}{*}{$ 57 $}     & \multirow{2}{*}{$ 37 $}    & \multirow{2}{*}{$ 23 $}                   & \multicolumn{1}{c|}{\multirow{2}{*}{$ 0 $}} & \multirow{2}{*}{$ 50 $}                   & \multirow{2}{*}{$ 30 $}                   & \multirow{2}{*}{$ 16 $}                   & \multicolumn{1}{c|}{\multirow{2}{*}{$ 0 $}} & \multirow{2}{*}{$ 43 $}                   & \multirow{2}{*}{$ 23 $}                    & \multirow{2}{*}{$ 10 $} & \multicolumn{1}{c|}{\multirow{2}{*}{$ 0 $}} & \multirow{2}{*}{$ 172 $}                   & \multirow{2}{*}{$ 145 $}                   & \multirow{2}{*}{$ 124 $} & \multicolumn{1}{c|}{\multirow{2}{*}{$ 58 $}} & \multirow{2}{*}{$ 166 $}                   & \multirow{2}{*}{$ 138 $} & \multirow{2}{*}{$ 117 $} & \multicolumn{1}{c|}{\multirow{2}{*}{$ 51 $}} & \multirow{2}{*}{$ 159 $} & \multirow{2}{*}{$ 131 $} & \multirow{2}{*}{$ 111 $} & \multirow{2}{*}{$ 45 $} \\
            Outsider ($log_2$) &                             &                            &                                           & \multicolumn{1}{c|}{}                       &                                           &                                           &                                           & \multicolumn{1}{c|}{}                       &                                           &                                            &                         &                                             &                                            &                                            &                          & \multicolumn{1}{c|}{}                        &                                            &                          &                          & \multicolumn{1}{c|}{}                        &                          &                          &                          &                         \\ \midrule
            Upper bound        & \multirow{2}{*}{$ 58 $}     & \multirow{2}{*}{$ 39 $}    & \multirow{2}{*}{$ 25 $}                   & \multicolumn{1}{c|}{\multirow{2}{*}{$ 0 $}} & \multirow{2}{*}{$ 52 $}                   & \multirow{2}{*}{$ 32 $}                   & \multirow{2}{*}{$ 19 $}                   & \multicolumn{1}{c|}{\multirow{2}{*}{$ 0 $}} & \multirow{2}{*}{$ 45 $}                   & \multirow{2}{*}{$ 25 $}                    & \multirow{2}{*}{$ 12 $} & \multicolumn{1}{c|}{\multirow{2}{*}{$ 0 $}} & \multirow{2}{*}{$ 174 $}                   & \multirow{2}{*}{$ 147 $}                   & \multirow{2}{*}{$ 126 $} & \multicolumn{1}{c|}{\multirow{2}{*}{$ 60 $}} & \multirow{2}{*}{$ 167 $}                   & \multirow{2}{*}{$ 140 $} & \multirow{2}{*}{$ 120 $} & \multicolumn{1}{c|}{\multirow{2}{*}{$ 54 $}} & \multirow{2}{*}{$ 161 $} & \multirow{2}{*}{$ 133 $} & \multirow{2}{*}{$ 113 $} & \multirow{2}{*}{$ 47 $} \\
            Outsider ($log_2$) &                             &                            &                                           & \multicolumn{1}{c|}{}                       &                                           &                                           &                                           & \multicolumn{1}{c|}{}                       &                                           &                                            &                         &                                             &                                            &                                            &                          & \multicolumn{1}{c|}{}                        &                                            &                          &                          & \multicolumn{1}{c|}{}                        &                          &                          &                          &                         \\ \midrule
            Lower bound        & \multirow{2}{*}{$31$}       & \multirow{2}{*}{$11$}      & \multicolumn{2}{c|}{\multirow{2}{*}{$0$}} & \multirow{2}{*}{$11$}                       & \multicolumn{3}{c|}{\multirow{2}{*}{$0$}} & \multicolumn{4}{c|}{\multirow{2}{*}{$0$}} & \multirow{2}{*}{$146$}                    & \multirow{2}{*}{$118$}                      & \multirow{2}{*}{$98$}                     & \multicolumn{1}{c|}{\multirow{2}{*}{$32$}} & \multirow{2}{*}{$126$}  & \multirow{2}{*}{$99$}                       & \multirow{2}{*}{$78$}                      & \multicolumn{1}{c|}{\multirow{2}{*}{$12$}} & \multirow{2}{*}{$106$}   & \multirow{2}{*}{$79$}                        & \multirow{2}{*}{$58$}                      & \multirow{2}{*}{$0$}                                                                                                                                                                                          \\
            Insider ($log_2$)  &                             &                            & \multicolumn{2}{c|}{}                     &                                             & \multicolumn{3}{c|}{}                     & \multicolumn{4}{c|}{}                     &                                           &                                             &                                           & \multicolumn{1}{c|}{}                      &                         &                                             &                                            & \multicolumn{1}{c|}{}                      &                          &                                              &                                            &                                                                                                                                                                                                               \\ \midrule
            Upper bound        & \multirow{2}{*}{$44$}       & \multirow{2}{*}{$24$}      & \multirow{2}{*}{$10$}                     & \multicolumn{1}{c|}{\multirow{2}{*}{$0$}}   & \multirow{2}{*}{$31$}                     & \multirow{2}{*}{$11$}                     & \multicolumn{2}{c|}{\multirow{2}{*}{$0$}} & \multirow{2}{*}{$17$}                       & \multicolumn{3}{c|}{\multirow{2}{*}{$0$}} & \multirow{2}{*}{$160$}                     & \multirow{2}{*}{$132$}  & \multirow{2}{*}{$111$}                      & \multicolumn{1}{c|}{\multirow{2}{*}{$45$}} & \multirow{2}{*}{$146$}                     & \multirow{2}{*}{$118$}   & \multirow{2}{*}{$98$}                        & \multicolumn{1}{c|}{\multirow{2}{*}{$32$}} & \multirow{2}{*}{$133$}   & \multirow{2}{*}{$105$}   & \multirow{2}{*}{$85$}                        & \multirow{2}{*}{$18$}                                                                                    \\
            Insider ($log_2$)  &                             &                            &                                           & \multicolumn{1}{c|}{}                       &                                           &                                           & \multicolumn{2}{c|}{}                     &                                             & \multicolumn{3}{c|}{}                     &                                            &                         &                                             & \multicolumn{1}{c|}{}                      &                                            &                          &                                              & \multicolumn{1}{c|}{}                      &                          &                          &                                              &                                                                                                          \\ \bottomrule
        \end{tabular}
    }
    \caption{Bounds for the number of operations of both an insider and an outsider, in function of $n$, $N$ and $\varepsilon$.}
    \label{Insider_tab}
\end{table*}

\label{template_sec}
This section provides some experimental results as well as security metrics related to the theoretical bounds given in this paper.
First, Proposition~\ref{prop:outsider_attack:negligible_difference_range} is tested on reasonable scenarios to highlight the equivalence of the two attacker models. Then, the numbers of guesses for both an attacker in the insider case (Theorem~\ref{thm:insider_attack:m_bounds}) and the outsider case (Theorem~\ref{db_size_outsider}) are investigated. In the end, we propose new security metrics to evaluate the resilience of a database with respect to near collisions.

\subsection{Numerical Evaluations: Databases Security}

Preferably, a near-collision should only occur for legitimate users on their
own enrolled template, in reference to intra-class near-collisions.
However, as shown in the previous sections, inter-class near-collisions occur
well below the birthday bound of a cryptographic hash function.
Near-collisions negatively impact biometric systems in two ways. The first is
the increase in FMR and the second is the onset of master templates that
facilitates multiple impersonations~\cite{durbet2022near}.
To limit these near-collisions, the system can act on $3$ parameters: $n$, $N$
and $\varepsilon$.

The analyses focus on a database $\mathcal{B}$ of
uniformly distributed template in $\mathbb{Z}_2^n$, \textit{i.e.}, when the
biometric transformation acts like a perfect random function.
This enables to provide an upper bound on run-time complexities. Although the
assumption of a uniform distribution yields an overestimated upper bound, it is
helpful for securely parametrizing the transformation scheme.
However, concerning the lower bound, it is above reality if the distribution is non-uniform, as it is the case for deterministic biometric transformation.
Weak near-collisions occur more frequently in the case of a skewed distribution.


To evaluate the security of a database based on these results, a first security
score against a passive attacker, denoted $S_1$, is introduced.
This score is given for $\mathcal{B}$ and its parameters.
By denoting $p_1$ the upper bound of the probability
of a weak near-collision, the score is $1-p_1$ and is valued in $[0,1]$.
The closer to~$1$ the score is, the more resilient is the database.
A database
$\mathcal{B}$ is said to be resilient to passive attacks iff $S_1$ is above
$1/2$.
As an example, the database defined by ($n=64$, $N=50$, $\varepsilon=15$) with
$S_1=0.9852$ is considered robust, while the database defined by ($n=64$,
$N=50$, $\varepsilon=19$) with $S_1=0.3792$ is not.

They each show the variation of this probability when
$\varepsilon$, $N$ and $n$ vary respectively. As expected, it
increases with $\varepsilon$ or $N$, but decreases with $n$.
For $n=128$ and $N=100$, the robustness is obtained by taking
$\varepsilon \leq 43$.
For $\varepsilon=12$ and $N=100$, the robustness is obtained for a template
dimension $n \ge 51$.
For $\varepsilon=18$ and $n=64$, the robustness is achieved for a number of
users $N \leq 67$.

Other indicators of security of biometric databases is the resilience to attacks
in the outsider and insider scenarios. This is closely related to Theorem~\ref{db_size_outsider} and the
inequations~\eqref{lower_outsider_independent_events} and~\eqref{upper_outsider_independent_events}, and Theorem~\ref{thm:insider_attack:m_bounds} and the
inequations~\eqref{Min_insider_bound} and~\eqref{Max_insider_bound} respectively.
In Table~\ref{Insider_tab}, some parameters of $n$, $N$ and $\varepsilon$ give a poor resistance to those attacks.
Two other scores $S_2$ and $S_3$ are introduced accordingly. By denoting $p_2$
and $p_3$ the lower bounds for the number of rounds of an outsider and an
insider respectively, the corresponding scores are $S_2=\log_2(p_2/2^{128})$ and
$S_3=\log_2(p_3/2^{128})$.
The database is resilient to these attacks if the scores are greater than or equal to $0$.
According to Table~\ref{Insider_tab}, the triplet ($n=256$, $N=10^4$,
$\varepsilon=19$) yields the scores $S_2=44$ and $S_3=18$. However, the
triplet ($n=128$, $N=10^4$, $\varepsilon=12$) yields the low scores $S_2=-71$
and $S_3=-97$.
The table shows large security drops for the less common insider
scenario (\textit{e.g.}, $n=256$, $N=10^8$, $\varepsilon=19$), so that we could
relax the constraint on $S_3$.

Usually, biometric recognition systems are parameterized by adjusting a threshold using a training dataset in experimental evaluations to achieve the Equal Error Rate ({\tt EER}). However, the obtained threshold could be too large, whence not providing the expected level of security based on our analyses. Therefore, a trade-off has to be found between the False Non-Matching Rate ({\tt FNMR}, with respect to the {\tt EER})  and the above security scores.

\section{Concluding Remarks}
\label{conclusion}

Firstly, we studied the case of untargeted attacks and the probability of a near-collision occurrence based on system measurements. This analysis allowed us to highlight the lack of security and accuracy in current systems.
Regarding untargeted attacks and their complexities, the security level of the studied systems does not exceed $16$ bits, whereas a system is considered secure if it provides at least between $128$ and $256$ bits of security (see Table~\ref{Table_FMR_result}). Such security is achievable in theory as shown in the metric space analysis (see Table~\ref{Insider_tab}).
On the other hand, having a low probability of near collision ensures the accuracy of the system, especially for identification systems.
Most of the studied recognition systems have a high near-collision probability (see Table~\ref{Table_FMR_result}).
This is due to the excessively high number of users in their databases. To mitigate this problem, we have provided a method to compute the maximum number of enrolled users to preserve a given security level.

Secondly, we explored the limits these systems could have by modeling them with a metric space.
Under the same assumptions (active attackers with the minimum of information leakage from the system), our results fall in line with a previous work of Pagnin \textit{et al.}~\cite{PagninDAM14} on targeted attacks.
As untargeted attacks are common in password cracking, we have studied this case which was not formally investigated in the literature.
We have presented two biometric recovery attacks regardless of the modalities and provided their complexities in Big-Omicron and Big-Omega as well as experimentations to support those results (see Table~\ref{Insider_tab}).
The results could be simplified thanks to a simpler but equivalent attacker model (see Table~\ref{Tab:equiv_ratio}). Indeed, going from the $\kappa$-adaptive model to the naive attacker leads to a lot of simplifications.
Our results highlight the importance of the choice of the parameters of a database and provide a way to pick them carefully.
We provide a new adaptive security metric based on those attacks and we investigate the probability of a weak near-collision and of a master template occurrence.

Future research directions would be to broaden our results, by considering non-binary templates and other distances.



\section*{Acknowledgement}
The authors acknowledges the support of the French Agence Nationale de la Recherche (ANR), under grant ANR-20-CE39-0005 (project PRIVABIO).

\bibliographystyle{abbrv}
\bibliography{biblio}

\section*{Appendices}

\section{Intermediate results}
\label{sec:appendix:intermediate_result}

In the following, $\mathcal{B} = (v_1, \dots, v_N)$ denotes a template database and each $v \in \mathcal{B}$ is assumed to be uniformly drawn in $\mathbb{F}^n_2$ and independently from each other:  $v \overset{\text{ind}}{\sim} \operatorname{Unif}(\mathbb{F}_2^n)$.
For what follows, we recall that $V_{\varepsilon} = \frac{1}{2^n} \sum_{k=0}^\varepsilon \binom{n}{k}$ and the measure of the intersection of two $\varepsilon-$balls for which the Hamming distance between their centers is $d$, is given by:
\begin{equation*}
    \mathcal{I}^\varepsilon_d = \frac{1}{2^n} \sum\limits_{k=\max(0,d-\varepsilon)}^{\min(\varepsilon,d)} \sum\limits_{i=0}^{\min(\varepsilon-k,\varepsilon-d+k)} \binom{d}{k} \binom{n-d}{i}.
\end{equation*}
Below, we also rely on the following inequality, if $k/n < 1/2$:
\begin{equation}
    \label{eq:approximation_binomial_sum}
    \frac{2^{nh(k/n)}}{\sqrt{8k (1-k/n)}} \leq \sum_{j=0}^k \binom{n}{j} \leq 2^{nh(k/n)}
\end{equation}

Concerning the upper bound of Lemma~\ref{lemma:bounding_intermediate_proba}, notice that it can be larger than 1, which is not relevant.
Furthermore, notice that it implies the same conclusion for the upper bound of Proposition~\ref{prop:bound:nearcol}.
\begin{lemma}
    \label{lemma:condition_N_upper_bound}
    If $N < 2^{n(1-h(\varepsilon/n))}$,  $\varepsilon/n < 1/2$, then $N V_{\varepsilon}<1$.
\end{lemma}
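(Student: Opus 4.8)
The claim is essentially an immediate consequence of the upper bound in the standard binomial-sum estimate~\eqref{eq:approximation_binomial_sum}. The plan is to rewrite $N V_{\varepsilon}$ using the definition $V_{\varepsilon} = 2^{-n}\sum_{k=0}^{\varepsilon}\binom{n}{k}$, bound the sum from above, and then invoke the hypothesis on $N$.

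\begin{proof}
    By definition, $N V_{\varepsilon} = N \cdot 2^{-n} \sum_{k=0}^{\varepsilon} \binom{n}{k}$. Since $\varepsilon/n < 1/2$, the right-hand inequality of~\eqref{eq:approximation_binomial_sum} gives $\sum_{k=0}^{\varepsilon} \binom{n}{k} \leq 2^{nh(\varepsilon/n)}$, hence
    \begin{equation*}
        N V_{\varepsilon} \leq N \cdot 2^{-n} \cdot 2^{nh(\varepsilon/n)} = N \cdot 2^{-n(1-h(\varepsilon/n))}.
    \end{equation*}
    By hypothesis $N < 2^{n(1-h(\varepsilon/n))}$, so $N \cdot 2^{-n(1-h(\varepsilon/n))} < 1$, which yields $N V_{\varepsilon} < 1$.
\end{proof}

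There is no real obstacle here: the only ingredient is the entropy upper bound on a partial binomial sum, which is already recorded as~\eqref{eq:approximation_binomial_sum} in this appendix, and the condition $\varepsilon/n < 1/2$ is exactly what is needed to apply it. The lemma is stated separately only because the quantity $N V_{\varepsilon}$ (and the bounds of Lemma~\ref{lemma:bounding_intermediate_proba} and Proposition~\ref{prop:bound:nearcol} that depend on it) must be a genuine probability, \emph{i.e.}\ at most $1$, for the surrounding arguments to be meaningful; this short computation pins down a clean sufficient condition on $N$ guaranteeing that.
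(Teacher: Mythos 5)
Your proof is correct and follows essentially the same route as the paper: both apply the upper bound $\sum_{k=0}^{\varepsilon}\binom{n}{k} \leq 2^{nh(\varepsilon/n)}$ from~\eqref{eq:approximation_binomial_sum} to get $NV_\varepsilon \leq N\,2^{n(h(\varepsilon/n)-1)}$ and then invoke the hypothesis on $N$. Nothing is missing.
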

\begin{proof}
    First, remark that  $N < 2^{n(1-h(\varepsilon/n))}$ is equivalent to $N2^{n(h(\varepsilon/n)-1)} < 1$ and since $\varepsilon/n < 1/2$,
    $NV_\varepsilon \leq N2^{n(h(\varepsilon/n)-1)}$, and the result follows.
\end{proof}
Note also that the  lower bound of Lemma~\ref{lemma:bounding_intermediate_proba} can be negative if $N$ is large.
Below Lemma~\ref{lemma:condition_N_lower_bound} provides a threshold above which this lower bound is not informative.
\begin{lemma}
    \label{lemma:condition_N_lower_bound}
    If $N < N_\text{max}$, then $N V_{\varepsilon} - \frac{N(N-1)}{2} \mathcal{I}^{\varepsilon}_{\varepsilon+1} > 0$, where
    \begin{equation*}
        N_{max} = 1 + \frac{2^{-\varepsilon}}{\sqrt{8\varepsilon(1-\varepsilon/n)}} \left( \frac{1-\varepsilon/n }{\varepsilon/n}\right)^{\left\lceil \frac{\varepsilon+1}{2}\right\rceil}
    \end{equation*}
\end{lemma}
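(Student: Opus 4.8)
\textbf{Proof plan for Lemma~\ref{lemma:condition_N_lower_bound}.}

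The plan is to reduce the target inequality to a clean statement about the ratio $V_\varepsilon / \mathcal{I}^\varepsilon_{\varepsilon+1}$ and then bound that ratio using results already available in the excerpt. First, since $N \geq 1$, I would divide the desired inequality $N V_\varepsilon - \tfrac{N(N-1)}{2}\mathcal{I}^\varepsilon_{\varepsilon+1} > 0$ by $N$, so that it becomes equivalent to $V_\varepsilon > \tfrac{N-1}{2}\mathcal{I}^\varepsilon_{\varepsilon+1}$, i.e. to
\begin{equation*}
    N < 1 + \frac{2 V_\varepsilon}{\mathcal{I}^\varepsilon_{\varepsilon+1}}.
\end{equation*}
Hence it suffices to show that $N_{\max} \leq 1 + 2 V_\varepsilon / \mathcal{I}^\varepsilon_{\varepsilon+1}$, i.e. that $2 V_\varepsilon / \mathcal{I}^\varepsilon_{\varepsilon+1} \geq N_{\max} - 1$.

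Next I would plug in the two bounds the paper has already established. From the left inequality in~\eqref{eq:approximation_binomial_sum}, applied with $k = \varepsilon$ (valid since $\varepsilon/n < 1/2$), I get the lower bound $V_\varepsilon \geq 2^{n(h(\varepsilon/n)-1)} / \sqrt{8\varepsilon(1-\varepsilon/n)}$. From Lemma~\ref{lemma:entropyballintersection}, applied with $d = \varepsilon+1$ (which requires $\varepsilon+1 \leq 2\varepsilon$, i.e. $\varepsilon \geq 1$, and $\varepsilon/n < 1/2$), I get the upper bound $\mathcal{I}^\varepsilon_{\varepsilon+1} \leq 2^{n(h(\varepsilon/n)-1)+(\varepsilon+1)}\bigl(\tfrac{\varepsilon/n}{1-\varepsilon/n}\bigr)^{\lceil(\varepsilon+1)/2\rceil}$. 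Dividing, the common factor $2^{n(h(\varepsilon/n)-1)}$ cancels, and using $2\cdot 2^{-(\varepsilon+1)} = 2^{-\varepsilon}$ one obtains exactly
\begin{equation*}
    \frac{2 V_\varepsilon}{\mathcal{I}^\varepsilon_{\varepsilon+1}} \;\geq\; \frac{2^{-\varepsilon}}{\sqrt{8\varepsilon(1-\varepsilon/n)}}\left(\frac{1-\varepsilon/n}{\varepsilon/n}\right)^{\left\lceil\frac{\varepsilon+1}{2}\right\rceil} \;=\; N_{\max} - 1,
\end{equation*}
which is precisely what is needed; combining with $N < N_{\max}$ gives $\tfrac{N(N-1)}{2}\mathcal{I}^\varepsilon_{\varepsilon+1} < N V_\varepsilon$ and closes the argument.

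The calculation is essentially mechanical once the two bounds are in place, so there is no deep obstacle; the main care points are (i) checking the hypotheses of Lemma~\ref{lemma:entropyballintersection} for the specific choice $d = \varepsilon+1$ (in particular $\varepsilon \geq 1$, which is implicit in the setting), (ii) tracking the ceiling $\lceil(\varepsilon+1)/2\rceil$ faithfully through the algebra, and (iii) the trivial but worth-stating base case $N = 1$, where the claimed inequality reduces to $V_\varepsilon > 0$. One should also remark, as the surrounding text does for the upper bounds, that this threshold is only meaningful when it exceeds $1$, i.e. that the bound is vacuous for very small $\varepsilon$; this does not affect correctness but is worth a sentence.
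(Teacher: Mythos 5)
Your proposal is correct and follows essentially the same route as the paper's own (very terse) proof: reduce the claim to $N < 1 + 2V_\varepsilon/\mathcal{I}^\varepsilon_{\varepsilon+1}$, then lower-bound $V_\varepsilon$ via the binomial-sum inequality and upper-bound $\mathcal{I}^\varepsilon_{\varepsilon+1}$ via Lemma~\ref{lemma:entropyballintersection} with $d=\varepsilon+1$ so that the entropy factors cancel and exactly $N_{\max}-1$ remains. Your explicit tracking of the hypotheses ($\varepsilon\geq 1$ so the intersection is nonempty and positive) and of the exponent arithmetic simply fills in details the paper leaves implicit.
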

\begin{proof}
    First notice that $N V_{\varepsilon} - \frac{N(N-1)}{2} \mathcal{I}^{\varepsilon}_{\varepsilon+1} > 0$ holds if $N < 1 + 2 V_\varepsilon/\mathcal{I}_{\varepsilon+1}^\varepsilon$.
    By using Lemma~\ref{lemma:entropyballintersection} and \eqref{eq:approximation_binomial_sum} for $\varepsilon/n<  1/2$, $N_{max} \leq  1 + 2 V_\varepsilon/\mathcal{I}_{\varepsilon+1}^\varepsilon$,
    and the result follows.
\end{proof}
\noindent Next, Lemma~\ref{lemma:condition_N_lower_bound_inf_1} provides constraints on $n$ and $\varepsilon$ so that the  lower bound of Lemma~\ref{lemma:bounding_intermediate_proba} is always lower than 1.
\begin{lemma}
    \label{lemma:condition_N_lower_bound_inf_1}
    The equation $N V_{\varepsilon} - \frac{N(N-1)}{2} \mathcal{I}^{\varepsilon}_{\varepsilon+1} < 1$ holds if $\varepsilon/n < 1/2$ and $n(1-2h(\varepsilon/n)) >  2\log_2(3) -\varepsilon$.
\end{lemma}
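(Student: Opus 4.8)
The plan is to start from the observation that the lower bound $N V_\varepsilon - \frac{N(N-1)}{2}\mathcal{I}^\varepsilon_{\varepsilon+1}$, viewed as a function of the real variable $N$, is a downward parabola in $N$ (the coefficient of $N^2$ is $-\tfrac12\mathcal{I}^\varepsilon_{\varepsilon+1} < 0$). Hence its maximum over all $N$ is attained at the vertex $N^\star = \tfrac12 + V_\varepsilon/\mathcal{I}^\varepsilon_{\varepsilon+1}$, and the maximal value equals $\tfrac12\big(V_\varepsilon + \tfrac{V_\varepsilon^2}{\mathcal{I}^\varepsilon_{\varepsilon+1}}\big) - \tfrac18 \mathcal{I}^\varepsilon_{\varepsilon+1}$, or more simply is bounded above by $V_\varepsilon\big(\tfrac12 + \tfrac{V_\varepsilon}{2\mathcal{I}^\varepsilon_{\varepsilon+1}}\big)$. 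So it suffices to show this vertex value is $< 1$, and for that it suffices to show $V_\varepsilon^2 / \mathcal{I}^\varepsilon_{\varepsilon+1} < 2$ together with $V_\varepsilon < 1$ (which holds trivially, or via Lemma~\ref{lemma:condition_N_upper_bound} under the stated hypothesis). Actually the cleanest route is: the lower bound is $< 1$ for all $N$ as soon as $\tfrac{1}{2}\big(V_\varepsilon + V_\varepsilon^2/\mathcal{I}^\varepsilon_{\varepsilon+1}\big) \le 1$, i.e. it is enough to bound $V_\varepsilon^2/\mathcal{I}^\varepsilon_{\varepsilon+1}$.

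Next I would estimate the two relevant quantities. For the numerator, use the upper half of~\eqref{eq:approximation_binomial_sum}: $V_\varepsilon \le 2^{n(h(\varepsilon/n)-1)}$, so $V_\varepsilon^2 \le 2^{2n(h(\varepsilon/n)-1)}$. For the denominator, I need a \emph{lower} bound on $\mathcal{I}^\varepsilon_{\varepsilon+1}$; since Lemma~\ref{lemma:entropyballintersection} gives only an upper bound, I instead bound $\mathcal{I}^\varepsilon_{\varepsilon+1}$ below by retaining a single dominant term of its defining double sum (say $k = \lceil (\varepsilon+1)/2\rceil$, $i$ ranging up to $\lfloor (\varepsilon+1)/2\rfloor$), which yields something of the shape $\mathcal{I}^\varepsilon_{\varepsilon+1} \ge \tfrac{1}{2^n}\binom{\varepsilon+1}{\lceil(\varepsilon+1)/2\rceil}\sum_{i=0}^{\lfloor(\varepsilon+1)/2\rfloor}\binom{n-\varepsilon-1}{i}$, and then invoke~\eqref{eq:approximation_binomial_sum} again and the crude bound $\binom{\varepsilon+1}{\lceil(\varepsilon+1)/2\rceil}\ge 2^{\varepsilon+1}/(\varepsilon+2)$ (or $\ge 1$). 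Assembling these gives $V_\varepsilon^2/\mathcal{I}^\varepsilon_{\varepsilon+1} \le c\cdot 2^{2n(h(\varepsilon/n)-1)} \cdot 2^{-n(h(\varepsilon/n)-1)} = c\cdot 2^{n(h(\varepsilon/n)-1)}$ up to the polynomial correction factors $\sqrt{8\varepsilon(1-\varepsilon/n)}$ and $2^{-\varepsilon}$-type terms. Requiring this to be $\le 2$ and taking $\log_2$ turns the inequality into a linear condition of the form $n(1-2h(\varepsilon/n)) \ge (\text{something involving }\varepsilon\text{ and constants})$; matching the bookkeeping to the claimed threshold $2\log_2(3) - \varepsilon$ is the final algebraic step.

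The main obstacle I anticipate is obtaining a clean enough \emph{lower} bound on $\mathcal{I}^\varepsilon_{\varepsilon+1}$: the formula for $\mathcal{I}^\varepsilon_d$ is a double sum, and pinning down which term dominates (and controlling the summation range coming from $\min(\varepsilon-k,\varepsilon-d+k)$ when $d = \varepsilon+1$) without losing too much in the exponent requires care — lose an exponential factor there and the final linear condition degrades. A secondary subtlety is tracking the $\lceil\cdot\rceil$ in $\lceil(\varepsilon+1)/2\rceil$ and the parity of $\varepsilon$ so that the constant ends up as exactly $2\log_2 3$ rather than a slightly worse value; I would expect the authors absorb the parity and polynomial-in-$\varepsilon$ slack into the $-\varepsilon$ term and the constant $2\log_2 3$, and I would follow the same bookkeeping. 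Once the lower bound on $\mathcal{I}^\varepsilon_{\varepsilon+1}$ is in hand, the rest is the routine parabola-vertex computation plus one application each of the two halves of~\eqref{eq:approximation_binomial_sum}.
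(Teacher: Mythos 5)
Your skeleton is exactly the paper's: view $f(N) = -\tfrac{\mathcal{I}^{\varepsilon}_{\varepsilon+1}}{2}N^2 + (V_\varepsilon + \tfrac{\mathcal{I}^{\varepsilon}_{\varepsilon+1}}{2})N$ as a downward parabola, bound it by its vertex value $(V_\varepsilon + \tfrac12\mathcal{I}^{\varepsilon}_{\varepsilon+1})^2/(2\mathcal{I}^{\varepsilon}_{\varepsilon+1})$, use $\mathcal{I}^{\varepsilon}_{\varepsilon+1} < V_\varepsilon$ to reduce everything to controlling $V_\varepsilon^2/\mathcal{I}^{\varepsilon}_{\varepsilon+1}$ (the factor $9$, i.e.\ $2\log_2 3$, comes precisely from $(3V_\varepsilon/2)^2$ here), then apply the entropy upper bound on $V_\varepsilon$ and a combinatorial lower bound on $\mathcal{I}^{\varepsilon}_{\varepsilon+1}$ and take logarithms. (Minor slip: the vertex value is $\tfrac{V_\varepsilon^2}{2\mathcal{I}^{\varepsilon}_{\varepsilon+1}} + \tfrac{V_\varepsilon}{2} + \tfrac{\mathcal{I}^{\varepsilon}_{\varepsilon+1}}{8}$, with a plus sign on the last term, so your ``more simply bounded above by'' expression is not literally an upper bound; this is harmless since $\mathcal{I}^{\varepsilon}_{\varepsilon+1} < V_\varepsilon$.) The one place you genuinely diverge is the step you flag as the main obstacle: lower-bounding $\mathcal{I}^{\varepsilon}_{\varepsilon+1}$. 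You keep a single central value of $k$ and the full inner sum over $i$, which is a valid lower bound but drags in a second entropy term $h(\cdot/(n-\varepsilon-1))$ whose $n$-dependence does not cancel cleanly against $V_\varepsilon^2$, so the resulting sufficient condition is messier (though stronger, hence still implied by the stated hypothesis). The paper instead takes the complementary slice in Lemma~\ref{lemma:entropyballintersection_min_epsilon_p_1}: keep only $i=0$ in the inner sum and sum over all $k$, giving $\mathcal{I}^{\varepsilon}_{\varepsilon+1} \geq 2^{-n}(2^\varepsilon - 1) \geq 2^{-n}2^{\varepsilon-1}$. That crude bound is exactly what makes the $n$-dependence collapse to $2^{-n(1-2h(\varepsilon/n))}$ and the constant come out as $2\log_2(3) - \varepsilon$ with no parity bookkeeping. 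So: correct plan, same architecture, but the intersection bound you anticipate fighting with has a much simpler resolution than the one you propose.
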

\begin{proof}
    Write the quadratic function as
    \begin{equation*}
        f(N) = -N^2 \mathcal{I}_{\varepsilon+1}^\varepsilon/2 + N(V_\varepsilon + \mathcal{I}_{\varepsilon+1}^\varepsilon/2)
    \end{equation*}
    and determine that the quadratic function in $N$ is bounded above by:
    \begin{equation*}
        f\Big( (V_\varepsilon + \frac{1}{2}\mathcal{I}_{\varepsilon+1}^\varepsilon) / \mathcal{I}_{\varepsilon+1}^\varepsilon \Big) = (V_\varepsilon + \frac{1}{2} \mathcal{I}_{\varepsilon+1}^\varepsilon)^2 / 2 \mathcal{I}_{\varepsilon+1}^\varepsilon < \frac{9}{2} \frac{V_\varepsilon^2}{\mathcal{I}_{\varepsilon+1}^\varepsilon}
    \end{equation*}
    with $\mathcal{I}_{\varepsilon+1}^{\varepsilon} < V_{\varepsilon}$ since the intersection of two $\varepsilon$-balls is lower than an $\varepsilon$-ball as soon as the Hamming distance $d$ between theirs centers is different from $0$.
    According to Lemma~\ref{lemma:entropyballintersection_min_epsilon_p_1},
    \begin{equation*}
        \frac{9}{2} \frac{V_\varepsilon^2}{\mathcal{I}_{\varepsilon+1}^\varepsilon} \leq \frac{9}{2} \frac{2^{-2n(1-h(\varepsilon/n))}}{2^{-n} \left( 2^\varepsilon -1 \right)} = \frac{9}{2^{\varepsilon}} 2^{-n(1-2h(\varepsilon/n))}
    \end{equation*}
    since $\varepsilon >0$ so that $2^\varepsilon-1 \leq 2^{\varepsilon-1}$.
    Now, observe that
    \begin{equation*}
        9 \times 2^{-\varepsilon} 2^{-n(1-2h(\varepsilon/n))} <1  \, \Leftrightarrow \,
        n(1-2h(\varepsilon/n)) > 2\log_2(3) -\varepsilon
    \end{equation*}
\end{proof}
One can see that $h(1/10) \approx 0.47$, therefore if $\varepsilon \geq 4$ and $\varepsilon/n < 1/10$ then $n(1-2h(\varepsilon/n)) >  2\log_2(3) -\varepsilon$.
Moreover notice for the same reason as Lemma~\ref{lemma:condition_N_lower_bound} that the lower bound of Proposition~\ref{prop:bound:nearcol} decreases for large value of $N$.
However, one can expect that the probability of a weak collision increases as $N$ becomes larger.
Then, the aforementioned lower bound is not relevant for large value of $N$.
For what follows, $b_N$ denotes the lower bound of the probability of a weak collision.
\begin{proposition}
    \label{prop:borne_inf_decroissante}
    The lower bound $b_N$ is increasing if $N < N_\text{max}$, where $N_{max} = 1 + \frac{2^{-\varepsilon}}{\sqrt{8\varepsilon(1-\varepsilon/n)}} \left( \frac{1-\varepsilon/n }{\varepsilon/n}\right)^{\left\lceil \frac{\varepsilon+1}{2}\right\rceil}$
    and if $n(1-2h(\varepsilon/n)) >  2\log_2(3) -\varepsilon$.
\end{proposition}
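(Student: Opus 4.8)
The plan is to argue directly from the closed form for $b_N$ in Proposition~\ref{prop:bound:nearcol}. Set $c_j := 1 - (j-1)V_\varepsilon + \tfrac{(j-1)(j-2)}{2}\mathcal{I}^\varepsilon_{\varepsilon+1}$, so that $b_N = 1 - \prod_{j=1}^N c_j$. A one-line telescoping computation then gives
\[
b_{N+1} - b_N \;=\; \prod_{j=1}^N c_j \;-\; \prod_{j=1}^{N+1} c_j \;=\; \Bigl(\prod_{j=1}^N c_j\Bigr)\bigl(1 - c_{N+1}\bigr),
\]
so it is enough to establish, under the two hypotheses, that (i) $1 - c_{N+1} > 0$ and (ii) the partial product $\prod_{j=1}^N c_j$ is strictly positive.

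For (i), note that $1 - c_{N+1} = N V_\varepsilon - \tfrac{N(N-1)}{2}\mathcal{I}^\varepsilon_{\varepsilon+1}$, which is precisely the quantity shown to be positive in Lemma~\ref{lemma:condition_N_lower_bound} whenever $N < N_\text{max}$; hence $c_{N+1} < 1$. For (ii), observe that for each $j$ we may write $c_j = 1 - \bigl((j-1)V_\varepsilon - \tfrac{(j-1)(j-2)}{2}\mathcal{I}^\varepsilon_{\varepsilon+1}\bigr)$, that is, one minus the lower bound of Lemma~\ref{lemma:bounding_intermediate_proba} for a database of size $j-1$. By Lemma~\ref{lemma:condition_N_lower_bound_inf_1}, the hypothesis $n(1-2h(\varepsilon/n)) > 2\log_2(3) - \varepsilon$ (together with $\varepsilon/n < 1/2$) forces $M V_\varepsilon - \tfrac{M(M-1)}{2}\mathcal{I}^\varepsilon_{\varepsilon+1} < 1$ for every $M$, and in particular for $M = j-1$; therefore $c_j > 0$ for all $j$, and a product of finitely many positive numbers is positive.

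Putting (i) and (ii) together yields $b_{N+1} - b_N > 0$ for every $N < N_\text{max}$, which is the claim. There is no real difficulty beyond the telescoping identity: the two technical estimates are already isolated in Lemmas~\ref{lemma:condition_N_lower_bound} and~\ref{lemma:condition_N_lower_bound_inf_1}. The only point deserving care is to check that both hypotheses are genuinely used and cannot be dispensed with — the bound $N < N_\text{max}$ is what controls the newly introduced factor $c_{N+1}$, while the entropy-type inequality is what keeps all the earlier factors $c_j$ ($1 \le j \le N$) positive: the maximum of the underlying quadratic in $j$ lies inside the interval $[1, N_\text{max}]$, so positivity of the whole product cannot be deduced from $N < N_\text{max}$ alone.
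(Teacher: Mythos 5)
Your proof is correct and follows essentially the same route as the paper: telescope the product to obtain $b_{N+1}-b_N=\bigl(1-c_{N+1}\bigr)\prod_{j=1}^{N}c_j$, control the new factor $1-c_{N+1}=NV_\varepsilon-\tfrac{N(N-1)}{2}\mathcal{I}^{\varepsilon}_{\varepsilon+1}$ via Lemma~\ref{lemma:condition_N_lower_bound} under $N<N_\text{max}$, and guarantee positivity of each earlier factor $c_j$ via Lemma~\ref{lemma:condition_N_lower_bound_inf_1} under the entropy condition. Incidentally, your telescoping identity is the correct one: the paper's displayed difference writes $b_N$ where it should write $\prod_{j=1}^{N}c_j=1-b_N$, although its subsequent reasoning (positivity of both factors) matches yours.
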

\begin{proof}
    Recall that:
    \begin{equation*}
        b_N = 1 - \prod_{j=1}^N \left( 1 - (j-1) V_\varepsilon + \frac{(j-1)(j-2)}{2} \mathcal{I}_{\varepsilon+1}^\varepsilon \right).
    \end{equation*}
    Consider the difference $b_{N+1} - b_N = \left(N V_\varepsilon - \frac{N(N-1)}{2} \mathcal{I}_{\varepsilon+1}^{\varepsilon} \right) b_N$.
    According to Lemma~\ref{lemma:condition_N_lower_bound}, $N V_\varepsilon - \frac{N(N-1)}{2} \mathcal{I}_{\varepsilon+1}^{\varepsilon} > 0$ since $N < N_\text{max}$ and $1 - (j-1) V_\varepsilon + \frac{(j-1)(j-2)}{2} \mathcal{I}_{\varepsilon+1}^\varepsilon > 0$.
    according to Lemma~\ref{lemma:condition_N_lower_bound_inf_1} if $n(1-2h(\varepsilon/n)) >  2\log_2(3) -\varepsilon$. Therefore, $b_{N+1} - b_N>0$ and then $b_N$ is increasing if constraints on $N,n$ and $\varepsilon$ hold.
\end{proof}

Proof of the theorem~\ref{thm:insider_attack:m_bounds}.
\begin{proof}
    As the result focuses on the number of round(s) until a success, it suggests that none of the users succeeds at impersonating another one during Round 0.
    It means that the event considered below is given that there is no near-collision at Round 0, which corresponds to Event $\overline{W_0}$.
    At each round, the $k^{\text{th}}$ user generates  a new template $v_k'$, independently of other users and independently to its own template, and the probability that at a given round this user is successfully authenticated as another one is given by:
    \begin{equation*}
        p_k = \mathbb{P}\left(v_k' \in \bigcup\limits_{j\neq k}^N B_\varepsilon(v_j) \, \Big| \, \overline{W_0}\right).
    \end{equation*}
    Then, the probability that at least one user succeeds is:
    \begin{align*}
        p_w & = \mathbb{P}\left( \exists v_k' \text{ such that } v_k' \in \bigcup\limits_{j\neq k}^N B_\varepsilon(v_j) \, \Big| \, \overline{W_0} \right)                  \\
            & = 1 - \prod_{k=1}^N \mathbb{P}\left( v_k' \notin \bigcup\limits_{j\neq k}^N B_\varepsilon(v_j) \, \Big| \, \overline{W_0} \right) = 1 - \prod_{k=1}^N (1-p_k)
    \end{align*}
    and furthermore
    \begin{align*}
        \log(1-p_w) = \sum_{k=1}^N \log(1-p_k).
    \end{align*}
    Since each round is independent to other rounds, and that the probability of success is the same at each round, the number of round(s) until a user succeeds follows a geometric distribution.
    For a geometric distribution of probability $p_w$, the median is given by
    \begin{equation*}
        m_\text{in} = \left\lceil \frac{-\log 2}{\log (1-p_w)} \right\rceil.
    \end{equation*}
    According to Lemma~\ref{lemma:bounding_intermediate_proba} it follows that
    \begin{align}
        \label{eq:median_bounding_log}
        \log(1-p_w) & \geq N \log\left(1- (N-1) V_\varepsilon\right)                                                                       \\
        \log(1-p_w) & \leq N \log \left(1-(N-1) V_\varepsilon + \frac{(N-1)(N-2)}{2} \mathcal{I}_{\varepsilon+1}^\varepsilon\right) \notag
    \end{align}
    which is required to compute lower and upper bounds of $m_\text{in}$.
    Then, the lower bound is obtained according to \eqref{eq:median_bounding_log}:
    \begin{align*}
        m_\text{in} & \geq \left\lceil \frac{-\log 2}{N \log (1-(N-1) V_\varepsilon)} \right\rceil
        \\
                    & \geq \left\lceil \frac{-\log 2}{
            N \log \left(
            1-(N-1)  2^{n \left(
                    h(\varepsilon/n) - 1
                    \right) }
            \right)
        }    \right\rceil                                                                                                                                                 \\
                    & \geq \left\lceil \frac{ \log 2 }{N(N-1) 2^{n \left(h(\varepsilon/n) -1 \right)} + N(N-1)^2 2^{2n \left(h(\varepsilon/n) -1 \right)  }} \right\rceil
    \end{align*}
    since $ \log (1-p) \geq -p-p^2$. Next, notice that $2( h(\varepsilon/n) -1) < ( h(\varepsilon/n) -1)$ since $ \varepsilon/n < 1/2$, then
    \begin{equation}
        \label{Min_insider_bound}
        m_\text{in} \geq \left\lceil \frac{ (\log 2) 2^{n \left(1-h(\varepsilon/n)  \right) }}{N^2(N-1) } \right\rceil.
    \end{equation}
    Concerning the upper bound, consider the upper bound of~\eqref{eq:median_bounding_log}:
    \begin{align*}
        m_\text{in} & \leq  \left\lceil \frac{\log 2}{N (N-1) \left( V_\varepsilon - \frac{(N-2)}{2} \mathcal{I}_{\varepsilon+1}^\varepsilon \right)} \right\rceil
    \end{align*}
    since $ \log (1-p) \leq -p$.
    By using a lower bound for $V_\varepsilon$ and Lemma~\ref{lemma:entropyballintersection}, and by denoting $c = \left\lceil \frac{\varepsilon +1 }{2}  \right\rceil$ to ease the~inequations:
    \begin{align*}
        m_\text{in}
         & \leq \left\lceil \frac{(\log 2) 2^{n ( 1-h(\varepsilon/n) )}}{N (N-1) \left(
            \frac{
                1
            }{
                \sqrt{8\varepsilon (1-\varepsilon/n)}
            }
            - (N-2) \left( \frac{\varepsilon/n}{1- \varepsilon/n} \right)^c 2^{\varepsilon }
            \right)
        }
        \right\rceil.
    \end{align*}
    Next, according to \eqref{eq:condition_sur_N}
    \begin{equation*}\frac{
            1
        }{
            \sqrt{8\varepsilon (1-\varepsilon/n)}
        }
        - (N-2) \left( \frac{\varepsilon/n}{1- \varepsilon/n} \right)^{\left\lceil \frac{\varepsilon+1}{2} \right\rceil } 2^{\varepsilon } \geq 2^{-n\alpha}
    \end{equation*}
    and then
    \begin{equation}
        \label{Max_insider_bound}
        m_\text{in} \leq  \left\lceil \frac{(\log 2) \, 2^{n ( 1-h(\varepsilon/n) + \alpha )}}{N (N-1) }
        \right\rceil
    \end{equation}
\end{proof}

The proof of Lemma~\ref{lemma:bounding_intermediate_proba}.
\begin{proof}[Proof of Lemma~\ref{lemma:bounding_intermediate_proba}]
    Notice that if the $N$  $\varepsilon$-balls are disjoints, then the cardinal of the complementary of $\cup_{k=1}^{N} B_{\varepsilon (v_k)}$ is minimal.
    Then, by denoting $D_{N}$ the event that the $N$ $\varepsilon$-balls are disjoints, a lower bound of the conditional probability is:
    \begin{align*}
        \mathbb{P} \Big( \tilde v \notin \cup_{k=1}^{N}  B_{\varepsilon}(v_k)  \, \big| \, \overline{W_N} \Big) \geq \mathbb{P} \Big( \tilde v \notin \cup_{k=1}^{N}  B_{\varepsilon}(v_k) \, \big| \, D_{N}\Big)                                                          = 1 -  N V_{\varepsilon}
    \end{align*}
    that yields the upper bound.
    Concerning the lower bound, consider the complementary event:
    \begin{align*}
        1-\mathbb{P}\Big(\tilde v \notin \cup_{k=1}^{N} B_{\varepsilon}(v_k) \, \big| \, \overline{W_N}\Big) & = \mathbb{P}\Big(\tilde v \in \cup_{k=1}^{N} B_{\varepsilon}(v_k) \, \big| \, \overline{W_N}\Big)                                         \\
                                                                                                             & = \mathbb{P}\Big(\tilde v \in B_{\varepsilon}(v_1) \, \big| \, \overline{W_N}\Big)                                                        \\
                                                                                                             & \quad+\mathbb{P}\Big(\tilde v \in \cup_{k=2}^{N} B_{\varepsilon}(v_k) \, \big| \, \overline{W_N}\Big)                                     \\
                                                                                                             & \quad  - \mathbb{P}\Big(\tilde v \in B_{\varepsilon}(v_1) \, \cap \,  \cup_{k=2}^{N} B_{\varepsilon}(v_k) \, \big| \, \overline{W_N}\Big)
    \end{align*}
    and notice that:
    \begin{align*}
        \mathbb{P}\Big(\tilde v \in B_{\varepsilon}(v_1) \cap \bigcup\limits_{k=2}^{N} B_{\varepsilon}(v_k) \, \Big| \, \overline{W_N}\Big)
        \leq \sum\limits_{k=2}^{N} \mathbb{P}\Big(\tilde v \in  B_{\varepsilon}(v_1) \cap B_{\varepsilon}(v_k) \, \Big| \, \overline{W_N}\Big).
    \end{align*}
    and so on for each $v_k$ for $k=2,\dots,N-1$, then
    \begin{align*}
         & 1-\mathbb{P}\Big(\tilde v \notin \bigcup\limits_{k=1}^{N} B_{\varepsilon}(v_k) \, \Big| \, \overline{W_N}\Big) \geq \sum\limits_{k=1}^{N} \mathbb{P}\Big(\tilde v \in B_{\varepsilon}(v_k) \, \Big| \, \overline{W_N}\Big) \\
         & \qquad  - \sum\limits_{k=1}^{N-1} \sum\limits_{\ell=k+1}^{N} \mathbb{P}\Big(\tilde v \in  B_{\varepsilon}(v_k) \cap B_{\varepsilon}(v_\ell) \, \Big| \, \overline{W_N}\Big).
    \end{align*}
    Moreover, one can see that
    \begin{align*}
        \mathbb{P}\Big(\tilde v \in B_{\varepsilon}(v_k) \, \Big| \, \overline{W_N}\Big) & = \mathbb{P}\Big(\tilde v \in B_{\varepsilon}(v_k)\Big) = V_{\varepsilon}
    \end{align*}
    and
    \begin{align*}
         & \mathbb{P}\Big(\tilde v \in  B_{\varepsilon}(v_k) \cap B_{\varepsilon}(v_\ell) \, \Big| \, \overline{W_N}\Big)                                                                      \\
         & = \mathbb{P}\Big(\tilde v \in  B_{\varepsilon}(v_k) \cap B_{\varepsilon}(v_\ell) \, \Big| \, d_\mathcal{H}(v_k,v_\ell)> \varepsilon\Big)                                            \\
         & = \mathbb{P}(d_\mathcal{H}(v_k, v_\ell) > \varepsilon)^{-1} \mathbb{P}(\tilde v \in  B_{\varepsilon}(v_k) \cap B_{\varepsilon}(v_\ell), d_\mathcal{H}(v_k, v_\ell) > \varepsilon  ) \\
         & = \mathbb{P}(d_\mathcal{H}(v_k, v_\ell) > \varepsilon)^{-1}  \times                                                                                                                 \\
         & \qquad  \sum\limits_{k=\varepsilon+1}^n \mathbb{P}(\tilde v \in  B_{\varepsilon}(v_k) \cap B_{\varepsilon}(v_\ell), d_\mathcal{H}(v_k, v_\ell) = k  )                               \\
         & = \mathbb{P}(d_\mathcal{H}(v_k, v_\ell) > \varepsilon)^{-1}    \times \sum\limits_{k=\varepsilon+1}^n  \mathcal{I}_{k}^{\varepsilon} \, \mathbb{P}(d_\mathcal{H}(v_k, v_\ell) =k)
    \end{align*}
    Since $ \mathcal{I}_k^{\varepsilon} \geq \mathcal{I}_j^{\varepsilon}$ for $k \leq j$ and for any $\varepsilon$, and $\mathcal{I}_k^{\varepsilon}=0$ if $k > \varepsilon$, then
    \begin{align*}
        \mathbb{P}\Big(\tilde v \in  B_{\varepsilon}(v_k) \cap B_{\varepsilon}(v_\ell) \, \Big| \, \overline{W_N}\Big) & = \frac{\sum\limits_{k=\varepsilon+1}^{2\varepsilon} \mathcal{I}_{k}^{\varepsilon} \binom{n}{k} }{\sum\limits_{k=\varepsilon+1}^n \binom{n}{k}}              \\
                                                                                                                       & \leq \mathcal{I}_{\varepsilon+1}^{\varepsilon} \frac{V_{2\varepsilon} - V_{\varepsilon}}{1 - V_{\varepsilon}} \leq \mathcal{I}_{\varepsilon+1}^{\varepsilon}
    \end{align*}
    Then,
    \begin{align}
        1-\mathbb{P}\Big(\tilde v \notin \bigcup\limits_{k=1}^{N} B_{\varepsilon}(v_k) \, \Big| \, \overline{W_N}\Big) & \geq \sum\limits_{k=1}^{N} V_{\varepsilon} - \sum\limits_{k=1}^{N-1} \sum\limits_{\ell=k+1}^{N} \mathcal{I}^{\varepsilon}_{\varepsilon+1} \notag \\
                                                                                                                       & = N V_{\varepsilon} - \frac{N(N-1)}{2} \mathcal{I}^{\varepsilon}_{\varepsilon+1}.
        \label{eq:lemma_bound}
    \end{align}
\end{proof}

\section{Balls intersection}
\label{sec:appendix:balls_intersection}

\noindent In a recent work, \cite{durbet2022near} indicates that determining the intersection between multiple $\varepsilon-$balls in $\mathbb{F}^n_2$ corresponds to solve a specific linear system $L$, which depends on the template database.
First, we provide Proposition~\ref{prop:disjoint_balls} that indicates the probability that there is none intersection between balls, or in other words that all the balls are disjoint sets.
\begin{proposition}
    \label{prop:disjoint_balls}
    For a template database $\mathcal{B} = (v_1, \dots, v_N)$, $D_N$ is the event "the balls $B_\varepsilon(v_k)$, for $k \in \left\{ 1, \dots, N\right\}$ are disjoint sets in $\mathbb{F}^n_2$".
    Then,
    \begin{align*}
        \prod\limits_{j=1}^N \big(1- (j-1)V_{2\varepsilon}  \big) \leq  \mathbb{P}(D_N ) \\
        \prod\limits_{j=2}^N \left(1- (j-1 ) V_{2\varepsilon} + \frac{(j-1)(j-2)}{2}\mathcal{I}^{2\varepsilon}_{2\varepsilon+1} \right) \ge \mathbb{P}(D_N ).
    \end{align*}
\end{proposition}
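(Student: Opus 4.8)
The plan is to recognise $D_N$ as (the complement of) a weak near-collision event and then to rerun, essentially verbatim, the proof of Proposition~\ref{prop:bound:nearcol} with the ball radius doubled. The starting point is a purely metric observation: in $\mathbb{F}_2^n$ the balls $B_\varepsilon(v_k)$ and $B_\varepsilon(v_\ell)$ are disjoint if and only if $d_\mathcal{H}(v_k,v_\ell) > 2\varepsilon$ (one direction is the triangle inequality; the other exhibits a point at distance $\le \varepsilon$ from both centres when $d_\mathcal{H}(v_k,v_\ell)\le 2\varepsilon$). Hence $D_N$ is exactly the event that every pair among $v_1,\dots,v_N$ has Hamming distance strictly larger than $2\varepsilon$, that is, the complement of a weak $2\varepsilon$-near-collision in the sense of Definition~\ref{NCol} applied with threshold $2\varepsilon$. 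In particular, for each $j$, the event ``$v_j$ lies at distance $>2\varepsilon$ from all of $v_1,\dots,v_{j-1}$'' equals $\{v_j \notin \bigcup_{k=1}^{j-1} B_{2\varepsilon}(v_k)\}$.

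Next I would decompose $\mathbb{P}(D_N)$ by the chain rule on the successively drawn, mutually independent templates, exactly as in Equation~\eqref{eq:WC_chain_rule}:
\[
\mathbb{P}(D_N) = \prod_{j=2}^N \mathbb{P}\Big(v_j \notin \bigcup_{k=1}^{j-1} B_{2\varepsilon}(v_k) \, \Big| \, D_{j-1}\Big),
\]
where $D_{j-1}$ is the event that the first $j-1$ balls are already pairwise disjoint. The key remark is that Lemma~\ref{lemma:bounding_intermediate_proba} and its proof are combinatorial-geometric in the ball radius only: the radius enters merely as a parameter, the sole structural inputs being the monotonicity $\mathcal{I}^{r}_k \ge \mathcal{I}^{r}_j$ for $k\le j$ and the vanishing $\mathcal{I}^{r}_k=0$ for $k>2r$. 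Applying that lemma with radius $r=2\varepsilon$, with its database playing the role of $\{v_1,\dots,v_{j-1}\}$ and its fresh template playing the role of $v_j$ (its conditioning event $\overline{W_{j-1}}$ then being precisely $D_{j-1}$), yields for each $j$
\[
1 - (j-1)V_{2\varepsilon} \;\le\; \mathbb{P}\Big(v_j \notin \bigcup_{k=1}^{j-1} B_{2\varepsilon}(v_k) \, \Big| \, D_{j-1}\Big) \;\le\; 1 - (j-1)V_{2\varepsilon} + \frac{(j-1)(j-2)}{2}\,\mathcal{I}^{2\varepsilon}_{2\varepsilon+1}.
\]
Multiplying over $j=2,\dots,N$, and noting that the $j=1$ factor equals $1$ in both products so that the lower product may equivalently be written starting from $j=1$, produces the two stated bounds.

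The only point that needs care, exactly as for the analogous appendix statements (cf.\ Lemmas~\ref{lemma:condition_N_upper_bound}--\ref{lemma:condition_N_lower_bound_inf_1}), is that taking a product of one-sided bounds preserves the inequality only as long as the factors remain nonnegative; thus the lower bound is informative in the regime where $(N-1)V_{2\varepsilon}<1$ (for instance $N<2^{n(1-h(2\varepsilon/n))}$), and for the upper bound one keeps each factor $1-(j-1)V_{2\varepsilon}+\binom{j-1}{2}\mathcal{I}^{2\varepsilon}_{2\varepsilon+1}$ in $[0,1]$. I do not expect a genuine obstacle here: once the reduction ``disjoint balls $\iff$ centres $2\varepsilon$-apart'' is made, the statement is literally the proof of Proposition~\ref{prop:bound:nearcol} read with $\varepsilon\mapsto 2\varepsilon$, the substantive work having already been carried out in Lemma~\ref{lemma:bounding_intermediate_proba}; the only thing worth double-checking is that nothing in that lemma's proof secretly relied on the radius being the system threshold rather than twice it — and it does not.
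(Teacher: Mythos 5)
Your proposal is correct and follows essentially the same route as the paper: the paper's proof likewise begins with the observation that two $\varepsilon$-balls are disjoint iff their centres are at Hamming distance greater than $2\varepsilon$ (i.e., $v_j \notin B_{2\varepsilon}(v_k)$), applies the chain rule to write $\mathbb{P}(D_N)$ as a product of conditional probabilities, and bounds each factor via Lemma~\ref{lemma:bounding_intermediate_proba} instantiated at radius $2\varepsilon$. Your added remark about keeping the factors nonnegative is a sensible caveat the paper defers to its appendix lemmas, but it does not change the argument.
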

\begin{proof}
    Notice that two balls $B_\varepsilon(v_1)$ and $B_\varepsilon(v_2)$ are disjoint sets iff $d_\mathcal{H}(v_1,v_2) > 2\varepsilon$, which can be formulated as: $v_2 \notin B_{2\varepsilon}(v_1)$.
    By using chain rule and where $\mathcal{B} = \{ v_1, \dots, v_N \}$:
    \begin{align}
        \label{eq:DN_chain_rule}
        \mathbb{P}(D_N) =  \prod_{j=2}^N \mathbb{P}\Big(v_j \notin \bigcup\limits_{k=1}^{j-1}  B_{2\varepsilon}(v_k)  \, \Big| \,  \overline{W_{j-1}} \Big)
    \end{align}
    where $\overline{W_{j-1}}$ denotes the event "$v_2 \notin B_{2\varepsilon}(v_1) , \dots, v_{j-1} \notin \bigcup\limits_{k=1}^{j-2}  B_{2\varepsilon}(v_k)$".
    Then, according to Lemma~\ref{lemma:bounding_intermediate_proba}, each term of Equation~\eqref{eq:DN_chain_rule} can be bounded by below and above.
    For instance, the $j^\text{th}$ one gives:
    \begin{align*}
        \mathbb{P}\left(v_j \notin \cup_{k=1}^{j-1}  B_{2\varepsilon}(v_k)  \, \big| \,   \overline{W_{j-1}} \right) & \ge 1- (j-1)V_{2\varepsilon}                                             \\
        \mathbb{P}\left(v_j \notin \cup_{k=1}^{j-1}  B_{2\varepsilon}(v_k)  \, \big| \,   \overline{W_{j-1}} \right) & \leq 1- (j-1 ) V_{2\varepsilon}                                          \\
                                                                                                                     & \quad + \mathcal{I}^{2\varepsilon}_{2\varepsilon+1}\frac{(j-1)(j-2)}{2}.
    \end{align*}
\end{proof}

In the following, $d_S$ denotes the Hamming distance for a subset $S \subset \{1,\dots,n\}$, and $S = \{ S_1, \dots, S_K\}$ a partition of $\{1,\dots,n\}$.
Moreover, an equivalence relation $\sim_\mathcal{B}$ is defined on $\{1,\dots,n\}$ depending on the template database $\mathcal{B}$:
\begin{equation}
    k \sim_\mathcal{B} j \quad \text{ if } C^k = C^j \text{ or } C^k = 1 - C^j,
\end{equation}
where $C^k$ is the $k^\text{th}$ column of the template database, in other words $C^k_j$ is the value of the $j^\text{th}$ template $v_j$ on the $k^\text{th}$ coordinate of $\mathbb{F}^n_2$.
\begin{definition}[Partitioning according to $\sim_\mathcal{B}$]
    For a given template database $\mathcal{B}$, a partition $S = (S_1, \dots, S_K)$ of $\{1,\dots,n\}$ is in accordance with $\sim_\mathcal{B}$ if each $C_k$ is an equivalence class of $\sim_\mathcal{B}$.
\end{definition}
Consider $I = (I_1, \dots, I_{K_0})$ denoting the smallest partition of $\{1,\dots,n\}$ obtained from the $\sim_\mathcal{B}$ relation.
Theorem~\ref{thm:durbet2022near:intersection_linear_system} from~\cite{durbet2022near} is recalled below.
\begin{theorem}[Balls intersection and linear system]
    \label{thm:durbet2022near:intersection_linear_system}
    For a given template database $\mathcal{B}$, and a given $\varepsilon$, and an arbitrary reference template $v_0 \in \mathcal{B}$, $p \in \cap_{v \in \mathcal{B}} B_\varepsilon(v) \Leftrightarrow A P \leq e_{v_0}$
    where $ p \in \mathbb{F}^n_2$, $e_{v_0} = (\varepsilon- d_\mathcal{H}(v_1,v_0), \dots, \varepsilon - d_\mathcal{H}(v_N,v_0))$ and $A$ is an $N \times |I|$ matrix whose the $(i,j)^\text{th}$ element is:
    \begin{equation*}
        a_{i,j} = \begin{cases}
            1  & \text{si } d_{I_j}(v_0,v_i) = 0     \\
            -1 & \text{si } d_{I_j}(v_0,v_i) = |I_j| \\
        \end{cases}
        \vspace*{-0.2cm}
    \end{equation*}
    where $P$ is a vector whose the $k^\text{th}$ element is $P_k = d_{I_k} (p,v_0)$.
\end{theorem}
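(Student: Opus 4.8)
The plan is to exploit the fact that the equivalence relation $\sim_\mathcal{B}$ is designed so that, on each of its classes, the reference template $v_0$ and any other template $v_i$ either agree on every coordinate of the class or disagree on every coordinate of it. Once this is established, the single constraint $d_\mathcal{H}(p,v_i) \le \varepsilon$ collapses, after replacing $p$ by its per-class profile $P_k = d_{I_k}(p,v_0)$, into one affine inequality in the $P_k$'s, and the whole system $p \in \bigcap_{v \in \mathcal{B}} B_\varepsilon(v)$ becomes $AP \le e_{v_0}$.

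First I would verify that $\sim_\mathcal{B}$ is an equivalence relation: reflexivity and symmetry are immediate, and transitivity follows because ``$C^k \in \{C^\ell, 1 - C^\ell\}$'' together with ``$C^\ell \in \{C^m, 1 - C^m\}$'' forces $C^k \in \{C^m, 1 - C^m\}$; hence the partition $I = (I_1, \dots, I_{K_0})$ is well defined. The key lemma is then: for any $k, \ell$ lying in the same class $I_j$ and any two templates $u, w \in \mathcal{B}$, one has $u[k] \oplus w[k] = u[\ell] \oplus w[\ell]$. This is a short case check distinguishing $C^k = C^\ell$ from $C^k = 1 - C^\ell$. Taking $u = v_0$ (legitimate since $v_0 \in \mathcal{B}$) and $w = v_i$ shows that $d_{I_j}(v_0,v_i)$ is either $0$ or $|I_j|$, which is precisely what makes the entry $a_{i,j}$ well defined, and summing over the classes where it equals $|I_j|$ gives $d_\mathcal{H}(v_0,v_i) = \sum_{j : a_{i,j} = -1} |I_j|$.

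Next, for a fixed $p \in \mathbb{F}_2^n$ with associated profile $P$, I would write $d_\mathcal{H}(p,v_i) = \sum_{j=1}^{K_0} d_{I_j}(p,v_i)$ and evaluate each term using $p[k] \oplus v_i[k] = (p[k] \oplus v_0[k]) \oplus (v_0[k] \oplus v_i[k])$: on the class $I_j$ the last summand is the constant $0$ when $a_{i,j} = 1$ and the constant $1$ when $a_{i,j} = -1$, so $d_{I_j}(p,v_i)$ equals $P_j$ in the first case and $|I_j| - P_j$ in the second. Summing and inserting the expression for $d_\mathcal{H}(v_0,v_i)$ obtained above yields the exact identity $d_\mathcal{H}(p,v_i) = d_\mathcal{H}(v_0,v_i) + \sum_{j} a_{i,j} P_j = d_\mathcal{H}(v_0,v_i) + (AP)_i$.

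Finally, $p \in \bigcap_{v \in \mathcal{B}} B_\varepsilon(v)$ is by definition the conjunction over $i$ of $d_\mathcal{H}(p,v_i) \le \varepsilon$, which by the identity above is equivalent, component by component, to $(AP)_i \le \varepsilon - d_\mathcal{H}(v_0,v_i) = (e_{v_0})_i$, i.e. to $AP \le e_{v_0}$; since the identity is an equality (not merely an inequality) both implications come for free. I expect no genuinely difficult step: the only point requiring care is the constancy lemma, i.e. the per-class agree/disagree dichotomy, everything afterwards being routine bookkeeping. I would also add the remark that every profile $P \in \prod_k \{0, 1, \dots, |I_k|\}$ is realized by some $p$ (flip any $P_k$ coordinates of $v_0$ inside $I_k$), so counting the solutions of $AP \le e_{v_0}$ with the binomial weights $\prod_k \binom{|I_k|}{P_k}$ recovers $|\bigcap_{v} B_\varepsilon(v)|$, which is the use intended in the paper.
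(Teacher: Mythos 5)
Your proof is correct, but there is nothing in the paper to compare it against: Theorem~\ref{thm:durbet2022near:intersection_linear_system} is merely recalled from~\cite{durbet2022near} and no proof is given in this manuscript. Your argument is sound and self-contained. The constancy lemma --- for $k,\ell$ in the same class of $\sim_\mathcal{B}$ and $u,w\in\mathcal{B}$, $u[k]\oplus w[k]=u[\ell]\oplus w[\ell]$ --- is exactly the right pivot: it simultaneously shows that $d_{I_j}(v_0,v_i)\in\{0,|I_j|\}$, so the two-case definition of $a_{i,j}$ is exhaustive and the matrix $A$ is well defined, and that $v_0[k]\oplus v_i[k]$ is constant on each class, whence $d_{I_j}(p,v_i)$ equals $P_j$ or $|I_j|-P_j$ according to the sign of $a_{i,j}$. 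The resulting exact identity $d_\mathcal{H}(p,v_i)=d_\mathcal{H}(v_0,v_i)+(AP)_i$ delivers both implications of the equivalence at once, as you note. Your closing observation --- that a given profile $P\in\prod_k\{0,\dots,|I_k|\}$ is realized by exactly $\prod_k\binom{|I_k|}{P_k}$ points $p$ obtained by flipping $P_k$ coordinates of $v_0$ inside each $I_k$ --- is precisely the content of Corollary~\ref{cor:cardinal:intersection}, so your write-up in effect supplies a proof of that statement as well.
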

In other words, looking for an element $p$ belonging to the intersection of $N$ $\varepsilon$-balls is equivalent to solve a specific linear system with a vector parameter $P$ that belongs to the discrete space $\mathcal{P} = \prod_{k=1}^{K_0} \big\{ 0, \dots, \min(\varepsilon, |I_k|)\big\}$.
In order to provide the following result, it is required to introduce  $\ell_{\varepsilon} = \{P \in \mathcal{P} \, | \, AP \leq e_{v_0}\}$ for a given template database $\mathcal{B}$. Then the result is:
\begin{corollary}
    \label{cor:cardinal:intersection}
    The cardinal of the intersection of $\varepsilon-$balls centered on $v_1, \dots, v_N$ is
    \begin{equation*}
        \left| \bigcap\limits_{j=1}^N B_\varepsilon (v_j) \right| = \sum\limits_{(P_1, \dots, P_K) \in \ell_\varepsilon} \prod\limits_{k=1}^{K_0} \binom{|I_k|}{P_k}.
    \end{equation*}
\end{corollary}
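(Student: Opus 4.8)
The plan is to deduce this directly from Theorem~\ref{thm:durbet2022near:intersection_linear_system} by partitioning the intersection according to the profile vector $P$. Fix the reference template $v_0 \in \mathcal{B}$ appearing in that theorem, and for a point $p \in \mathbb{F}_2^n$ set $P(p) = \big(d_{I_1}(p,v_0),\dots,d_{I_{K_0}}(p,v_0)\big) \in \mathcal{P}$. By the theorem, $p \in \bigcap_{j=1}^N B_\varepsilon(v_j)$ if and only if $A\,P(p) \leq e_{v_0}$, i.e.\ if and only if $P(p) \in \ell_\varepsilon$. Hence
\[
\bigcap_{j=1}^N B_\varepsilon(v_j) \;=\; \bigsqcup_{P \in \ell_\varepsilon} F_P, \qquad F_P := \{\, p \in \mathbb{F}_2^n : P(p) = P \,\},
\]
and the union is disjoint because $p \mapsto P(p)$ is a well-defined function of $p$.

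It remains to count each fiber $F_P$. Since $I = (I_1,\dots,I_{K_0})$ is a partition of the coordinate set $\{1,\dots,n\}$, specifying $p$ is the same as specifying, independently for each block $I_k$, the subset of coordinates of $I_k$ on which $p$ disagrees with $v_0$. The condition $P(p) = P$ requires exactly that this subset have cardinality $P_k$ for every $k$, and imposes no further restriction: within a block all coordinates are, by the very definition of $\sim_\mathcal{B}$, interchangeable from the point of view of every template of $\mathcal{B}$, so only the weights $P_k$ enter the linear system. Therefore $|F_P| = \prod_{k=1}^{K_0} \binom{|I_k|}{P_k}$, the binomial being well-defined (and nonzero range guaranteed) because $P_k \leq \min(\varepsilon,|I_k|) \leq |I_k|$ for $P \in \mathcal{P}$.

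Summing $|F_P|$ over $P \in \ell_\varepsilon$ gives
\[
\left| \bigcap_{j=1}^N B_\varepsilon(v_j) \right| \;=\; \sum_{(P_1,\dots,P_K) \in \ell_\varepsilon}\; \prod_{k=1}^{K_0} \binom{|I_k|}{P_k},
\]
as claimed. The only genuinely delicate point is the fiber-count: one must be sure that membership in the intersection depends on $p$ solely through the block weights $P(p)$ and not through the finer pattern of $p$ — but this is precisely what Theorem~\ref{thm:durbet2022near:intersection_linear_system} provides, so the rest is routine bookkeeping.
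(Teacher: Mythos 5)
Your proof is correct and follows essentially the same route as the paper: both invoke Theorem~\ref{thm:durbet2022near:intersection_linear_system} to reduce membership in the intersection to the condition $P(p)\in\ell_\varepsilon$, then count the $p$'s in each fiber by choosing, block by block, which $P_k$ coordinates of $I_k$ disagree with $v_0$, giving $\prod_{k}\binom{|I_k|}{P_k}$. Your write-up is merely a more explicit version (naming the fibers $F_P$ and noting the disjointness) of the paper's own bookkeeping.
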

\begin{proof}
    In order to determine the targeted cardinal the proof below relies on counting the number of elements $p \in \mathbb{F}_2^n$ which are associated to each $P \in \ell$.
    Since $P_k=  d_{I_k}(p,v_0)$, then $0 \leq P_k \leq |I_k|$.
    Moreover, it does not matter which are the $P_k$ coordinates among the subset $I_k$ for which $p$ is different from $v_0$.
    Then, there is $\binom{|I_k|}{P_k}$ possibilities, and so on for the other subset $I_{k'}$, therefore for $P \in \mathcal{P}$, there are $\prod_{k=1}^{K_0} \binom{|I_k|}{P_k}$ associated elements in $\mathbb{F}^n_2$.
    According to Theorem~\ref{thm:durbet2022near:intersection_linear_system}, each $p$ in the intersection is associated to a vector $P \in \ell_\varepsilon$, and each vector $P$ corresponds to only elements $p$ in the intersection, thus it is sufficient to find the result.
\end{proof}

Furthermore, a result about the relative size of $\mathcal{P}$ indicating the gain to work in $\mathcal{P}$ to find an element $p$ in the intersection of $N$ $\varepsilon-$balls, rather than working in $\mathbb{F}_2^n$ is provided.
\begin{proposition}[Cardinal reduction]
    \label{prop:cardinalreduction}
    For a given database $\mathcal{B}$ of size $N$, the cardinal of $\mathcal{P}$ is at least lower than $2^n$ and in particular:
    \begin{equation*}
        \frac{|\mathbb{F}^n_2|}{ |\mathcal{P}|} \geq \prod\limits_{k=1}^{K_0} \frac{2^{|I_k|}}{|I_k| + 1}.
    \end{equation*}
\end{proposition}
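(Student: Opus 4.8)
The plan is to unwind the definition of the discrete search space $\mathcal{P}$ attached to Theorem~\ref{thm:durbet2022near:intersection_linear_system} and to compare it factor-by-factor with $\mathbb{F}_2^n$, using only that $I = (I_1, \dots, I_{K_0})$ is a partition of $\{1,\dots,n\}$.

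First I would recall from the paragraph following Theorem~\ref{thm:durbet2022near:intersection_linear_system} that $\mathcal{P} = \prod_{k=1}^{K_0} \{0, \dots, \min(\varepsilon, |I_k|)\}$, so that $|\mathcal{P}| = \prod_{k=1}^{K_0} \big(\min(\varepsilon, |I_k|) + 1\big)$. Since $\min(\varepsilon, |I_k|) \leq |I_k|$, each factor obeys $\min(\varepsilon, |I_k|) + 1 \leq |I_k| + 1$, and hence $|\mathcal{P}| \leq \prod_{k=1}^{K_0} (|I_k| + 1)$.

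Next, because $I$ is a partition of $\{1,\dots,n\}$, the block sizes add up: $\sum_{k=1}^{K_0} |I_k| = n$, and therefore $\prod_{k=1}^{K_0} 2^{|I_k|} = 2^{\sum_k |I_k|} = 2^n = |\mathbb{F}_2^n|$. Combining this with the bound of the previous step gives
$$\frac{|\mathbb{F}_2^n|}{|\mathcal{P}|} \;\geq\; \frac{\prod_{k=1}^{K_0} 2^{|I_k|}}{\prod_{k=1}^{K_0}(|I_k|+1)} \;=\; \prod_{k=1}^{K_0} \frac{2^{|I_k|}}{|I_k|+1},$$
which is exactly the asserted inequality. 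Finally, to get the coarser claim $|\mathcal{P}| \leq 2^n$, I would invoke the elementary fact that $2^m \geq m+1$ for every integer $m \geq 0$ (immediate by induction on $m$), so every factor $2^{|I_k|}/(|I_k|+1)$ is at least $1$, the whole product is at least $1$, and thus $|\mathcal{P}| \leq |\mathbb{F}_2^n| = 2^n$.

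There is no genuine obstacle here: the proof is purely combinatorial bookkeeping. The only two points requiring a little care are making the partition hypothesis explicit, so that the exponents $|I_k|$ telescope to $n$, and correctly accounting for the truncation by $\varepsilon$ in $\min(\varepsilon,|I_k|)$, which can only shrink each coordinate range and hence only strengthens the bound.
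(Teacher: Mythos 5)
Your proof is correct and follows essentially the same route as the paper's: both rest on $\sum_{k=1}^{K_0}|I_k|=n$ so that $2^n$ factors as $\prod_k 2^{|I_k|}$, combined with $|\mathcal{P}|=\prod_k\bigl(\min(\varepsilon,|I_k|)+1\bigr)\leq\prod_k(|I_k|+1)$. You merely spell out two steps the paper leaves implicit (the truncation by $\varepsilon$ and the elementary bound $2^m\geq m+1$ yielding $|\mathcal{P}|\leq 2^n$), which is fine.
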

\begin{proof}
    First, notice that $\sum_{k=1}^{K_0} |I_k| = n$, and then
    \begin{align*}
        \frac{|\mathbb{F}_2^n|}{|\mathcal{P}|} & \geq \frac{2^{\sum_{k=1}^{K_0} |I_k|}}{\prod_{k=1}^{K_0} (|I_k| +1)}  = \prod\limits_{k=1}^{K_0} \frac{2^{|I_k|}}{|I_k|+1}.
    \end{align*}
\end{proof}
In order to give useful insight, the following result indicated what is the size of the partition $I$, that drives the cardinal of $\mathcal{P}$ and then the complexity of linear system $L$:
\begin{theorem}
    \label{thm:distribution:partition}
    For a template database $\mathcal{B} \subset \mathbb{F}^n_2$ of size $N$, the distribution of the partition size is given by
    \begin{equation*}
        \mathbb{P}\big( |I|= i\big) = \frac{1}{2^{(N-1)(n-i)}} \sum\limits_{(n_1, \dots,n_i) \in S_{n-i}^i} \prod\limits_{j=1}^{i} j^{n_j} \left(1- \frac{j-1}{2^{{N-1}}}\right)
    \end{equation*}
    for $i=1, \dots, n$ and where $S_n^i = \left\{ (n_1, \dots, n_i) : n_k \geq 0, \sum_{j=1}^i n_j = n \right\}$.
\end{theorem}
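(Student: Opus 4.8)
The plan is to identify the law of $|I|$ with the classical occupancy distribution — the number of occupied urns when balls are thrown uniformly at random into finitely many urns — and then to evaluate that distribution by a one-pass, stick-breaking argument.

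First I would note that, since the coordinates $v_{j,k}$ of the $N$ templates are i.i.d.\ uniform bits, the columns $C^1, \dots, C^n \in \mathbb{F}^N_2$ are themselves i.i.d.\ uniform on $\mathbb{F}^N_2$. The relation $\sim_{\mathcal{B}}$ puts $k \sim_{\mathcal{B}} j$ precisely when $C^k$ and $C^j$ lie in the same block of the partition of $\mathbb{F}^N_2$ into complementary pairs $\{c,\bar c\}$; since $\bar c = c + (1,\dots,1) \neq c$ for $N \ge 1$, every such block has exactly two elements, so there are $M := 2^{N-1}$ blocks and each $C^k$ falls into a uniformly random block, independently of the others. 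Hence $|I|$, the number of $\sim_{\mathcal{B}}$-classes, is distributed as the number of non-empty urns after $n$ balls are dropped independently and uniformly into $M = 2^{N-1}$ urns.

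Next I would compute this occupancy law directly. Let $Y_r$ be the number of occupied urns after $r$ balls, so that $Y_0 = 0$ and, conditionally on $Y_r$, one has $Y_{r+1} = Y_r + 1$ with probability $(M - Y_r)/M$ and $Y_{r+1} = Y_r$ otherwise. A trajectory with $Y_n = i$ is determined by its dwell times $n_1, \dots, n_i \ge 0$, where $n_j$ is the number of balls thrown while exactly $j$ urns are occupied; there are $i$ ``open a new urn'' steps (the transitions $0\to1,\, 1\to2,\, \dots,\, (i-1)\to i$) and $\sum_j n_j$ remaining steps among the $n$ balls, so $(n_1,\dots,n_i)\in S^i_{n-i}$. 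The probability of one such trajectory is
\[
\Big(\prod_{j=1}^{i} (j/M)^{n_j}\Big)\Big(\prod_{j=1}^{i} \tfrac{M-(j-1)}{M}\Big) = \frac{1}{M^{n-i}}\prod_{j=1}^{i} j^{n_j}\Big(1 - \tfrac{j-1}{M}\Big),
\]
using $\sum_j n_j = n-i$. Summing over all dwell-time vectors in $S^i_{n-i}$ (each corresponding to exactly one trajectory) and substituting $M = 2^{N-1}$ yields the stated formula.

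The rest is bookkeeping: one checks that the dwell times parametrize trajectories bijectively — in particular that the first ball always opens a new urn, that dwell times may vanish, and that exactly $i$ new-urn steps occur — and one verifies the degenerate cases for consistency, namely that the claimed probability vanishes when $i > 2^{N-1}$ (the factor $1-(j-1)/M$ is zero at $j=M+1\le i$) and that $i=1$ and $i=n$ reproduce the obvious direct computations. I expect no real obstacle beyond this routine verification; the only genuine ingredients are the reduction to i.i.d.\ uniform columns and the observation that the quotient of $\mathbb{F}^N_2$ by complementation consists of $2^{N-1}$ equal-size blocks.
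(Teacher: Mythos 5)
Your proof is correct and is essentially the paper's argument in a cleaner packaging: both track the class count as columns are appended one at a time, note that each column falls uniformly and independently into one of the $2^{N-1}$ complementary pairs $\{c,\bar c\}$ of $\mathbb{F}_2^N$, and sum the chain-rule probability over trajectories parametrized by the dwell times $(n_1,\dots,n_i)\in S^i_{n-i}$. Your occupancy-problem framing makes the i.i.d.\ reduction explicit and, incidentally, gets the new-class factors $\prod_{j=1}^{i}\bigl(1-\tfrac{j-1}{2^{N-1}}\bigr)$ right where the paper's intermediate display has a small index slip that is corrected only in the final formula.
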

\begin{proof}
    In order to write the event "$|I| = i$", proceed by working column by column of the template database $\mathcal{B}$, and in particular by denoting $|I_k|$ the number of equivalent classes of $\sim_\mathcal{B}$ among the $k$ first columns $C^1, \dots, C^k$.
    Next, two conditional probability properties are introduced.
    The first one indicates the probability to not increase the number of equivalent classes by appending a new column.
    Since each equivalent class contains only two columns (elements of $\mathbb{F}^N_2$):

    \begin{align}
        \label{eq:In_conditional_1}
        \mathbb{P}\Big( |I_{k+1}| = j \, \Big| \,  |I_k| = j\Big) & = \frac{2j}{2^N} = \frac{j}{2^{N-1}}
    \end{align}
    The second property corresponds to the case of obtaining a new equivalent class by appending a new column:
    \begin{align}
        \label{eq:In_conditional_2}
        \mathbb{P}\Big( |I_{k+1}| = j+1 \, \Big| \,  |I_k| = j\Big) & = \frac{2^N - 2j}{2^N} = 1 - \frac{j}{2^{N-1}}.
    \end{align}
    Then, by using chain rule several times:
    \begin{align*}
         & \mathbb{P}\Big( |I_{n}| = i  \Big)                                                                                = \mathbb{P}\Big( |I_{n}| = i \, \Big| \, |I_{n-1}| = i \Big) \times  \mathbb{P}\Big( |I_{n-1}| = i \Big) \\
         & \quad + \mathbb{P}\Big( |I_{n}| = i \, \Big| \, |I_{n-1}| = i-1 \Big) \times  \mathbb{P}\Big( |I_{n-1}| = i-1 \Big)
    \end{align*}
    and so on until obtaining that $\mathbb{P}( |I_{n}| = i  )$ is only expressed with \eqref{eq:In_conditional_1}, \eqref{eq:In_conditional_2} and $\mathbb{P}( |I_{1}| = 1 ) =1$.
    The probability to compute is then a sum of product of sequential conditional probabilities, and each sequence can be written as:
    \begin{align*}
        \prod_{k=1}^{n-1} \mathbb{P}\Big( |I_{k+1}| = x_{k+1} \, \Big| \, |I_{k}| = x_{k} \Big)
    \end{align*}
    where $x_{k+1} \in \{x_{k},x_{k}+1\}, \, x_1 = 1$  and $x_n = i$.
    Consider, for $j \in \{1, \dots, i\}$ the quantity "$n_j+1$", the number of elements $x_k = j$, for $k \in \{1, \dots,n\}$, so that $n_j$ corresponds to the number of times that appending a new column does not increase the number of equivalent classes (associated to the conditional probability \eqref{eq:In_conditional_1}).
    Then, each product of sequential conditional probabilities can be written as:
    \begin{align*}
        \prod\limits_{k=1}^{n-1} \mathbb{P}\Big( |I_{k+1}| = x_{k+1} \, \Big| \, |I_{k}| = & x_{k} \Big)  = \prod\limits_{j=1}^i  \left( 1 - \frac{j}{2^{N-1}} \right)  \left( \frac{j}{2^{N-1}} \right)^{n_j} \\
                                                                                           & = \frac{1}{2^{(N-1)(n-i)}}  \prod\limits_{j=1}^i j^{n_j} \left( 1 - \frac{j}{2^{N-1}} \right)
    \end{align*}
    since $\sum_{j=1}^i (n_j +1) = n$.
    The results is then obtained by summing on each sequence, which reduces to summing on all possible vector $(n_1, \dots, n_i)$ belonging to $S^i_{n-i}$, a simplex-like set.
\end{proof}
Moreover, since computing a sum on elements belonging to sets such as $S^i_n$ can be difficult and time-consuming in practice, below are also provided bounds to work with probabilities like $\mathbb{P}(|I| = i)$.
\begin{corollary}
    \label{cor:boundpartition}
    For a template database $\mathcal{B} \subset \mathbb{F}^n_2$ of size $N$, the distribution of the partition size is bounded by
    \begin{align*}
        {n \brace i} \frac{1}{2^{(N-1)(n-i)}} \prod\limits_{j=1}^{i}  \left( 1 - \frac{j-1}{2^{N-1}} \right)       & \leq  \mathbb{P}\big( |I|= i\big) \\
        {n \brace i} \frac{i^{n-i}}{2^{(N-1)(n-i)}} \prod\limits_{j=1}^{i}  \left( 1 - \frac{j-1}{2^{N-1}} \right) & \ge \mathbb{P}\big( |I|= i\big)   \\
    \end{align*}
    for $i=1, \dots, n$ and where $ {n \brace i}$ is the Stirling number of the second kind.
\end{corollary}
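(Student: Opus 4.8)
The plan is to start from the exact formula for $\mathbb{P}(|I|=i)$ furnished by Theorem~\ref{thm:distribution:partition} and to sandwich its inner sum between ${n\brace i}$ and $i^{\,n-i}{n\brace i}$. First, observe that the factor $\prod_{j=1}^{i}\left(1-\frac{j-1}{2^{N-1}}\right)$ appearing in that formula does not depend on the summation indices $(n_1,\dots,n_i)$, so it can be pulled outside the sum, giving
\[
\mathbb{P}\big(|I|=i\big)=\frac{1}{2^{(N-1)(n-i)}}\left(\prod_{j=1}^{i}\left(1-\frac{j-1}{2^{N-1}}\right)\right)T(n-i,i),\qquad T(m,i):=\sum_{(n_1,\dots,n_i)\in S_{m}^{i}}\ \prod_{j=1}^{i}j^{\,n_j}.
\]
It therefore suffices to prove $ {n\brace i}\le T(n-i,i)\le i^{\,n-i}{n\brace i}$, after which both inequalities of the corollary follow by multiplying back through the common nonnegative prefactor.

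For the lower bound I would establish the \emph{exact} identity $T(m,i)={m+i\brace i}$. The quickest route is generating functions: expanding each geometric series gives $\sum_{m\ge 0}T(m,i)\,x^{m}=\prod_{j=1}^{i}\frac{1}{1-jx}$, and this product is the classical generating function of the Stirling numbers of the second kind, $\sum_{n\ge i}{n\brace i}x^{\,n-i}=\prod_{j=1}^{i}\frac{1}{1-jx}$; comparing the coefficient of $x^{m}$ yields $T(m,i)={m+i\brace i}$. Alternatively, splitting off the term $n_i=0$ shows $T(m,i)=T(m,i-1)+i\,T(m-1,i)$, which under the substitution $n=m+i$ is exactly the recurrence ${n\brace i}={n-1\brace i-1}+i{n-1\brace i}$, with matching base cases $T(0,i)={i\brace i}=1$ and $T(m,1)={m+1\brace 1}=1$. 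Taking $m=n-i$ gives $T(n-i,i)={n\brace i}$; substituting this back into the displayed formula shows that the lower bound is in fact an equality.

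For the upper bound I would use a crude termwise estimate: since $1\le j\le i$ for every index and $\sum_{j}n_j=n-i$ on the summation set $S_{n-i}^{i}$, each summand satisfies $\prod_{j=1}^{i}j^{\,n_j}\le i^{\sum_j n_j}=i^{\,n-i}$; moreover each summand is at least $1$, so the number of tuples obeys $|S_{n-i}^{i}|\le T(n-i,i)={n\brace i}$. Hence $T(n-i,i)\le i^{\,n-i}\,|S_{n-i}^{i}|\le i^{\,n-i}{n\brace i}$, which is the claimed upper bound. The only step that is not pure bookkeeping is recalling (or re-deriving, via the recurrence above) the identity $T(m,i)={m+i\brace i}$; everything else is elementary manipulation and the two one-line inequalities.
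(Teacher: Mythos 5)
Your proof is correct, and it is worth noting that it does not merely reproduce the paper's argument: it repairs it. The paper also reduces the problem to bounding the sum $T(n-i,i)=\sum_{(n_1,\dots,n_i)\in S_{n-i}^i}\prod_j j^{n_j}$ after pulling out the common factor, and it also uses the termwise estimate $1\le\prod_j j^{n_j}\le i^{\,n-i}$; but it then justifies the appearance of ${n\brace i}$ by asserting that the index set $S_{n-i}^i$ "is equivalent to the set of partitions of $n$ objects into $i$ subsets", i.e.\ that $|S_{n-i}^i|={n\brace i}$. That is false in general: $S_{n-i}^i$ is the set of weak compositions of $n-i$ into $i$ ordered parts, so $|S_{n-i}^i|=\binom{n-1}{i-1}$ (e.g.\ $n=3$, $i=2$ gives $2\ne 3={3\brace 2}$). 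What is actually true, and what makes the stated bounds hold, is the \emph{weighted} identity $T(n-i,i)={n\brace i}$, which you establish cleanly via the generating function $\prod_{j=1}^i(1-jx)^{-1}$ (or the recurrence $T(m,i)=T(m,i-1)+i\,T(m-1,i)$ matching ${n\brace i}={n-1\brace i-1}+i{n-1\brace i}$). Your route thus shows that the corollary's lower bound is in fact an equality, and the upper bound follows either from your $|S_{n-i}^i|\le T(n-i,i)$ chain or simply from $i^{\,n-i}\ge 1$. The only caveat, inherited from the paper and not your responsibility, is the off-by-one discrepancy between the factor $\bigl(1-\tfrac{j-1}{2^{N-1}}\bigr)$ in the statement of Theorem~\ref{thm:distribution:partition} and the factor $\bigl(1-\tfrac{j}{2^{N-1}}\bigr)$ derived in its proof; you consistently work from the stated formula, which is the right thing to do here.
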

\begin{proof}
    First, the product $\prod_{j=1}^i j^{n_j}$ is bounded, according that $(n_1, \dots, n_i) \in S_{n-i}^i$.
    In particular, for $(n-i, 0, \dots, 0)$ and $(0, \dots,0,n-i)$, which belong to $S_{n-i}^i$, the lower and the upper bounds are respectively obtained as $1 \leq \prod_{j=1}^i j^{n_j} \leq i^{n-i}$.
    The results follows by observing that $S_{n-i}^i$ is equivalent to the set of possible partitions of $n$ objects into $k$ subsets, and then its cardinal is given with a Stirling number of the second kind.
\end{proof}
Some last results are given below about balls intersection, in order to bound $\mathcal{I}_d^\varepsilon$ and to emphasize the link with the binary entropy function $h$.
First, the proof of an important lemma in section~\ref{utg_attacks:in} (Lemma~\ref{lemma:entropyballintersection}) is given below.
\begin{proof}[Proof of Lemma~\ref{lemma:entropyballintersection}]
    Use Equation~\eqref{eq:approximation_binomial_sum} to obtain the following upper bound:
    \begin{align*}
        \sum\limits_{i=0}^{\min(\varepsilon-k, \varepsilon-d+k)} \binom{n-d}{i} \leq 2^{n h \left( \frac{\min(\varepsilon-k, \varepsilon-d+k)}{n}\right)}
    \end{align*}
    Then, notice that if $0 \leq d \leq \varepsilon$ then, $\min(\varepsilon-k, \varepsilon-d+k) \leq \varepsilon-d + \lfloor d/2 \rfloor$, for any $k \in \left\{ \max(0, d-\varepsilon), \dots, \min(\varepsilon,d)\right\}$, and if $\varepsilon < d \leq  2 \varepsilon$ then, $\min(\varepsilon-k, \varepsilon-d+k) \leq \varepsilon-\lceil d/2 \rceil$.
    Since $d - \lfloor d/2 \rfloor = \lceil d/2 \rceil$, then
    \begin{align*}
        \mathcal{I}_d^\varepsilon & \leq  \frac{1}{2^n}2^{nh\left(\frac{\varepsilon-\lceil d/2 \rceil}{n}\right)} \sum\limits_{k=\max(0, d-\varepsilon)}^{\min(\varepsilon,d)} \binom{d}{k} \leq \frac{1}{2^n} 2^{nh\left(\frac{\varepsilon-\lceil d/2 \rceil}{n}\right)+d}
    \end{align*}
    To obtain the second upper bound, consider the following Taylor linear approximation.
    Since $h$ is a concave function, it follows that
    \begin{align*}
        h\left( \frac{\varepsilon - \left\lceil d/2 \right\rceil}{n}\right) & \leq  h(\varepsilon/n) + \frac{\left\lceil d/2 \right\rceil}{n} \log_2 \left( \frac{\varepsilon/n}{1-\varepsilon/n} \right)
    \end{align*}
\end{proof}
\begin{lemma}
    \label{lemma:entropyballintersection_min}
    For $u$ and $v \in \mathbb{F}^n_2$, $d = d_\mathcal{H}(u,v)$, with $d \leq 2\varepsilon$, $\varepsilon/n < 1/2$, then
    \begin{align*}
        \mathcal{I}_d^\varepsilon & \geq 2^{(-n+d)\left( 1-h\left(\frac{ \varepsilon-d}{n-d}\right) \right) } \Big/ \sqrt{8(\varepsilon-d)\left(\frac{ n-\varepsilon}{n-d}\right)}               \text{ if $d<\varepsilon$, }                                \\
        \mathcal{I}_d^\varepsilon & \geq  2^{-n + \varepsilon}\left(1-2^{h\left(\frac{d-\varepsilon}{\varepsilon}\right)}\right)                                                                 \text{ if $\varepsilon \leq d < \frac{3}{2} \varepsilon$, } \\
        \mathcal{I}_d^\varepsilon & \geq 2^{-n+\varepsilon h\left( \frac{2\varepsilon-d}{\varepsilon} \right)} \Big/ \sqrt{8(2\varepsilon-d)\left(1- \frac{2\varepsilon-d}{\varepsilon}\right)}                                                              \\
                                  & \qquad \text{ if $\frac{3}{2}\varepsilon < d \leq 2 \varepsilon$.}
    \end{align*}
\end{lemma}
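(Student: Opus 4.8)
The plan is to work directly from the closed form
\[
\mathcal{I}^\varepsilon_d = \frac{1}{2^n} \sum_{k=\max(0,d-\varepsilon)}^{\min(\varepsilon,d)} \binom{d}{k} \sum_{i=0}^{\min(\varepsilon-k,\varepsilon-d+k)} \binom{n-d}{i}
\]
recalled above, and, in each of the three ranges of $d$, to retain only a carefully chosen sub-collection of its terms — the retained sub-sum being one that inequality~\eqref{eq:approximation_binomial_sum} estimates from below. It helps to fix $v=0$ and $u$ of Hamming weight $d$, so that $k$ counts flipped coordinates inside $\mathrm{supp}(u)$ and $i$ those outside, the two ball constraints reading $k+i\le\varepsilon$ and $(d-k)+i\le\varepsilon$; this is exactly the double sum above.

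\textbf{Range $d<\varepsilon$.} Here $k$ runs over $\{0,\dots,d\}$ and the inner upper limit $\min(\varepsilon-k,\varepsilon-d+k)$ is a ``tent'' in $k$ peaking at $k=d/2$, so it is minimised at the endpoints $k=0,d$, where it equals $\varepsilon-d$; hence it is $\ge\varepsilon-d$ for every admissible $k$. Truncating every inner sum to its first $\varepsilon-d+1$ terms and summing $\binom{d}{k}$ over all $k$ gives $2^{n}\mathcal{I}^\varepsilon_d\ge 2^{d}\sum_{i=0}^{\varepsilon-d}\binom{n-d}{i}$. Since $\varepsilon/n<1/2$ forces $(\varepsilon-d)/(n-d)<1/2$, the lower bound of~\eqref{eq:approximation_binomial_sum} applies with parameters $n-d$ and $\varepsilon-d$; using $1-\tfrac{\varepsilon-d}{n-d}=\tfrac{n-\varepsilon}{n-d}$ and $2^{d-n}\,2^{(n-d)h(\cdot)}=2^{(-n+d)(1-h(\cdot))}$ yields the first inequality.

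\textbf{Ranges $d\ge\varepsilon$.} Now keep only the slice $i=0$ (points agreeing with $u$ and $v$ on all $n-d$ coordinates where $u=v$), so that $2^{n}\mathcal{I}^\varepsilon_d\ge\sum_{k=d-\varepsilon}^{\varepsilon}\binom{d}{k}$. Writing $k=\varepsilon-i$ with $i=0,\dots,2\varepsilon-d$ and using the monotonicity $\binom{d}{\varepsilon-i}\ge\binom{\varepsilon}{\varepsilon-i}=\binom{\varepsilon}{i}$ (valid as $d\ge\varepsilon$), one obtains $2^{n}\mathcal{I}^\varepsilon_d\ge\sum_{i=0}^{2\varepsilon-d}\binom{\varepsilon}{i}$. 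If $d>\tfrac32\varepsilon$ then $(2\varepsilon-d)/\varepsilon<1/2$, so the lower bound of~\eqref{eq:approximation_binomial_sum} with parameters $\varepsilon$ and $2\varepsilon-d$ gives the third inequality after dividing by $2^{n}$. If instead $\varepsilon\le d<\tfrac32\varepsilon$ then $2\varepsilon-d>\varepsilon/2$, so one complements first: $\sum_{i=0}^{2\varepsilon-d}\binom{\varepsilon}{i}=2^{\varepsilon}-\sum_{i=0}^{d-\varepsilon-1}\binom{\varepsilon}{i}\ge 2^{\varepsilon}-\sum_{i=0}^{d-\varepsilon}\binom{\varepsilon}{i}$, and now $(d-\varepsilon)/\varepsilon<1/2$, so the \emph{upper} bound of~\eqref{eq:approximation_binomial_sum} controls the subtracted partial sum by $2^{\varepsilon h((d-\varepsilon)/\varepsilon)}$, producing $2^{n}\mathcal{I}^\varepsilon_d\ge 2^{\varepsilon}-2^{\varepsilon h((d-\varepsilon)/\varepsilon)}$, i.e.\ the second inequality in the equivalent form $2^{-n+\varepsilon}\bigl(1-2^{\varepsilon(h((d-\varepsilon)/\varepsilon)-1)}\bigr)$.

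The only genuine decision in the argument is which slice of the double sum to keep and whether to estimate it directly or through its complement, and this is dictated entirely by where the argument of $h(\cdot)$ crosses $1/2$ — the threshold for~\eqref{eq:approximation_binomial_sum} to be usable in the required direction — which is precisely what forces the breakpoints $d=\varepsilon$ and $d=\tfrac32\varepsilon$. So the main (mild) obstacle is spotting the right slice in each regime; the rest — massaging $2^{-n}\,2^{(\cdot)h(\cdot)}/\sqrt{\cdot}$ into the stated closed forms and checking the monotonicity $\binom{d}{j}\ge\binom{\varepsilon}{j}$ — is routine bookkeeping, and one should only remark that these lower bounds can turn negative near the endpoints of their ranges, hence are informative only in the interiors.
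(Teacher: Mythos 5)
Your proof is correct and follows essentially the same route as the paper's: the same truncations of the double sum (the full $\sum_k\binom{d}{k}=2^d$ with inner limit $\ge\varepsilon-d$ when $d<\varepsilon$; the $i=0$ slice together with $\binom{d}{k}\ge\binom{\varepsilon}{k}$ when $d\ge\varepsilon$), the same complementation step for $\varepsilon\le d<\tfrac32\varepsilon$, and the same applications of~\eqref{eq:approximation_binomial_sum} in each regime. One remark: what you (and the paper's own proof) actually obtain in the middle range is $2^{-n+\varepsilon}\bigl(1-2^{\varepsilon\left(h\left((d-\varepsilon)/\varepsilon\right)-1\right)}\bigr)$, so the form printed in the lemma statement, with factor $1-2^{h((d-\varepsilon)/\varepsilon)}$, is a typo (it is nonpositive, hence vacuous), and your "equivalent form" is the correct one.
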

\begin{proof}
    First, consider the case "$d< \varepsilon$", and observe that $\operatorname{min}(\varepsilon-k,\varepsilon-d+k) \geq \varepsilon-d$ for at least one of the value $k \in \left\{0, \dots, d \right\}$.
    Then,
    \begin{equation*}
        \sum\limits_{i=0}^{\operatorname{min}(\varepsilon-k,\varepsilon-d+k)} \binom{n-d}{i} \geq \sum\limits_{i=0}^{\varepsilon-d} \binom{n-d}{i} \geq \frac{2^{(n-d) h\left(\frac{ \varepsilon-d}{n-d} \right)}}{ \sqrt{8(\varepsilon-d)(1-\frac{ \varepsilon-d}{n-d})}}
    \end{equation*}
    since $(\varepsilon-d)/(n-d) < 1/2$ if $\varepsilon/n < 1/2$.
    Moreover $\operatorname{max}(0,d-\varepsilon) = 0$ so that
    \begin{equation*}
        \sum\limits_{k=\max(0, d-\varepsilon)}^{\min(\varepsilon,d)} \binom{d}{k} = \sum\limits_{k=0}^{d} \binom{d}{k} = 2^d
    \end{equation*}

    For both other cases, $d\geq \varepsilon$
    then, $\operatorname{min}(\varepsilon-k,\varepsilon-d+k) = 0$, for at least one of the value $k \in \left\{ d-\varepsilon,\dots, \varepsilon \right\}$ and then
    \begin{equation*}
        \sum\limits_{i=0}^{\operatorname{min}(\varepsilon-k,\varepsilon-d+k)} \binom{n-d}{i} \geq \binom{n-d}{0}= 1
    \end{equation*}
    Next, $\operatorname{max}(0,d-\varepsilon) = d-\varepsilon$, and if $\varepsilon \leq d < \frac{3}{2} \varepsilon$ then  $d-\varepsilon < \varepsilon/2$ and
    \begin{align*}
        \sum\limits_{k=\max(0, d-\varepsilon)}^{\min(\varepsilon,d)} \binom{d}{k} & \geq \sum\limits_{k=d-\varepsilon}^{\varepsilon} \binom{\varepsilon}{k}                = \sum_{k=0}^\varepsilon \binom{\varepsilon}{k} -\sum_{k=0}^{d-\varepsilon} \binom{\varepsilon}{k} \\
                                                                                  & \geq 2^\varepsilon - 2^{\varepsilon h\left( \frac{d-\varepsilon}{\varepsilon}\right)}
    \end{align*}
    If $\frac{3}{2} \varepsilon < d \leq 2\varepsilon$, then $2\varepsilon-d < \varepsilon/2$ and
    \begin{align*}
        \sum\limits_{k=\max(0, d-\varepsilon)}^{\min(\varepsilon,d)} \binom{d}{k} & \geq \sum\limits_{k=d-\varepsilon}^{\varepsilon} \binom{\varepsilon}{k} = \sum_{k=0}^{2\varepsilon-d} \binom{\varepsilon}{k}
        \\ & \geq \frac{2^{\varepsilon h\left( \frac{2\varepsilon-d}{\varepsilon} \right)}}{\sqrt{8(2\varepsilon-d)\left(1- \frac{2\varepsilon-d}{\varepsilon}\right)}}
    \end{align*}
\end{proof}
\begin{lemma}
    \label{lemma:entropyballintersection_min_epsilon_p_1}
    For $u$ and $v \in \mathbb{F}^n_2$, $d_\mathcal{H}(u,v)=\varepsilon+1$, with $\varepsilon/n < 1/2$, then $\mathcal{I}_{\varepsilon+1}^\varepsilon \geq 2^{-n}(2^{\varepsilon}-1)$.
\end{lemma}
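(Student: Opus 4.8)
The plan is to specialise the closed-form expression for $\mathcal{I}_d^\varepsilon$ recalled at the start of Appendix~\ref{sec:appendix:intermediate_result} to $d=\varepsilon+1$ and then retain only the cheapest term of the inner sum. First I would substitute $d=\varepsilon+1$: the outer index then runs over $k\in\{\max(0,d-\varepsilon),\dots,\min(\varepsilon,d)\}=\{1,\dots,\varepsilon\}$, and for each such $k$ the inner index runs over $i\in\{0,\dots,\min(\varepsilon-k,\,\varepsilon-d+k)\}=\{0,\dots,\min(\varepsilon-k,k-1)\}$. The point to check is that $\min(\varepsilon-k,k-1)\ge 0$ for every $k\in\{1,\dots,\varepsilon\}$, so the inner sum always contains the term $i=0$, which equals $\binom{n-\varepsilon-1}{0}=1$; the hypothesis $\varepsilon/n<1/2$ is invoked here only to ensure $n-\varepsilon-1\ge 0$ so this binomial coefficient is well defined and nonzero.

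Next I would lower-bound $\mathcal{I}_{\varepsilon+1}^\varepsilon$ by discarding every inner term with $i\ge 1$ (all terms being nonnegative), which gives
$$\mathcal{I}_{\varepsilon+1}^\varepsilon \;\ge\; \frac{1}{2^n}\sum_{k=1}^{\varepsilon}\binom{\varepsilon+1}{k}\;=\;\frac{1}{2^n}\Bigl(2^{\varepsilon+1}-\binom{\varepsilon+1}{0}-\binom{\varepsilon+1}{\varepsilon+1}\Bigr)\;=\;\frac{2^{\varepsilon+1}-2}{2^n}$$
by the binomial theorem. Finally, since $2^{\varepsilon+1}-2=2(2^\varepsilon-1)\ge 2^\varepsilon-1$, I obtain $\mathcal{I}_{\varepsilon+1}^\varepsilon\ge 2^{-n}(2^\varepsilon-1)$, which is the claimed bound (in fact with a factor $2$ to spare).

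I do not expect any real obstacle here: the argument is essentially an application of the binomial identity after discarding nonnegative summands. The only places demanding a little care are the bookkeeping of the two index ranges produced by the substitution $d=\varepsilon+1$, and the verification that $\min(\varepsilon-k,k-1)\ge 0$ throughout the outer range so that the $i=0$ contribution is present for every $k$; the degenerate case $\varepsilon=0$ (where both sides are $0$) can be dispatched separately.
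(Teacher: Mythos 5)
Your proof is correct and follows essentially the same route as the paper: both arguments lower-bound the inner sum of the closed-form expression for $\mathcal{I}_{\varepsilon+1}^\varepsilon$ by its $i=0$ term and then bound $\sum_{k=1}^{\varepsilon}\binom{\varepsilon+1}{k}$ from below. The only cosmetic difference is in the last step, where the paper uses $\binom{\varepsilon+1}{k}\geq\binom{\varepsilon}{k}$ to land exactly on $2^{\varepsilon}-1$, whereas you evaluate the sum exactly as $2^{\varepsilon+1}-2$ and thereby obtain the stated bound with a factor of $2$ to spare.
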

\begin{proof}
    Notice that $\operatorname{min}(\varepsilon-k, k-1) = 0$ if $k=1$ or $k = \varepsilon$, then
    \begin{equation*}
        \sum\limits_{i=0}^{\operatorname{min}(\varepsilon-k, k-1)} \binom{n-\varepsilon-1}{i} \geq 1
    \end{equation*}
    Next
    \begin{equation*}
        \sum\limits_{k=1}^\varepsilon \binom{\varepsilon +1 }{k}  \geq  \sum\limits_{k=1}^\varepsilon \binom{\varepsilon }{k} =  \sum\limits_{k=0}^\varepsilon \binom{\varepsilon }{k}  - 1 = 2^\varepsilon-1
    \end{equation*}
\end{proof}

\begin{table*}[]
    \centering
    \begin{tabular}{c|cc|cc|cc|cc|cc}
        \toprule
        \multicolumn{1}{c|}{$n$}                                                                                            & $128$                        & \multicolumn{1}{c|}{$256$}                        & \multicolumn{2}{c|}{$128$}                        & \multicolumn{2}{c|}{$128$}   & \multicolumn{2}{c|}{$128$}    & \multicolumn{2}{c}{$128$}                                    \\ \midrule
        \multicolumn{1}{c|}{$\varepsilon$}                                                                                  & \multicolumn{2}{c|}{$30$}    & $10$                                              & \multicolumn{1}{c|}{$20$}                         & \multicolumn{2}{c|}{$30$}    & \multicolumn{2}{c|}{$30$}     & \multicolumn{2}{c}{$30$}                                     \\ \midrule
        \multicolumn{1}{c|}{$N$ ($\log_{10}$)}                                                                              & \multicolumn{2}{c|}{$6$}     & \multicolumn{2}{c|}{$6$}                          & $5 $                                              & \multicolumn{1}{c|}{$7 $}    & \multicolumn{2}{c|}{$6$}      & \multicolumn{2}{c}{$6$}                                      \\ \midrule
        \multicolumn{1}{c|}{$a$ ($\log_{10}$)}                                                                              & \multicolumn{2}{c|}{$3$}     & \multicolumn{2}{c|}{$3$}                          & \multicolumn{2}{c|}{$3$}                          & $2$                          & \multicolumn{1}{c|}{$4$}      & \multicolumn{2}{c}{$3$}                                      \\ \midrule
        \multicolumn{1}{c|}{$\kappa$ ($\log_2$)}                                                                            & \multicolumn{2}{c|}{$47$}    & \multicolumn{2}{c|}{$47$}                         & \multicolumn{2}{c|}{$47$}                         & \multicolumn{2}{c|}{$47$}    & $20$                          & $100$                                                        \\ \midrule
        \multicolumn{1}{c|}{$\frac{\mathbb{P} \left( A_h(\kappa) \leq a \right)}{\mathbb{P} \left( A_h(0) \leq a \right)}$} & \multicolumn{1}{c}{$0.9981$} & \multicolumn{1}{c|}{$\phantom{0.0}1\phantom{00}$} & \multicolumn{2}{c|}{$\phantom{0.0}1\phantom{00}$} & \multicolumn{1}{c}{$0.9998$} & \multicolumn{1}{c|}{$0.9808$} & \multicolumn{2}{c|}{$0.9981$} & \multicolumn{2}{c}{$0.9981$} \\
        \bottomrule
    \end{tabular}
    \caption{Ratio between a $\kappa$-adaptive attacker and a $0$-adaptive attacker in function of $n$, $\varepsilon$, $N$, $a$ and $\kappa$ for the lower bound.}
    \label{Tab:equiv_ratio}
\end{table*}

\section{Multi-NC and Master-templates}
\label{mlty_near_collisions}

The probability of a weak collision is an important starting point to determine the probability that there exists a template that covers all others. As the multi-near-collision states that there exists in a database a template which impersonates one or several templates, it is a particular case of master template.

\begin{definition}[$(k,\varepsilon)$-master template]
    A $(k,\varepsilon)$-master template $t$ with respect to $\mathcal{B}$ is a template such that there exists a $k$-tuple of distinct templates $a_1,\dots,a_k$ in $\mathcal{B}$ with $d_\mathcal{H}(a_i,t) \leq \varepsilon$ for all  $i \in \lbrace 1,\dots,k \rbrace$.
\end{definition}

To compute the probability that there exists a $N$-master template, we firstly introduce  $\mathcal{C}(\varepsilon,N)$ the number of template databases that belong to a given $\varepsilon-$ball,
\begin{equation*}
    \mathcal{C}(\varepsilon,N) = \left\{ \mathcal{B} = (v_1, \dots, v_N) \in \left(\mathbb{Z}_2^n \right)^N,  \, \mathcal{B} \subset B_\varepsilon \right\},
\end{equation*}
and $\mathcal{C}_v(\varepsilon,N)$ is the same number but for a $\varepsilon-$ball centered on a given $v$.
Then, the following theorem provides the probability that there exists a $(N,\varepsilon)$-master template.
\begin{theorem}
    \label{thm:proba:NMT}
    For a template database $\mathcal{B}(v_1,\dots, v_N)$, the probability that there exists a $(N,\varepsilon)$-master template is:

    $$ \mathbb{P}(\mathcal{B} \subset B_\varepsilon) = \frac{1}{2^{n(N-1)}} \sum\limits_{B \in \mathcal{C}(\varepsilon,N)} \left| B_\varepsilon^\cap(B)  \right|^{-1}$$
    with $B_\varepsilon^{\cap}(B)= \bigcap_{i=1}^N B_\varepsilon (v_i).$
    Lower and upper bounds for this probability are $
        V_\varepsilon^{N-1} \leq \mathbb{P}(\mathcal{B} \subset B_\varepsilon) \leq V_{2\varepsilon}^{N-1}$,
    where $V_\varepsilon$ is the measure of an $\varepsilon$-ball.
\end{theorem}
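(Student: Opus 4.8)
The plan is to recast the existence of a $(N,\varepsilon)$-master template as a purely geometric event, prove the exact formula by a double-counting (incidence) argument, and then obtain the two bounds by elementary sandwiching. For the first step I would note that a tuple of $N$ distinct templates of $\mathcal{B}$ is just the whole database, so $t$ is a $(N,\varepsilon)$-master template iff $d_\mathcal{H}(v_i,t)\le\varepsilon$ for every $i$, i.e. iff $t\in\bigcap_{i=1}^N B_\varepsilon(v_i)=B_\varepsilon^\cap(\mathcal{B})$. Hence ``there exists a $(N,\varepsilon)$-master template'' is exactly the event $B_\varepsilon^\cap(\mathcal{B})\neq\emptyset$, i.e. ``$\mathcal{B}\subset B_\varepsilon$''. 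Since the $v_i$ are i.i.d. uniform on $\mathbb{Z}_2^n$, this probability equals $2^{-nN}$ times the number of ordered tuples $(v_1,\dots,v_N)$ with $B_\varepsilon^\cap\neq\emptyset$, so everything reduces to counting such tuples.

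For the exact identity I would count the incidences $(t,(v_1,\dots,v_N))$ with $v_i\in B_\varepsilon(t)$ for all $i$ in two ways. Grouping by the tuple, each tuple with $B_\varepsilon^\cap\neq\emptyset$ is incident to exactly $|B_\varepsilon^\cap(\mathcal{B})|$ centers $t$; so attaching to each such tuple the weight $|B_\varepsilon^\cap(\mathcal{B})|^{-1}$ and summing over its incident centers returns $1$, whence the total weight over all incidences equals the number of tuples to be counted. Grouping instead by the center $t$, the contribution of the slice $\{(v_1,\dots,v_N):v_i\in B_\varepsilon(t)\ \forall i\}$ is $\sum_{B\subset B_\varepsilon(t)}|B_\varepsilon^\cap(B)|^{-1}$, and the translation $v\mapsto v\oplus t$ is a bijection of this slice onto $\mathcal{C}(\varepsilon,N)$ that translates $B_\varepsilon^\cap$ and hence preserves its cardinality, so each of the $2^n$ slices contributes the same value $\sum_{B\in\mathcal{C}(\varepsilon,N)}|B_\varepsilon^\cap(B)|^{-1}$. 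Equating the two counts gives $|\{(v_1,\dots,v_N):\mathcal{B}\subset B_\varepsilon\}|=2^n\sum_{B\in\mathcal{C}(\varepsilon,N)}|B_\varepsilon^\cap(B)|^{-1}$, and dividing by $2^{nN}$ yields $\frac{1}{2^{n(N-1)}}\sum_{B\in\mathcal{C}(\varepsilon,N)}|B_\varepsilon^\cap(B)|^{-1}$.

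For the bounds I would sandwich the event between two fixed-ball events centered at $v_1$. If $v_2,\dots,v_N\in B_\varepsilon(v_1)$ then $t=v_1$ is already a master template, so $\{\mathcal{B}\subset B_\varepsilon\}\supseteq\{v_2,\dots,v_N\in B_\varepsilon(v_1)\}$; conditioning on $v_1$ and using independence gives $\mathbb{P}(\mathcal{B}\subset B_\varepsilon)\ge V_\varepsilon^{\,N-1}$. Conversely, if $\mathcal{B}\subset B_\varepsilon(t)$ then $d_\mathcal{H}(v_1,v_i)\le d_\mathcal{H}(v_1,t)+d_\mathcal{H}(t,v_i)\le 2\varepsilon$ for all $i$, so $\{\mathcal{B}\subset B_\varepsilon\}\subseteq\{v_2,\dots,v_N\in B_{2\varepsilon}(v_1)\}$ and hence $\mathbb{P}(\mathcal{B}\subset B_\varepsilon)\le V_{2\varepsilon}^{\,N-1}$. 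The only genuinely delicate point, and the one I would treat most carefully, is the exact identity: a tuple lying in several $\varepsilon$-balls must be counted once, which is precisely why the weights $|B_\varepsilon^\cap(B)|^{-1}$ appear, and one must check that the per-center sum does not depend on the center (equivalently, that replacing ``some $\varepsilon$-ball'' by ``a given $\varepsilon$-ball'' in the definition of $\mathcal{C}(\varepsilon,N)$ is harmless), which is exactly what the translation symmetry of $\mathbb{Z}_2^n$ provides.
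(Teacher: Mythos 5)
Your proof is correct. For the exact identity you take essentially the same route as the paper: the paper also starts from the overcounting sum $\frac{1}{2^{Nn}}\sum_{v\in\mathbb{Z}_2^n}|\mathcal{C}_v(\varepsilon,N)|$, corrects it by the weight $|B_\varepsilon^\cap(B)|^{-1}$ (each database being counted once per element of $B_\varepsilon^\cap(B)$), and then uses the fact that the inner sum is independent of the center $v$; your incidence/double-counting phrasing is the same computation, presented somewhat more rigorously, and you additionally make explicit the preliminary identification of ``there exists an $(N,\varepsilon)$-master template'' with ``$B_\varepsilon^\cap(\mathcal{B})\neq\emptyset$'', which the paper leaves implicit. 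Where you genuinely diverge is in the bounds. The paper decomposes $\mathbb{P}(\mathcal{B}\subset B_\varepsilon)$ by the chain rule into conditional factors $\mathbb{P}\bigl(v_k\in I_\varepsilon(v_1,\dots,v_{k-1})\mid (v_1,\dots,v_{k-1})\subset B_\varepsilon\bigr)$, where $I_\varepsilon(v_1,\dots,v_{k-1})=\bigcup_{v\in\cap_i B_\varepsilon(v_i)}B_\varepsilon(v)$, and sandwiches each factor between $V_\varepsilon$ and $V_{2\varepsilon}$. You instead sandwich the whole event between two product events anchored at $v_1$: the inclusion $\{v_2,\dots,v_N\in B_\varepsilon(v_1)\}\subseteq\{\mathcal{B}\subset B_\varepsilon\}\subseteq\{v_2,\dots,v_N\in B_{2\varepsilon}(v_1)\}$, the second inclusion by the triangle inequality, and then use independence once. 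Your version is more elementary: it avoids conditioning on the awkward events ``$(v_1,\dots,v_{k-1})\subset B_\varepsilon$'' and the sets $I_\varepsilon$ entirely, and reaches the same bounds $V_\varepsilon^{N-1}$ and $V_{2\varepsilon}^{N-1}$ with less machinery; the paper's factor-by-factor decomposition would only pay off if one wanted to sharpen individual conditional terms, which it does not do here.
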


\begin{proof}
    First, in order to introduce the intuition to obtain the first result, observe that the proportion of template databases which can be covered by an $\varepsilon-$ball is bounded from above by counting the number of template databases included in each balls of $\mathbb{Z}^n_2$:
    \begin{equation*}
        \mathbb{P}(\mathcal{B} \subset B_\varepsilon) \leq \frac{1}{2^{Nn}} \sum_{v \in \mathbb{Z}^n_2} \left| \mathcal{C}_v(\varepsilon,N)\right|
    \end{equation*}

    This quantity is an upper bound since each template database $B$ is counted several times, for different $v$.
    To be more specific, a template database is counted once for each $v \in B_\varepsilon^{\cap}(B)$.
    Next, notice that for a given database $B \in \mathcal{C}_v(\varepsilon,N)$, $\left| B_\varepsilon^{\cap}(B) \right| \geq 1$, since $v$ necessarily belongs to $B_\varepsilon^{\cap}(B)$.
    Then, the following equation can be written:
    \begin{equation*}
        \mathbb{P}(\mathcal{B} \subset B_\varepsilon) = \frac{1}{2^{Nn}}\sum\limits_{v\in \mathbb{Z}_2^n} \sum\limits_{B \in \mathcal{C}_v(\varepsilon,N)} \left| B_\varepsilon^\cap(B)  \right|^{-1}.
    \end{equation*}
    Observe that the sum on $\mathcal{C}_v(\varepsilon,N)$ does not depend on $v$, so that:
    \begin{align*}
        \mathbb{P}(\mathcal{B} \subset B_\varepsilon) & = \frac{1}{2^{Nn}}\sum\limits_{v\in \mathbb{Z}_2^n} \sum\limits_{B \in \mathcal{C}(\varepsilon,N)} \left| B_\varepsilon^\cap(B)  \right|^{-1} \\
                                                      & = \frac{1}{2^{n(N-1)}} \sum\limits_{B \in \mathcal{C}(\varepsilon,N)} \left| B_\varepsilon^\cap(B)  \right|^{-1}
    \end{align*}
    Next, to obtain the lower and upper bounds consider:
    \begin{align*}
        I_\varepsilon(v_1, \dots, v_k)  = \cup_{v \in \cap_{i=1}^k B_\varepsilon(v_i)} B_\varepsilon(v)
    \end{align*}
    and it follows that:
    \begin{align*}
        \mathbb{P}(\mathcal{B} \subset B_\varepsilon)  = & \, \mathbb{P}\Big( v_2 \in B_{2\varepsilon}(v_1) \, \Big| \, v_1\Big)                                        \\
                                                         & \quad\times \mathbb{P}\Big(v_3 \in I_\varepsilon(v_1, v_2) \, \Big| \, (v_1, v_2) \subset B_\varepsilon\Big) \\
                                                         & \quad \times \dots                                                                                           \\
                                                         & \quad \times \mathbb{P}\Big(v_N \in I_\varepsilon(v_1, \dots, v_{N-1}) \,                                    \\
                                                         & \quad\quad \Big| \, (v_1, \dots, v_{N-1}) \subset B_\varepsilon\Big)
    \end{align*}
    Next, provide an upper bound for each previous term:
    \begin{align}
        V_\varepsilon \leq \mathbb{P}\Big(v_k \in I_\varepsilon(v_1, \dots, v_{k-1}) \, \Big| \, (v_1, \dots, v_{k-1}) \subset B_\varepsilon\Big) \leq V_{2\varepsilon}.
    \end{align}
    The upper bound is obtained by considering the case "$v_1 = \dots = v_{k-1}$".
    Concerning the lower bound, it is based on the fact that "$(v_1, \dots, v_{k-1}) \subset B_\varepsilon$" implies that $\cap_{i=1}^{k-1} B_\varepsilon(v_i)$ is a non-empty set, then $\big| I_\varepsilon(v_1, \dots, v_{k-1}) \big| \geq |B_\varepsilon|$.
\end{proof}

Theorem~\ref{thm:proba:NMT} gives the probability that there exists a template in $\mathbb{Z}_2^n$ which impersonates all users if the threshold of $\varepsilon$ is used.
This results only refers to a rare event, but it is an intermediate result in order to provide Corollary~\ref{cor:proba:kMT}, which focuses on the probability of a $(k,\varepsilon)$-master template.
In the following, $\mathcal{B}_k$ denotes a subset of $k$ templates from the template database $\mathcal{B}$, and $M_k$ is the event "an $\varepsilon-$ball covers $k$ templates and none ball can not include these $k$ templates plus another template from the template database".
\begin{corollary}
    \label{cor:proba:kMT}
    The probability of a $(k,\varepsilon)$-master template is:
    \begin{equation*}
        \mathbb{P}(M_k) =  \mathbb{P}(\mathcal{B}_k \subset B_\varepsilon) \times \frac{\binom{N}{k}}{2^{n(N-k)}} \left| \bigcap\limits_{v \in B_\varepsilon^\cap(\mathcal{B}_k)} \overline{B_\varepsilon(v)} \right|^{N-k}
    \end{equation*}
    and if $\mathcal{B}_k \subset B_\varepsilon$, a lower bound is $ \left( 1 - V_{2\varepsilon}\right)^{N-k} $ and an upper bound is $\left( 1 - V_\varepsilon\right)^{N-k}$,
    where $V_\varepsilon$ is the measure of an $\varepsilon$-ball.
\end{corollary}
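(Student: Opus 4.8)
The plan is to decompose $M_k$ according to which $k$-subset of $\mathcal{B}$ realizes the maximal cluster, then split the remaining requirements into a \emph{coverability} part, handled by Theorem~\ref{thm:proba:NMT}, and an \emph{avoidance} part. Since the $N$ enrolled templates are i.i.d.\ uniform on $\mathbb{Z}_2^n$, the $\binom{N}{k}$ candidate subsets are exchangeable, so I would fix $\mathcal{B}_k = \{v_1,\dots,v_k\}$, compute the probability that this particular subset is a maximal coverable cluster, and multiply by $\binom{N}{k}$ (an equality up to the negligible overcount of configurations admitting two disjoint maximal $k$-clusters, the same looseness already tolerated in Theorem~\ref{thm:proba:NMT}). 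Conditioning on $\mathcal{B}_k$, the requirement that some $\varepsilon$-ball covers $\mathcal{B}_k$ is exactly the event $\mathcal{B}_k \subset B_\varepsilon$, whose probability is furnished by Theorem~\ref{thm:proba:NMT} with $k$ in place of $N$.

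The crux is the characterization of when a covering ball can be enlarged to absorb one more template. A ball $B_\varepsilon(c)$ covers $\mathcal{B}_k \cup \{v\}$ iff $c \in B_\varepsilon^\cap(\mathcal{B}_k) \cap B_\varepsilon(v)$, so such a $c$ exists iff $v$ lies in $I_\varepsilon(\mathcal{B}_k) := \bigcup_{c \in B_\varepsilon^\cap(\mathcal{B}_k)} B_\varepsilon(c)$ (the set $I_\varepsilon$ introduced in the proof of Theorem~\ref{thm:proba:NMT}), whose complement is precisely $\bigcap_{v \in B_\varepsilon^\cap(\mathcal{B}_k)} \overline{B_\varepsilon(v)}$. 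Hence the maximality condition in $M_k$ amounts to: each of the remaining $N-k$ templates falls in $\overline{I_\varepsilon(\mathcal{B}_k)}$. Conditionally on the layout of $\mathcal{B}_k$ (equivalently, on the set $B_\varepsilon^\cap(\mathcal{B}_k)$), those $N-k$ templates remain i.i.d.\ uniform, so this conditional probability equals $\left( |\overline{I_\varepsilon(\mathcal{B}_k)}| / 2^n \right)^{N-k} = 2^{-n(N-k)} \left| \bigcap_{v \in B_\varepsilon^\cap(\mathcal{B}_k)} \overline{B_\varepsilon(v)} \right|^{N-k}$. Multiplying the three factors — the choice $\binom{N}{k}$, the coverability probability $\mathbb{P}(\mathcal{B}_k \subset B_\varepsilon)$, and this avoidance factor — yields the stated identity, with the understanding, as in Theorem~\ref{thm:proba:NMT}, that the product is really an expectation over the layout of $\mathcal{B}_k$, since $B_\varepsilon^\cap(\mathcal{B}_k)$ depends on it.

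For the bounds, suppose $\mathcal{B}_k \subset B_\varepsilon$, so $B_\varepsilon^\cap(\mathcal{B}_k) \neq \emptyset$; pick any $c_0$ in it. Since $B_\varepsilon(c_0) \subseteq I_\varepsilon(\mathcal{B}_k)$, the normalized measure of $I_\varepsilon(\mathcal{B}_k)$ is at least $V_\varepsilon$, so the avoidance factor is at most $(1-V_\varepsilon)^{N-k}$. Conversely, every $c \in B_\varepsilon^\cap(\mathcal{B}_k)$ satisfies $d_\mathcal{H}(c,v_1) \leq \varepsilon$, hence $B_\varepsilon(c) \subseteq B_{2\varepsilon}(v_1)$ and therefore $I_\varepsilon(\mathcal{B}_k) \subseteq B_{2\varepsilon}(v_1)$, so the measure of $I_\varepsilon(\mathcal{B}_k)$ is at most $V_{2\varepsilon}$ and the avoidance factor is at least $(1-V_{2\varepsilon})^{N-k}$; the degenerate case $v_1 = \dots = v_k$, where $I_\varepsilon(\mathcal{B}_k) = B_{2\varepsilon}(v_1)$, shows this is tight. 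The main obstacle I anticipate is the bookkeeping already flagged: disentangling the configuration-dependent quantities $\mathbb{P}(\mathcal{B}_k \subset B_\varepsilon)$ and $B_\varepsilon^\cap(\mathcal{B}_k)$ into a clean product, and justifying the union-bound collapse over the $\binom{N}{k}$ subsets.
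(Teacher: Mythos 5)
Your proof is correct and follows essentially the same route as the paper's: decompose $M_k$ into the choice of the $k$-subset, the coverability event $\mathcal{B}_k \subset B_\varepsilon$, and the independent avoidance of $\bigcup_{v \in B_\varepsilon^\cap(\mathcal{B}_k)} B_\varepsilon(v)$ by the remaining $N-k$ templates, then bound the measure of that union between $V_\varepsilon$ and $V_{2\varepsilon}$ via the containments $B_\varepsilon(c_0) \subseteq I_\varepsilon(\mathcal{B}_k) \subseteq B_{2\varepsilon}(v_1)$. Your explicit caveat that the stated product is really an expectation over the layout of $\mathcal{B}_k$ (since $B_\varepsilon^\cap(\mathcal{B}_k)$ is configuration-dependent) is a fair and welcome sharpening of a point the paper leaves implicit.
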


\begin{proof}
    A $(k,\varepsilon)$-master template occurs when $k$ templates can be covered with one (or more) $\varepsilon-$ball and the $N-k$ other templates do not belong to the covering ball(s):
    \begin{align*}
        \mathbb{P}(M_k) & = \mathbb{P}(\mathcal{B}_k \subset B_\varepsilon)
        \\ & \phantom{\times} \times  \mathbb{P} \Big( \forall u \in \mathcal{B} \backslash \mathcal{B}_k, u \notin \bigcup\limits_{v \in B_\varepsilon^\cap(\mathcal{B}_k) }B_\varepsilon(v) \, \Big| \, \mathcal{B}_k \subset B_\varepsilon\Big) \\
                        & = \binom{N}{k}\mathbb{P}\big( (v_1, \dots, v_k) \subset B_\varepsilon\big)                                                                                                                                 \\
                        & \phantom{\times}\times \prod\limits_{j=1}^{N-k}\mathbb{P} \Big(\bigcap\limits_{v \in B_\varepsilon^\cap(\mathcal{B}_k) } \overline{B_\varepsilon(v)} \, \Big| \, \mathcal{B}_k \subset B_\varepsilon \Big) \\
                        & = \binom{N}{k}\mathbb{P}\big( (v_1, \dots, v_k) \subset B_\varepsilon\big)                                                                                                                                 \\
                        & \phantom{\times} \times \frac{1}{2^{n(N-k)}} \left| \bigcap\limits_{v \in B_\varepsilon^\cap(\mathcal{B}_k)} \overline{B_\varepsilon(v)} \right|^{N-k}
    \end{align*}

    Next, in order to obtain lower and upper bounds, remark that, if $\mathcal{B}_k \subset B_\varepsilon$ holds, $B_\varepsilon^\cap(\mathcal{B}_k)$ is not empty, and that
    \begin{equation*}
        \{v\} \subseteq B_\varepsilon^\cap(\mathcal{B}_k)  \subseteq B_{\varepsilon},
    \end{equation*}
    for an unknown specific $v \in \mathbb{Z}^n_2$.
    The lower case occurs if at least two templates $u$ and $v \in \mathcal{B}_k$ are such that $d_\mathcal{H}(u,v) =2\varepsilon$ (they are opposed on an $\varepsilon$-sphere)  and the upper case occurs when all templates of $\mathcal{B}_k$ are equal.
    The results follow by replacing $B_\varepsilon^\cap(\mathcal{B}_k)$ with $\{v\}$ and then with $B_{\varepsilon}$.
\end{proof}

To a lesser extent, the probability of a $(k,\varepsilon)$-near-collision with a particular template is given in Proposition~\ref{prop:proba_near_collision} below.
\begin{proposition}
    \label{prop:proba_near_collision}
    For a given template $v \in \mathcal{B}$, and $\mathcal{B}_{-v} = \mathcal{B} \backslash v$, the probability of a near-collision for $v$ is:
    \begin{align*}
        \mathbb{P}\big(\exists (v_1,\dots, v_k) \subset \mathcal{B}_{-v} \text{ such that } (v_1,\dots, v_k) \subset B_\varepsilon(v)\big) \\ = \binom{N-1}{k} V_\varepsilon^k \left(1- V_\varepsilon \right)^{N-k-1}
    \end{align*}
\end{proposition}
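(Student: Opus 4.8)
The statement asks for the probability that, among the $N-1$ templates in $\mathcal{B}_{-v}$, there exists a size-$k$ subset all of whose members fall inside $B_\varepsilon(v)$. I claim this is really just a statement about the count of templates of $\mathcal{B}_{-v}$ landing in a fixed ball, and the cleanest route is to condition on $v$ and exploit the uniform-and-independent model.

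The plan is as follows. First I would fix the template $v$; conditionally on $v$, the ball $B_\varepsilon(v)$ is a fixed subset of $\mathbb{Z}_2^n$ of measure $V_\varepsilon$ (its cardinality divided by $2^n$). Each of the remaining $N-1$ templates in $\mathcal{B}_{-v}$ is, by the standing assumption recalled at the start of Appendix~\ref{sec:appendix:intermediate_result} ($v \overset{\text{ind}}{\sim}\operatorname{Unif}(\mathbb{F}_2^n)$), an independent uniform draw, so each lands in $B_\varepsilon(v)$ independently with probability exactly $V_\varepsilon$. Hence the number $X$ of templates of $\mathcal{B}_{-v}$ inside $B_\varepsilon(v)$ is $\mathrm{Binomial}(N-1, V_\varepsilon)$, and this distribution does not depend on the value of $v$, so no further averaging over $v$ is needed.

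Second, I would observe that the event ``there exists a $k$-tuple $(v_1,\dots,v_k)\subset\mathcal{B}_{-v}$ with all $v_i\in B_\varepsilon(v)$'' is, as worded, the event that $\mathcal{B}_{-v}$ contains at least one such $k$-subset. But — and this is the key reading of the intended claim — the formula on the right-hand side, $\binom{N-1}{k}V_\varepsilon^k(1-V_\varepsilon)^{N-k-1}$, is precisely $\mathbb{P}(X=k)$, the probability that \emph{exactly} $k$ of the $N-1$ templates lie in the ball. So the proposition is to be understood as computing $\mathbb{P}(X=k)$, i.e.\ the probability that a distinguished $k$-subset is inside the ball and the other $N-1-k$ are outside, summed over the $\binom{N-1}{k}$ choices of which subset. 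I would therefore write $\mathbb{P}(X=k)=\sum_{S}\mathbb{P}(\text{the }k\text{ templates in }S\text{ lie in }B_\varepsilon(v)\text{ and the rest do not})$, where the sum is over the $\binom{N-1}{k}$ size-$k$ index subsets $S$; by independence each summand equals $V_\varepsilon^k(1-V_\varepsilon)^{N-1-k}$, and since the events for distinct $S$ are mutually exclusive the total is $\binom{N-1}{k}V_\varepsilon^k(1-V_\varepsilon)^{N-1-k}$, matching the claim.

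The only real subtlety — the ``obstacle'' — is the mismatch between the existential phrasing in the statement and the exact-count binomial term in the conclusion; the proof needs to make clear that the intended quantity is $\mathbb{P}(X=k)$ (a near-collision of multiplicity exactly $k$ with $v$), not $\mathbb{P}(X\ge k)$. Once that interpretation is fixed, everything reduces to the elementary binomial computation above, using independence and uniformity; no ball-intersection combinatorics (unlike Lemma~\ref{lemma:bounding_intermediate_proba}) is needed here because all $k$ membership events refer to the \emph{same} ball $B_\varepsilon(v)$, so there are no overlaps of distinct balls to control.
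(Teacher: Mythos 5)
Your proposal is correct and follows essentially the same route as the paper: condition on $v$, use independence and uniformity so that each of the $N-1$ remaining templates lands in $B_\varepsilon(v)$ with probability $V_\varepsilon$, and read off the binomial term $\binom{N-1}{k}V_\varepsilon^k(1-V_\varepsilon)^{N-1-k}$. Your observation that the existential phrasing really denotes the event ``exactly $k$ templates in the ball'' (i.e.\ $\mathbb{P}(X=k)$ rather than $\mathbb{P}(X\ge k)$) is a legitimate point that the paper's own proof silently glosses over.
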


\begin{proof}
    Recall that each $u \in \mathcal{B}_{-v}$ are independent and follows a uniform distribution on $\mathbb{Z}^n_2$.
    Denotes $v = (v_1,\dots, v_k)$ a vector in $\mathcal{B}_{-v}$, then $\mathbb{P}\big(\exists v \subset \mathcal{B}_{-v} \text{ such that } v \subset B_\varepsilon(v)\big)$ is equal to
    \begin{align*}
        \binom{N-1}{k} \prod\limits_{v_j \in v} \mathbb{P}(v_j \in B_\varepsilon(v)) \prod\limits_{v_\ell \notin v} \mathbb{P}(v_\ell \notin B_\varepsilon(v))
    \end{align*}
    and the result follows with $\mathbb{P}(v_j \in B_\varepsilon(v)) = V_\varepsilon$.
\end{proof}
Proposition~\ref{prop:proba_near_collision} can be used to compute the possible
values of $\FMR$ and the associated probabilities after the registration of each
client as shown in Section~\ref{template_sec}.

\section{Numerical Results: Proposition~\ref{prop:outsider_attack:negligible_difference_range}}

To support Proposition~\ref{prop:outsider_attack:negligible_difference_range} considering finite parameters, Table~\ref{Tab:equiv_ratio} investigates several settings by computing the ratio $$\frac{\mathbb{P} \left( A(\kappa) \leq a \right)}{\mathbb{P} \left( A(0) \leq a \right)}$$ as well as $\mathbb{P} \left( A(\kappa) \leq a \right)$.
In the following, the union size is maximized (\textit{i.e.}, all the balls are disjoint). Hence, a case in favour of the attacker and against our proposal is investigated.
One of our realistic settings is defined as $(n\ge 128,\varepsilon\leq 0.25n ,N=10^6, \kappa \leq |B_\varepsilon|, a=10^3)$ where $a$ and $\kappa$ are respectively the number of trials and the extra information of a $\kappa$-adaptive attacker.
In this case, we observe a ratio very close to $1$.
It is interesting to note that even with a large increase in the information given to the $\kappa$-adaptive attacker, her probability of success does not increase significantly.
Hence, as shown by Proposition~\ref{prop:outsider_attack:negligible_difference_range}, Proposition~\ref{prop:outsider_attack:negligible_difference}, Theorem~\ref{cor:naive_eq_k_adap:median_and_expect} and Table~\ref{Tab:equiv_ratio}, a $0$-adaptive attacker performs as well as a $\kappa$-adaptive attacker for reasonable parameters.

\end{document}